\newcommand{\Tr}{\operatorname{Tr}}
\newcommand{\erf}{\operatorname{erf}}
\begin{document}

\title{A Generalized Approach to Root-based Attacks against PLWE}
\author{Iv\'an Blanco Chac\'on\inst{1}\orcidID{0000-0002-4666-019X} \and
Ra\'ul Dur\'an D\'{\i}az\inst{2}\orcidID{0000-0001-6217-4768} \and \\
Rodrigo Mart\'{\i}n S\'anchez-Ledesma\inst{3,4}\orcidID{0009-0001-1845-2959}}

\institute{Departamento de F\'{\i}sica y Matem\'aticas, Universidad de Alcal\'a, Spain \\
\email{ivan.blancoc@uah.es} \and
Departamento de Autom\'atica, Universidad de Alcal\'a, Spain \\ \email{raul.duran@uah.es}
\and Departamento de \'Algebra, Universidad Complutense de Madrid, Spain \\
\email{rodrma01@ucm.es} \and
Indra Sistemas de Comunicaciones Seguras, Spain \\
\email{rmsanchezledesma@indra.es}}

\maketitle
\def\figurename{Algorithm}

\begin{abstract}
In the present work we address the robustness of the Polynomial Learning With
Errors problem extending previous results in \cite{BDNB:2023:TBC} and in
\cite{ELOS:2015:PWI}. In particular, we produce two kinds of new distinguishing
attacks: a) we generalize \cite{BDNB:2023:TBC} to the case where the defining
polynomial has a root of degree up to $4$, and b) we widen and refine the most
general attack in \cite{ELOS:2015:PWI} to the non-split case and determine
further dangerous instances previously not detected. Finally, we exploit our
results in order to show vulnerabilities of some cryptographically relevant
polynomials.

\end{abstract}
\keywords{PLWE \and Number Theory \and Algebraic Roots \and Trace-based Cryptanalysis.}

\section{Introduction}
The Ring Learning With Errors problem (RLWE) and the Polynomial Learning With
Errors problem (PLWE) sustain a large number of lattice-based cryptosystems
highly believed to be quantum resistant. Some of the strongest theoretical clues
pointing in this direction are the \emph{worst case-average case} reductions from an
approximate version of the Shortest Vector Problem on ideal lattices to the RLWE
problem, established in \cite{LPR:2013:ILL}, and to the PLWE problem over
power-of-two cyclotomic polynomials, established in \cite{SSTX:2009:EPK}.

Furthermore, for a large family of monic irreducible polynomials
$f(x)\in\mathbb{Z}[x]$, it is now well known that the PLWE problem for the ring
$R_q:=\mathbb{F}_q[x]/(f(x))$ is equivalent to the RLWE problem for the ring
$\mathcal{O}_f/q\mathcal{O}_f$ where $\mathcal{O}_f$ is the ring of integers of
the splitting field of $f(x)$ and $q$ is a suitable rational prime. By
\emph{equivalent} it is understood that a solution to the first problem can
be turned into a solution of the second (and vice versa) by an algorithm of
polynomial complexity in the degree of $f(x)$, causing a noise increase which
is also polynomial in the degree (see \cite[Section 4]{RSW:2018:RPP}).

Cryptographically relevant families for which both problems admit simultaneously
a \emph{worst case\-average case} reduction include the prime power cyclotomic case
(\cite{SSS:2021:CNV}, \cite{DD:2012:RPR}) or, more in general, cyclotomic fields whose
conductor is divisible by a small number of primes (\cite{Blanco:2020:REC}), as well as
the maximal totally real cyclotomic subextensions of conductor $2^rpq$ for
$r\geq 2$ and $p$, $q$ primes or $1$ (\cite{BL:2022:RPE}).

However, despite the aforementioned worst case reductions, several sets of
instances have been proved to be insecure in a number of works. For instance,
in \cite{ELOS:2015:PWI} and \cite{ELOS:2016:RCN}, a set of conditions has been
identified granting an efficient attack against the search version of PLWE
(hence against RLWE whenever they are equivalent). These conditions are of
three types: \emph{a}) the existence of an $\mathbb{F}_q$-root of $f(x)$ of
\emph{small multiplicative order}; \emph{b}) the existence of an $\mathbb{F}_q$-root
of \emph{small canonical residue} in $\{0,\dotsc,q-1\}$; and \emph{c}) the existence of an
$\mathbb{F}_q$-root such that the evaluation modulo $q$ of the error
polynomials over such root falls in the interval $\left[-\frac{q}{4},\frac{q}
{4}\right)$ with probability beyond $1/2$.

Further, in \cite{CLS:2017:ASR} the authors introduce the Chi-Square
decisional attack on PLWE. This test detects non-uniformity of the samples
modulo some of the prime ideals which divide $q$. For Galois number fields,
this is enough to detect non-uniformity modulo $q$ by virtue of the Chinese
Remainder Theorem. Moreover, the authors show that their attack succeeds over
some cyclotomic rings and subrings (although the ring of integers of the
maximal totally real subextension is not listed therein). The attack requires
typically beyond $10000$ samples and may take a computation time up to
thousands of hours.

Still, in \cite{CIV:2016:PWIR}, the attacks of \cite{CLS:2017:ASR}, \cite{ELOS:2015:PWI}
and \cite{ELOS:2016:RCN} are dramatically improved for the family of trinomials
$x^n+ax+b$; indeed, the number of samples is reduced to $7$ and the success rate
is increased to $100\%$. Moreover, the authors justify that the reason why
the attack succeeds is that the error distribution of this family is very skewed
in certain directions, so that a moderate-to-large distortion between
the coordinate and canonical embedding grants the success of the attack for
small-to-moderate error sizes.

Finally, \cite{Peikert:2016:HIR} extends the above geometric justification to
all the vulnerable families in \cite{ELOS:2015:PWI} and shows why the occurrence
of this phenomenon suffices for the attack to work. In short, it shows that a
generalized statistical attack works if there exists an element of small enough
norm in the dual of a prime ideal dividing $q$
(see \cite[Lemma 3]{Peikert:2016:HIR}).

These attacks, however, do not apply to any of the three lattice-based NIST
selections for standardization (ML-KEM \cite{FIPS:2024:203},
ML-DSA \cite{FIPS:2024:204} and FN-DSA) as they rely either on particular
instances of Module Learning With Errors resistant problem, or on NTRU lattices,
both not the target of these attacks. Moreover, these attacks work for the
non-dual version of RLWE and for small values of the Gaussian parameter of the
error distribution, while the worst case reduction of \cite{LPR:2013:ILL} holds
for the dual version, where a very precise lower bound for the Gaussian
parameter is required.

A more worrying  future potential threat to real-world cryptosystems, even if
theoretical, is the line of research which addresses the hardness of
the ideal lattice Shortest Vector Problem itself. In this spirit, the works
\cite{CDW:2017:SSC} and \cite{DPW:2019:SVF} exploit the Stickelberger class relations on
cyclotomic fields to reduce the complexity of this problem to
$\tilde{\mathcal{O}}(n^{1/2})$ and provide an explicit computation of the
asymptotic constant involved.

The present work started as an extension of \cite{BBDN:2023:CPB} and
\cite{BDNB:2023:TBC}, which exploit the existence of a root $\alpha$ of $f(x)$
over finite non-trivial $\mathbb{F}_q$-extensions with small order trace. These
works are in turn generalizations of \cite{CLS:2017:ASR} and \cite{ELOS:2016:RCN},
which assume the existence of small order roots in $\mathbb{F}_q$.

In both \cite{BBDN:2023:CPB} and \cite{BDNB:2023:TBC}, the key tool is to use
the composition of the evaluation-at-$\alpha$ with the trace map, in order to
project $R_q$ to a subring $R_{q,0}$ of large enough dimension over which the
PLWE problem can be solved in expected polynomial time with overwhelming
probability of success. In \cite{BBDN:2023:CPB}, the root is assumed to be
quadratic and the attack on $R_{q,0}$-samples is pulled back to $R_q$, allowing
for an attack on the usual PLWE, also with overwhelming probability of success
in expected polynomial time. In \cite{BDNB:2023:TBC}, an attack on
$R_{q,0}$-samples is obtained for cyclotomic polynomials of prime-power
conductor but it remains still unclear how this attack can be pulled back to
the original ring.

Hence, our original motivation was to extend these attacks to the case where
there exists a root of higher order and where the attack can be pulled back to
the whole ring. After succeeding in this task, we soon also observed that our
trace-based approach allows to extend all the attacks of \cite{CLS:2017:ASR} and
\cite{ELOS:2016:RCN} (not just those based on the root orders) and, moreover,
that it is still possible to refine one of the original algorithms and
extend it to the higher-order root case in order to identify additional
vulnerable instances of the PLWE problem.

Our work is organized as follows:

After some preliminaries in Section~\ref{s.2},
Section~\ref{s.3} reintroduces the framework of \cite{ELOS:2015:PWI},
considering both truncation and no truncation on the range of the Gaussian
distributions to appear in the PLWE problems. We believe this additional
consideration helps understand the true nature of the attacks and might prove
to be helpful when applied to real PLWE instances.

This section also introduces a revision of certain probability values assumed on
\cite{ELOS:2015:PWI,ELOS:2016:RCN} thus providing not only a better understanding
of the applicability of the attacks, but also revealing further instances that
yield to such attacks.

In Section~\ref{s.4} we extend the trace-based attack given in \cite{BBDN:2023:CPB}
for the quadratic root case to the higher-order root case
(Proposition~\ref{cvgen}), dealing with the case when the root is of small
order, thus giving rise to the \emph{small set attack}.

In Proposition~\ref{p.37} we prove that the expected time of our
algorithm is polynomial on the order of the root. We have tested our algorithm
with some explicit instances obtaining, for one of them, a success probability of
$0.629$ with $350$ test samples, and up to $0.993$ with $500$ samples
(see Subsection~\ref{sss.421}, instance $(2)$).

Finally, in Section~\ref{s.5} we apply the ideas of Section~\ref{s.4} to the
case when the evaluation of the error polynomials at the root gives values within
the interval $\left[-\frac{q}{4},\frac{q} {4}\right)$, which we term the
\emph{small values attack}.

\section{Preliminaries}\label{s.2}
Let $q$ be a prime and $f(x)\in \mathbb{Z}[x]$ a monic irreducible polynomial
over $\mathbb{Z}[x]$ of degree $N$. Denote by $R_q$ the ring
$\mathbb{F}_q[x]/(f(x))$. For the PLWE distribution, we assume a Gaussian
distribution of mean $0$ and a certain variance $\sigma^2$.

We introduce here the following auxiliary lemmas, that will be useful throughout this work:
\begin{lemma}\label{SumProb}
    Given a discrete uniform random variable $U$ in $\mathbb{F}_q$, where $q > 2$ is prime, and a
    discrete Gaussian random variable $E$ of mean 0 and variance
    $\sigma^2$ in $\mathbb{F}_q$, we have that $\mathrm{P}(U + E \mod q \in
[-\frac{q}{4}, \frac{q}{4})) = \frac{1}{2} \pm \frac{1}{2q}$.
\end{lemma}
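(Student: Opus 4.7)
The plan is to exploit the fact that adding a uniform random variable on $\mathbb{F}_q$ to any independent $\mathbb{F}_q$-valued random variable still yields a uniform distribution on $\mathbb{F}_q$. Once this is established, the probability in question reduces to a pure counting problem on $\mathbb{F}_q$ which only depends on $q \bmod 4$.

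First, I would condition on $E=e$: since $U$ is uniform on $\mathbb{F}_q$ and $U,E$ are independent, the conditional distribution of $U+e \bmod q$ is again uniform on $\mathbb{F}_q$. Marginalizing over $e$, the random variable $V:=U+E\bmod q$ is uniformly distributed on $\mathbb{F}_q$ irrespectively of $\sigma^2$. Hence
\[
\mathrm{P}\Bigl(V \in \bigl[-\tfrac{q}{4},\tfrac{q}{4}\bigr)\Bigr)
=\frac{1}{q}\,\#\Bigl\{\,n\in\mathbb{Z}\ :\ -\tfrac{q-1}{2}\le n\le \tfrac{q-1}{2},\ -\tfrac{q}{4}\le n<\tfrac{q}{4}\,\Bigr\},
\]
where we identify $\mathbb{F}_q$ with the centred representatives $\{-(q-1)/2,\dotsc,(q-1)/2\}$, which is the natural choice given the interval considered.

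The next step is a straightforward case distinction on $q \bmod 4$ (recall $q$ is an odd prime, so the only cases are $q\equiv 1$ and $q\equiv 3\pmod 4$). If $q=4k+1$, then $q/4=k+1/4$, so the integers in $[-q/4,q/4)$ are $\{-k,\dotsc,k\}$, giving $2k+1=(q+1)/2$ admissible residues, and the probability equals $\frac{1}{2}+\frac{1}{2q}$. If $q=4k+3$, then $q/4=k+3/4$, and the integers in $[-q/4,q/4)$ are still $\{-k,\dotsc,k\}$ (since $-k-1<-q/4$ and $k+1>q/4$), giving $2k+1=(q-1)/2$ admissible residues and a probability of $\frac{1}{2}-\frac{1}{2q}$. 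In either case the result is $\frac{1}{2}\pm\frac{1}{2q}$.

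There is no real obstacle here; the only delicate point is conceptual, namely to remark that the distribution of $E$ plays no role whatsoever, so the lemma is really a statement about the uniform distribution on $\mathbb{F}_q$ against a half-open interval of length $q/2$. This symmetry argument is what keeps the error in $\pm 1/(2q)$ rather than something depending on $\sigma$, and it is worth stating explicitly because it will be the basis for comparing signal-plus-noise with pure uniform samples in the distinguishing attacks that follow.
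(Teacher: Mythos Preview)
Your proof is correct and follows essentially the same approach as the paper: condition on the value of $E$, observe that the resulting shifted uniform still hits the target interval with a probability depending only on $q\bmod 4$, and then do the two-case count for $q\equiv 1$ and $q\equiv 3\pmod 4$. Your presentation is slightly more streamlined in that you package the conditioning step as ``$U+E$ is uniform on $\mathbb{F}_q$'' before counting, whereas the paper computes the conditional probability for each fixed $k$ and then sums, but the underlying argument is identical.
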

\begin{proof}
Expanding over the possible values of $e\mod q$, we have
    \begin{align*}
        & \mathrm{P}(U + E \, \mod \, q \in [-\frac{q}{4}, \frac{q}{4})) = \\
        & \sum_{k = -\frac{q}{2}}^{\frac{q}{2}} \mathrm{P}(U + E \,\mod\, q \in
[-\frac{q}{4}, \frac{q}{4}) \,|\, E \,\mod\, q = k)\cdot \mathrm{P}(E
\,\mod\, q = k) = \\
        & \sum_{k = -\frac{q}{2}}^{\frac{q}{2}} \mathrm{P}(U \,\mod\,q \in
[-\frac{q}{4} - k, \frac{q}{4} -k)) \cdot \mathrm{P}(E \,\mod\, q = k)
    \end{align*}
    Now, it is required to analyze, given $k \in [-\frac{q}{2},
\frac{q}{2})$, $\mathrm{P}(U \,\mod\,q \in
[-\frac{q}{4} - k, \frac{q}{4} -k))$. This analysis must be done in terms of the value of $q$, specifically regarding its congruence modulo $4$:

If $q \equiv 1$ mod $4$, then $\exists p$ such that $q = 4p+1$:
    \begin{align*}
        & \mathrm{P}(U \,\mod\,q \in [-\frac{q}{4} - k, \frac{q}{4} -k)) =
\mathrm{P}(U \,\mod\,q \in [-p-k -\frac{1}{4}, p-k +\frac{1}{4}))
    \end{align*}
which means that $U$ can take any value in the (discrete) interval $[-p-k,
p-k]$, which represents $2p+1$ values in $\mathbb{F}_q$. That means that
the probability is equal to $$\frac{2p+1}{q} = \frac{\frac{q-1}{2} + 1}{q} = \frac{1}{2} + \frac{1}{2q}$$

On the other hand, if $q \equiv 3$ mod $4$, then $\exists p$ such that $q
= 4p+3$ and the same argument yields that the probability is equal to
$$\frac{2p+1}{q} = \frac{\frac{q-3}{2} + 1}{q} = \frac{1}{2} - \frac{1}
{2q}$$

Therefore, we have
    \begin{align*}
        &\sum_{k = \frac{q}{2}}^{\frac{q}{2}} \mathrm{P}(U \,\mod\,q \in
[-\frac{q}{4} - k, \frac{q}{4} -k)) \cdot \mathrm{P}(E \,\mod\, q = k) = \\
        & (\frac{1}{2} \pm \frac{1}{2q}) \cdot \sum_{k = -\frac{q}{2}}^{\frac{q}{2}}
\mathrm{P}(E
\,\mod\, q = k) = \frac{1}{2} \pm \frac{1}{2q}
    \end{align*}
    as the sum over the possible values of $E$, taken modulo $q$, will be in
the discrete interval $[-\frac{q}{2}, \frac{q}{2})$ and therefore that sum is
equal to 1.
\qed
\end{proof}
The proof of the above lemma yields the following corollary, which will also be employed throughout the work:
\begin{corollary}\label{CorSumProb}
    Given a discrete uniform random variable $U$ in $\mathbb{F}_q$, where $q > 2$ is prime, we have that
$\mathrm{P}(U \mod q \in [-\frac{q}{4}, \frac{q}{4}))
    = \frac{1}{2} \pm \frac{1}{2q}$.
\end{corollary}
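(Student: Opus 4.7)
The plan is to observe that this corollary is essentially an immediate byproduct of the computation already carried out in the proof of Lemma~\ref{SumProb}. In that proof, after conditioning on the value $k$ of $E \bmod q$, the key sub-expression
\[
\mathrm{P}\bigl(U \bmod q \in [-\tfrac{q}{4} - k, \tfrac{q}{4} - k)\bigr)
\]
was evaluated and shown to equal $\tfrac{1}{2} + \tfrac{1}{2q}$ when $q \equiv 1 \pmod 4$ and $\tfrac{1}{2} - \tfrac{1}{2q}$ when $q \equiv 3 \pmod 4$, independently of $k$. The corollary corresponds simply to the specialization $k = 0$.

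First, I would note that since $q > 2$ is prime, $q$ is odd, so exactly one of the congruences $q \equiv 1 \pmod 4$ or $q \equiv 3 \pmod 4$ holds. Second, I would invoke the translation invariance of the uniform distribution on $\mathbb{F}_q$: for any fixed integer shift, the number of residues falling in a half-open interval of length $q/2$ depends only on the length and not on the centering, so counting the integers in $[-q/4, q/4)$ is the same as counting those in any translate.

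Then the core of the proof reduces to a direct cardinality count. Writing $q = 4p + 1$ in the first case, the integers $m$ with $-q/4 \le m < q/4$ are exactly those with $-p \le m \le p$, giving $2p + 1 = (q-1)/2 + 1$ values, and hence probability $(2p+1)/q = 1/2 + 1/(2q)$. Writing $q = 4p + 3$ in the second case, the integers satisfying $-q/4 \le m < q/4$ are those with $-p \le m \le p$ (the endpoints $\pm(p+1/2)$ being non-integers, and the boundary behavior of the half-open interval eliminates one value), giving $2p + 1 = (q-3)/2 + 1$ values, and hence probability $1/2 - 1/(2q)$.

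There is no real obstacle here; the only point requiring mild care is the boundary of the half-open interval $[-q/4, q/4)$ and the parity bookkeeping in the two congruence classes modulo $4$, but this is exactly the computation already done inside Lemma~\ref{SumProb}. The corollary therefore follows immediately by pointing to that argument.
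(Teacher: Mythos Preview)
Your proposal is correct and takes essentially the same approach as the paper, which simply states that the corollary is yielded by the proof of Lemma~\ref{SumProb}; you make this explicit by specializing that computation to $k=0$ and recapitulating the cardinality count in each congruence class modulo $4$. One inconsequential slip: in the case $q=4p+3$ the interval endpoints are $\pm(p+3/4)$ rather than $\pm(p+1/2)$, but this does not affect the integer count of $2p+1$ or the final probability.
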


\begin{lemma}\label{SumUnifGen}
If $X_1, X_2, \dotsc, X_n$ are independent uniform distributions over
$\mathbb{F}_{q^k}$ then, for each $\lambda_1$, $\lambda_2$, \ldots, $\lambda_n \in
\mathbb{F}_{q^k}$
not all of them zero, the variable $\sum_{i=1}^n \lambda_i \cdot X_i$ is
uniform
\end{lemma}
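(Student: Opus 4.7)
The plan is a standard convolution / translation-invariance argument exploiting the fact that $(\mathbb{F}_{q^k},+)$ is a finite abelian group and the uniform distribution is its unique translation-invariant probability measure. Since the hypothesis is symmetric in the indices, I would begin by reindexing so that $\lambda_1 \neq 0$; this is the step that uses the assumption that the $\lambda_i$ are not all zero.

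Next I would observe two simple facts: (i) multiplication by the nonzero scalar $\lambda_1$ is a bijection of $\mathbb{F}_{q^k}$, so $\lambda_1 X_1$ is itself uniform on $\mathbb{F}_{q^k}$; and (ii) by independence of $X_1,\dots,X_n$, the random variable $\lambda_1 X_1$ is independent of the tail $S := \sum_{i=2}^{n}\lambda_i X_i$, whose distribution I do not need to understand explicitly.

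The main computation is then a one-line conditioning. For any $y \in \mathbb{F}_{q^k}$,
\begin{align*}
\mathrm{P}\!\left(\sum_{i=1}^{n}\lambda_i X_i = y\right)
&= \sum_{s \in \mathbb{F}_{q^k}} \mathrm{P}(S = s)\,\mathrm{P}(\lambda_1 X_1 = y - s) \\
&= \sum_{s \in \mathbb{F}_{q^k}} \mathrm{P}(S = s)\cdot \frac{1}{q^k} = \frac{1}{q^k},
\end{align*}
using the independence in the first equality, the uniformity of $\lambda_1 X_1$ in the second, and $\sum_s \mathrm{P}(S=s)=1$ in the last.

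I do not anticipate a genuine obstacle: the only subtlety is conceptual, namely that one does not need the $X_i$ to be identically distributed (only independent and uniform) and that $S$ can have an arbitrary distribution — the translation invariance of the uniform law on $\mathbb{F}_{q^k}$ absorbs it completely. If one prefers, the same argument can be phrased via characters: for every nontrivial additive character $\chi$ of $\mathbb{F}_{q^k}$, $\mathbb{E}[\chi(\lambda_1 X_1)] = 0$ because $\lambda_1 \neq 0$, and then independence gives $\mathbb{E}[\chi(\sum_i \lambda_i X_i)] = 0$, which by Fourier inversion forces the distribution to be uniform.
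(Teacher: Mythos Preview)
Your proof is correct and follows essentially the same approach as the paper: both use that multiplication by a nonzero scalar is a bijection of $\mathbb{F}_{q^k}$ (so $\lambda_1 X_1$ is uniform) and then a convolution/conditioning computation to conclude. The only cosmetic difference is that the paper proves the two-term case $X_1+X_2$ uniform and leaves the general case to an implicit induction, whereas you isolate a single nonzero term and absorb the entire tail $S$ in one step without needing it to be uniform; your version is slightly cleaner but the underlying idea is identical.
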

\begin{proof}
    The fact that the random variable $\lambda \cdot X$ is uniform in
    $\mathbb{F}_{q^k}$, if $\lambda \neq 0$ and $X$ is uniform in
    $\mathbb{F}_{q^k}$ is a direct consequence of working in a field.

    For $X_1, X_2$ independent uniforms in $\mathbb{F}_{q^k}$, we have
    \begin{align*}
        P[X_1 + X_2 = c] = \sum_{x_2 \in \mathbb{F}_{q^k}} P[X_1 = c - x_2 \wedge X_2 = x_2] = \frac{1}{q^k} \sum_{x_2 \in \mathbb{F}_{q^k}} P[X_2 = x_2] = \frac{1}{q^k}
    \end{align*}
    When put together, we arrive to the desired result.
\qed
\end{proof}

\section{Original \texorpdfstring{$\mathbb{F}_q$}{Lg}-based root attacks,
Revisited}\label{s.3}
In this section, we lay out the general setting that we will employ throughout
our work, in the hope that it will be both more intuitively and more closely
related to real-life deployments of PLWE-based schemes. This setting assumes
that $q$ is a prime and $f(x)\in \mathbb{Z}[x]$ a monic irreducible
polynomial over $\mathbb{Z}[x]$ of degree $N$. Denote by $R_q$ the ring
$\mathbb{F}_q[x]/(f(x))$.

We will treat separately the cases where the PLWE Gaussian distribution of mean
$0$ and variance $\sigma^2$ is either assumed to be truncated to width $2\sigma$
or not truncated, contrary to \cite{ELOS:2015:PWI,ELOS:2016:RCN}, which only consider
the truncated case. The rationale for this modification is that in real-life
implementations of PLWE-based schemes the distributions are often not
truncated. Moreover, we will see that this modification will impact the success
probability of the attacks.

To this end, let $\mathcal{G}_{\sigma}$ be a centered Gaussian random variable
$\mathcal{G}_{\sigma}$ of standard deviation $\sigma$ truncated at $2\sigma$ or
not. Let $p_0$ denote, from now on, the probability that $\mathcal{G}_{\sigma}$
takes values on the interval $[-2\sigma,2\sigma]$. If $\mathcal{G}_{\sigma}$ is
truncated to $2\sigma$ then $p_0=1$ and otherwise $p_0\approx 0.954500$ (with
error less than $10^{-6}$).

\subsection{Small Set Attack, Revisited}\label{ss.31}
The authors of \cite{ELOS:2015:PWI} give an attack on the decisional version of
PLWE for $R_q$
whenever a root $\alpha\in\mathbb{F}_q$ of $f(x)$ with (small) order $r$ in
$\mathbb{F}_q^\ast$ exists such that the cardinal of the set of all error
polynomials evaluated at $\alpha$, which we denote by $\Sigma$, is small
enough for the set to be swept through.

Algorithm~\ref{SSAAlg_Fq}, as proposed in \cite{ELOS:2015:PWI}, is given a
collection $S$ of $M$ samples, which come either from a uniform
distribution, $U$, or from a PLWE distribution, $\mathcal{G}_\sigma$,
assuming that $\mathrm{P}(D=\mathcal{G}_\sigma) = \mathrm{P}(D=U) = 1/2$,
i.e., both distributions are equiprobable. The Algorithm decides which of
both distributions the samples in collection $S$ were drawn from, either from
PLWE (returning a guess, $g$, for the evaluation of the secret $s(\alpha)$),
or from the uniform one. Sometimes the algorithm is not able to decide, then
returning \textbf{NOT ENOUGH SAMPLES}.

\begin{figure}[ht]
\centering
\begin{tabular}[c]{ll}
\hline
\textbf{Input:} & A collection of samples $S=\{(a_i(x),b_i(x))\}
_{i=1}^M\subseteq R_q^2$ \\
 & A look-up table $\Sigma$ of all possible values for $e(\alpha)$ \\
\textbf{Output:} & A guess $g\in\mathbb{F}_q$ for $s(\alpha)$,\\
 & or \textbf{NOT PLWE},\\
 & or \textbf{NOT ENOUGH SAMPLES}\\
\hline
\end{tabular}

\medskip

\begin{itemize}
\item $G:=\emptyset$
\item \texttt{\emph{for}} $g\in\mathbb{F}_q$ \texttt{\emph{do}}
	\begin{itemize}
	\item \texttt{\emph{for}} $(a_i(x),b_i(x))\in S$  \texttt{\emph{do}}
		\begin{itemize}
		\item \texttt{\emph{if}} $b_i(\alpha)-a_i(\alpha)g\notin \Sigma$
            \texttt{\emph{then next}} $g$
		\end{itemize}
	\item $G:=G\cup\{g\}$
	\end{itemize}
\item \texttt{\emph{if}} $G=\emptyset$ \texttt{\emph{then return}} \textbf{NOT
PLWE}
\item \texttt{\emph{if}} $G=\{g\}$ \texttt{\emph{then return}} $g$
\item \texttt{\emph{if}} $|G|>1$ \texttt{\emph{then return}} \textbf{NOT ENOUGH
SAMPLES}
\end{itemize}
\hrule
\caption{Algorithm for Small Set Attack over $\mathbb{F}_q$}
\label{SSAAlg_Fq}
\end{figure}

The reason for the algorithm's applicability is that if the order $r$ of the
root $\alpha$ is low and the discrete Gaussian distribution is truncated, then the
cardinal of the set $\Sigma$ of plausible values for $e(\alpha)$ as $e(x)$ runs
over the error distribution is small enough with respect to $|\mathbb{F}_q|$.
Remark that
\begin{equation*}
 e(\alpha)= \sum_{j=0}^{r-1}e^\ast_j\alpha^j\mbox { where }e^\ast_j=\sum_{i=0}^
{\left\lfloor\frac{N}{r}\right\rfloor-1}e_{ir+j},
\end{equation*}
where $\lfloor\phantom{x}\rfloor$ stands for the floor function.

Notice that each term $e^\ast_j$ follows a Gaussian distribution of mean $0$
and variance $\overline{\sigma}^2 = \left\lfloor\frac{N}{r}\right\rfloor \cdot
\sigma^2$.  Therefore, we can ensure that for each $j\in\{0,...,r-1\}$, we have
that $|e^\ast_j|\leq 2\overline{\sigma}$ with probability at least $p_0$.

As explained above, let us define
$$
\Sigma:=\left\{\sum_{j=0}^{r-1}x_j\alpha^j:\,\,x_j\in[-2\overline{\sigma},2
\overline{\sigma}]\cap\mathbb{Z}\right\},
$$
namely, the set of all possibilities for $e(\alpha)$, such that the absolute
value of each block $e^\ast_j$ is bounded by $2\overline{\sigma}$.

To begin with, we will consider the case where Gaussian distributions are
truncated to width $2\sigma$, where $\sigma$ is their respective standard
deviation. Then, we have the following

\begin{proposition}\label{SSA_PostProb_Trunc_Fq}
Let $\Sigma$, $M$ and $r$ be defined as above, and assume $|\Sigma|<q$. Then
\begin{enumerate}
\item
If Algorithm~\ref{SSAAlg_Fq} returns \textbf{PLWE} or \textbf{NOT ENOUGH SAMPLES},
then the samples are \textbf{PLWE} with probability at least $1-q\left(\frac
{|\Sigma|}{q}\right)^M$.
\item
If Algorithm~\ref{SSAAlg_Fq} returns \textbf{NOT PLWE}, then the samples are
uniform with probability $1$.
\end{enumerate}
\end{proposition}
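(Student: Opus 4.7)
My plan is to handle the two parts separately. Part (2) will follow directly from the deterministic fit of every PLWE sample into $\Sigma$ under the truncation hypothesis, while part (1) will be a one-step Bayes' rule calculation whose only nontrivial ingredient is a union bound over candidate guesses.

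For part (2), I would argue as follows: since the Gaussian is truncated at $2\sigma$, each block $e^{\ast}_j$ satisfies $|e^{\ast}_j|\leq 2\overline{\sigma}$ deterministically (with $\overline{\sigma}^2=\lfloor N/r\rfloor \sigma^2$), so every error evaluation $e_i(\alpha)$ lies in $\Sigma$ by the very definition of $\Sigma$. Hence, if the samples are genuinely PLWE with secret $s$, the guess $g=s(\alpha)$ satisfies $b_i(\alpha)-a_i(\alpha)g = e_i(\alpha)\in\Sigma$ for every $i$, so $s(\alpha)\in G$ and in particular $G\neq\emptyset$. The algorithm therefore cannot output \textbf{NOT PLWE} on a PLWE input; taking the contrapositive, under the equiprobable prior $\mathrm{P}(\text{PLWE})=\mathrm{P}(U)=1/2$, the output \textbf{NOT PLWE} forces the samples to be uniform with probability $1$.

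For part (1), I identify the event ``returns a guess or \textbf{NOT ENOUGH SAMPLES}'' with $\{G\neq\emptyset\}$. From part (2), $\mathrm{P}(G\neq\emptyset\mid\text{PLWE})=1$. The key estimate is to bound $\mathrm{P}(G\neq\emptyset\mid U)$: for any fixed $g$, since evaluation at $\alpha\in\mathbb{F}_q$ is a surjective $\mathbb{F}_q$-algebra homomorphism $R_q\to\mathbb{F}_q$, a uniform sample $b_i\in R_q$ evaluates to a uniform $b_i(\alpha)\in\mathbb{F}_q$ independent of $a_i$, so $b_i(\alpha)-a_i(\alpha)g$ is uniform in $\mathbb{F}_q$ (by the shift argument underlying Lemma~\ref{SumUnifGen}). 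Independence across the $M$ samples gives $\mathrm{P}(g\in G\mid U)=(|\Sigma|/q)^M$, and a union bound over the $q$ candidate guesses yields $\mathrm{P}(G\neq\emptyset\mid U)\leq q(|\Sigma|/q)^M$. Bayes' rule then gives
\[
\mathrm{P}(\text{PLWE}\mid G\neq\emptyset)
=\frac{1}{1+\mathrm{P}(G\neq\emptyset\mid U)}
\geq\frac{1}{1+q(|\Sigma|/q)^M}
\geq 1-q\!\left(\frac{|\Sigma|}{q}\right)^{\!M},
\]
where the final inequality is the elementary $1/(1+x)\geq 1-x$ for $x\geq 0$.

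The only genuinely nontrivial step is the uniformity claim for $b_i(\alpha)-a_i(\alpha)g$ under uniform $(a_i,b_i)$, which I expect to follow cleanly from the surjectivity of evaluation at a root in $\mathbb{F}_q$. Everything else is a routine union bound and Bayes inversion. The hypothesis $|\Sigma|<q$ is not used structurally; it only guarantees that the resulting bound is nontrivial (and, for large $M$, exponentially close to $1$).
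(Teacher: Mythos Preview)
Your proposal is correct and follows essentially the same route as the paper: the same identification of the event with $\{G\neq\emptyset\}$, the same use of truncation to get $\mathrm{P}(G\neq\emptyset\mid \mathcal{G}_\sigma)=1$, the same union bound over $g\in\mathbb{F}_q$ combined with uniformity of $b_i(\alpha)-a_i(\alpha)g$ to obtain $\mathrm{P}(G\neq\emptyset\mid U)\le q(|\Sigma|/q)^M$, and the same Bayes inversion with $1/(1+x)\ge 1-x$. The only cosmetic differences are that you argue part~(2) by direct contrapositive rather than writing out Bayes explicitly, and you justify the uniformity of $b_i(\alpha)-a_i(\alpha)g$ via surjectivity of the evaluation map (shifting a uniform by a constant) whereas the paper invokes Lemma~\ref{SumUnifGen}; both are valid and equivalent here.
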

\begin{proof}
Let us prove both items in turn.
\begin{enumerate}
\item
Denote with $E$ the event ``Algorithm~\ref{SSAAlg_Fq} returns \textbf{PLWE} or
\textbf{NOT ENOUGH SAMPLES}''. Then, applying Bayes' theorem,
\begin{eqnarray}\label{eq.1}
\mathrm{P}(D=\mathcal{G}_\sigma | E) = 1 - \mathrm{P}(D=U|E) & = & 1 -
\frac{\mathrm{P}(E|D=U)\cdot\mathrm{P}(D=U)}
{\mathrm{P}(E|D=U)\cdot\mathrm{P}(D=U)+
\mathrm{P}(E|D=\mathcal{G}_\sigma)\cdot\mathrm{P}(D=\mathcal{G}_\sigma)}
\nonumber \\
& = & 1 - \frac{\mathrm{P}(E|D=U)}{\mathrm{P}(E|D=U) +
\mathrm{P}(E|D=\mathcal{G}_\sigma)}.
\end{eqnarray}
Now remark that $\mathrm{P}(E|D=\mathcal{G}_\sigma)$ is the probability that
at least a $g\in\mathbb{F}_q$ exists such that $b_i(\alpha)-a_i(\alpha)g\in
\Sigma$, for all $i=1,\dotsc,M$, i.e., for all samples in $S$. Since we are
supposing that the distribution is $\mathcal{G}_\sigma$ and Gaussian
distributions to be truncated by $2\sigma$, at least one $g$ is sure to exist,
namely, $s(\alpha)$, the evaluation of the secret. Then we have that
$\mathrm{P}(E|D=\mathcal{G}_\sigma)=1$.

As for $\mathrm{P}(E|D=U)$, let $U_g$ be the event that $b_i(\alpha)-a_i
(\alpha)g\in\Sigma$, for all $i=1,\dotsc,M$, i.e., for all samples in $S$. Then
\[
\mathrm{P}(E|D=U) = \mathrm{P}\left(\bigcup_{g=0}^{q-1} U_g\right) \leq
\sum_{g=0}^{q-1}\mathrm{P}(U_g),
\]
where the inequality comes from the fact that we cannot guarantee that the
events $U_g$ be independent for different values of $g$.

Since we are now in the case where $a_i$ and $b_i$ are uniformly randomly
selected, so $b_i-a_ig$ is also uniformly random according to
Lemma~\ref{SumUnifGen}; this means that $\mathrm{P}(b_i-a_ig\in\Sigma) =
|\Sigma|/q$ for any $i$ and any $g$, so
\[
\mathrm{P}(U_g) = \prod_{i=1}^M \frac{|\Sigma|}{q} = \left(\frac{|\Sigma|}{q}
\right)^M,
\]
and thus we have
\[
\mathrm{P}(E|D=U) \leq q\left(\frac{|\Sigma|}{q}\right)^M.
\]
Plugging all the results above into Equation~\eqref{eq.1}, we eventually obtain
\[
\mathrm{P}(D=\mathcal{G}_\sigma | E) = 1 - \frac{\mathrm{P}(E|D=U)}
{\mathrm{P}(E|D=U)+1} \geq 1 - \mathrm{P}(E|D=U) \geq 1 -
q\left(\frac{|\Sigma|}{q} \right)^M.
\]
\item
Denote now with $E^\prime$ the event ``Algorithm~\ref{SSAAlg_Fq} returns \textbf{NOT
PLWE}''. Applying again Bayes' theorem,
\begin{eqnarray*}
\mathrm{P}(D=U|E^\prime) & = & \frac{\mathrm{P}(E^\prime|D=U)\cdot\mathrm{P}(D=U)}
{\mathrm{P}(E^\prime|D=U)\cdot\mathrm{P}(D=U)+
\mathrm{P}(E^\prime|D=\mathcal{G}_\sigma)\cdot\mathrm{P}(D=\mathcal{G}_\sigma)}
\nonumber \\
& = & \frac{\mathrm{P}(E^\prime|D=U)}{\mathrm{P}(E^\prime|D=U) +
\mathrm{P}(E^\prime|D=\mathcal{G}_\sigma)}.
\end{eqnarray*}
Now in order to compute $\mathrm{P}(E^\prime|D=\mathcal{G}_\sigma)$ remark
that the event $E^\prime$ is complementary to the event $E$, defined in the
previous item. Hence
\[
\mathrm{P}(E^\prime|D=\mathcal{G}_\sigma) = 1 - \mathrm{P}(E|D=\mathcal{G}
_\sigma) = 1 - 1 = 0.
\]
Consequently,
\[
\mathrm{P}(D=U|E^\prime) = 1,
\]
as required.
\end{enumerate}
\qed
\end{proof}

We deal now with the case where the Gaussian distribution is not truncated.
Then we have the following
\begin{proposition}\label{SSA_PostProb_NotTrunc_Fq}
Let $\Sigma$, $M$ and $r$ be defined as above, and assume $|\Sigma|<qp_0^r$. Then
\begin{enumerate}
\item
If Algorithm~\ref{SSAAlg_Fq} returns \textbf{PLWE} or \textbf{NOT ENOUGH SAMPLES},
then the samples are \textbf{PLWE} with probability at least $1-q\left(\frac
{|\Sigma|}{qp_0^r}\right)^M$.
\item
If Algorithm~\ref{SSAAlg_Fq} returns \textbf{NOT PLWE}, then the samples are
uniform with probability at least $1/2$ for large enough $M$.
\end{enumerate}
\end{proposition}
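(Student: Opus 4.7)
The plan is to follow the Bayesian template of Proposition~\ref{SSA_PostProb_Trunc_Fq}, keeping the bound $\mathrm{P}(E\mid D=U)\leq q(|\Sigma|/q)^M$ intact (its derivation rests only on the uniformity of $b_i(\alpha)-a_i(\alpha)g$ granted by Lemma~\ref{SumUnifGen} and a union bound over $g\in\mathbb{F}_q$, neither of which depends on truncation). The only thing that changes when passing to the non-truncated regime is that the witness $g=s(\alpha)$ no longer forces $e_i(\alpha)\in\Sigma$ for every sample, so we must replace the equality $\mathrm{P}(E\mid D=\mathcal{G}_\sigma)=1$ of the truncated case by a quantitative lower bound in terms of $p_0$ and $r$.

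For item (1), write $e_i(\alpha)=\sum_{j=0}^{r-1}e^{\ast}_{i,j}\alpha^{j}$ as before. Each $e^{\ast}_{i,j}$ is a centered Gaussian of variance $\overline{\sigma}^{\,2}$, and the $r$ blocks inside a single sample are independent because they depend on disjoint coefficients of $e_i$; hence $\mathrm{P}(e_i(\alpha)\in\Sigma)\geq p_0^{r}$, and independence across samples yields $\mathrm{P}(E\mid D=\mathcal{G}_\sigma)\geq p_0^{rM}$ by using $g=s(\alpha)$ as witness. Plugging this together with the $U$-side bound into Bayes' formula and using the elementary inequality $A/(A+B)\leq A/B$ gives
\[
\mathrm{P}(D=\mathcal{G}_\sigma\mid E)\geq 1-\frac{q(|\Sigma|/q)^{M}}{p_0^{rM}}=1-q\left(\frac{|\Sigma|}{qp_0^{r}}\right)^{M},
\]
and the standing hypothesis $|\Sigma|<qp_0^{r}$ is exactly what makes this bound nontrivial as $M$ grows.

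For item (2), the essential novelty is that $\mathrm{P}(E'\mid D=\mathcal{G}_\sigma)\leq 1-p_0^{rM}$ is strictly positive, so the clean conclusion $\mathrm{P}(D=U\mid E')=1$ of the truncated case is no longer available. Writing $\alpha:=\mathrm{P}(E\mid D=U)$ and $\beta:=\mathrm{P}(E\mid D=\mathcal{G}_\sigma)$, Bayes' theorem gives $\mathrm{P}(D=U\mid E')=(1-\alpha)/(2-\alpha-\beta)$, which is at least $1/2$ precisely when $\beta\geq\alpha$. Substituting the bounds $\alpha\leq q(|\Sigma|/q)^{M}$ and $\beta\geq p_0^{rM}$, it suffices to have $(qp_0^{r}/|\Sigma|)^{M}\geq q$, i.e.\ $M\geq \log q/\log(qp_0^{r}/|\Sigma|)$, where the right-hand side is a finite positive number exactly by the hypothesis $|\Sigma|<qp_0^{r}$. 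The main conceptual obstacle is that this $1/2$ cannot in general be improved without further work: both $\mathrm{P}(E'\mid D=U)$ and $\mathrm{P}(E'\mid D=\mathcal{G}_\sigma)$ tend to $1$ as $M\to\infty$, so the posterior ratio can only be guaranteed to approach, and stay on the correct side of, $1/2$ rather than $1$.
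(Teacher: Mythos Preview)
Your proof is correct and follows essentially the same approach as the paper: both items use Bayes' formula with equal priors, the same union-bound estimate $\mathrm{P}(E\mid D=U)\leq q(|\Sigma|/q)^M$ carried over from the truncated case, and the lower bound $\mathrm{P}(E\mid D=\mathcal{G}_\sigma)\geq p_0^{rM}$ obtained from the witness $g=s(\alpha)$; your sufficient condition $(qp_0^{r}/|\Sigma|)^{M}\geq q$ for item~(2) is algebraically equivalent to the paper's $q^{1/M}|\Sigma|/q\leq p_0^{r}$. One cosmetic remark: in item~(2) you reuse the symbol $\alpha$ for $\mathrm{P}(E\mid D=U)$ while $\alpha$ already denotes the root of $f$, so a different letter would be preferable.
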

\begin{proof}
Let us prove both items in turn.
\begin{enumerate}
\item
As before, denote with $E$ the event ``Algorithm~\ref{SSAAlg_Fq} returns
\textbf{PLWE} or \textbf{NOT ENOUGH SAMPLES}''. Applying again Bayes' theorem,
\[
\mathrm{P}(D=\mathcal{G}_\sigma | E) = 1 - \mathrm{P}(D=U|E) =
1 - \frac{\mathrm{P}(E|D=U)}{\mathrm{P}(E|D=U) +
\mathrm{P}(E|D=\mathcal{G}_\sigma)}
\geq 1 - \frac{\mathrm{P}(E|D=U)}{\mathrm{P}(E|D=\mathcal{G}_\sigma)}.
\]
Let $U_g$ be again the event that $b_i(\alpha)-a_i (\alpha)g\in\Sigma$, for
all $i=1,\dotsc,M$, i.e., for all samples in $S$. Then
\[
\mathrm{P}(E|D=\mathcal{G}_\sigma) = \mathrm{P}\left(\bigcup_{g=0}^{q-1}
U_g\right) \geq \mathrm{P}(U_{g^\ast}),
\]
where $g^\ast = s(\alpha)$. Observe that $b_i(\alpha)-a_i (\alpha)g^\ast =
e_i(\alpha)$. Since we are now dealing with the not truncated case, we have
\[
\mathrm{P}(U_{g^\ast}) = \prod_{i=1}^M \mathrm{P}(e_i(\alpha)\in\Sigma) \geq p_0^{Mr}.
\]
Moreover, $\mathrm{P}(E|D=U)$ is the same as in the truncated case, since the
uniform distribution is not affected by the truncation. Combining results,
\[
\mathrm{P}(D=\mathcal{G}_\sigma | E) \geq 1 - \frac{\mathrm{P}(E|D=U)}
{\mathrm{P}(E|D=\mathcal{G}_\sigma)} \geq 1 - \frac{\mathrm{P}(E|D=U)}{p_0^{Mr}}
\geq 1 - \frac{q\left(\frac{|\Sigma|}{q}\right)^M}{p_0^{Mr}} = 1 -
q\left(\frac{|\Sigma|}{qp_0^r}\right)^M.
\]
\item
Denote again with $E^\prime$ the event ``Algorithm~\ref{SSAAlg_Fq} returns \textbf{NOT
PLWE}''. Applying Bayes' theorem,
\[
\mathrm{P}(D=U|E^\prime) = \frac{\mathrm{P}(E^\prime|D=U)}{\mathrm{P}(E^\prime|D=U) +
\mathrm{P}(E^\prime|D=\mathcal{G}_\sigma)}.
\]
Remark as before that the event $E^\prime$ is complementary to the event $E$,
defined in the previous item. Hence
\[
\mathrm{P}(E^\prime|D=\mathcal{G}_\sigma) = 1 - \mathrm{P}(E|D=\mathcal{G}
_\sigma) \leq 1 - p_0^{Mr}.
\]
Consequently,
\[
\mathrm{P}(D=U|E^\prime) \geq \frac{\mathrm{P}(E^\prime|D=U)}{\mathrm{P}
(E^\prime|D=U) + 1-p_0^{Mr}},
\]
which is greater than $1/2$ if and only if $\mathrm{P}
(E^\prime|D=U)\geq 1-p_0^{rM}$. Since
\[
\mathrm{P}(E^\prime|D=U) \geq 1-q\left(\frac{|\Sigma|}{q}\right)^M,
\]
for the conclusion to hold it is enough to take $M$ such that $q^{\frac{1}{M}}
\frac{|\Sigma|}{q}\leq p_0^r$.
\end{enumerate}
\qed
\end{proof}

\subsubsection{Success probabilities.}
Observe that from the discussion above, we can derive the success probability
of Algorithm~\ref{SSAAlg_Fq}. For the truncated case, we have the following
\begin{proposition}\label{SSA_Success_Trunc_Fq}
With the same notations and assumptions as in Proposition~\ref{SSA_PostProb_Trunc_Fq},
\begin{enumerate}
\item
If the samples are PLWE, Algorithm~\ref{SSAAlg_Fq} guesses correctly with
probability $1$.
\item
If the samples are uniform, Algorithm~\ref{SSAAlg_Fq} guesses correctly with
probability at least $1-q\left(\frac{|\Sigma|}{q}\right)^M$.
\end{enumerate}
\end{proposition}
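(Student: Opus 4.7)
The plan is to recognize this proposition as the conditional-probability reading of the same event-level bounds already computed inside the proof of Proposition~\ref{SSA_PostProb_Trunc_Fq}: rather than a posterior on $D$ given the algorithm's output, we now want $\mathrm{P}(\text{correct output}\mid D)$ for each value of $D$. I would therefore prove the two items in exact parallel with the two items there, reusing the same events $U_g$ and the same bounds, and only be careful to spell out what ``guesses correctly'' means in each case.

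For item (1), I would unfold the definition of a PLWE sample: $b_i(x) = a_i(x)s(x) + e_i(x)$ in $R_q$, so that $b_i(\alpha) - a_i(\alpha)s(\alpha) = e_i(\alpha)$. Because the Gaussian is truncated at $2\sigma$, every coefficient $e_{ir+j}$ lies in $[-2\sigma,2\sigma]$, so each block $e_j^* = \sum_i e_{ir+j}$ lies in $[-2\overline{\sigma},2\overline{\sigma}]$ with probability $1$; hence $e_i(\alpha) = \sum_{j=0}^{r-1} e_j^*\alpha^j \in \Sigma$ by the definition of $\Sigma$. Consequently $g = s(\alpha)$ passes the inner test for every sample, so $s(\alpha) \in G$ and $G \neq \emptyset$ almost surely. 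The algorithm therefore never returns \textbf{NOT PLWE}; it returns either $s(\alpha)$ itself (when $G = \{s(\alpha)\}$) or \textbf{NOT ENOUGH SAMPLES} (when $|G|>1$), both consistent with the underlying distribution being PLWE, giving success probability $1$.

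For item (2), success means outputting \textbf{NOT PLWE}, i.e.\ $G = \emptyset$. Here I would lift verbatim the uniform-case computation from the proof of Proposition~\ref{SSA_PostProb_Trunc_Fq}: for each fixed $g \in \mathbb{F}_q$, the random variable $b_i(\alpha) - a_i(\alpha)g$ is uniform on $\mathbb{F}_q$ by Lemma~\ref{SumUnifGen}, so the event $U_g$ that all $M$ samples fall in $\Sigma$ has probability $(|\Sigma|/q)^M$, and a union bound over $g$ yields $\mathrm{P}(G \neq \emptyset) \leq q(|\Sigma|/q)^M$. Complementing gives $\mathrm{P}(G = \emptyset) \geq 1 - q(|\Sigma|/q)^M$ as claimed.

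The only point requiring a line of comment, and essentially the only ``obstacle'', is the semantic one of what constitutes a correct guess: I would adopt the binary reading already implicit in Proposition~\ref{SSA_PostProb_Trunc_Fq}, namely that in the PLWE case any output other than \textbf{NOT PLWE} counts as correct (the guess $g$ produced when $|G|=1$ is then forced to equal $s(\alpha)$), and in the uniform case only \textbf{NOT PLWE} counts as correct. With this convention, there is no further work to do beyond repackaging the two bounds above, so the result is a direct corollary of the machinery already in place.
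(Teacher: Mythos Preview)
Your approach is essentially identical to the paper's: both items are recognized as the quantities $\mathrm{P}(E\mid D=\mathcal{G}_\sigma)=1$ and $\mathrm{P}(E'\mid D=U)\geq 1-q(|\Sigma|/q)^M$ already computed inside the proof of Proposition~\ref{SSA_PostProb_Trunc_Fq}, and your explicit discussion of what ``guesses correctly'' means is a welcome clarification the paper leaves implicit. One small caveat: your sentence ``every coefficient $e_{ir+j}$ lies in $[-2\sigma,2\sigma]$, so each block $e_j^*$ lies in $[-2\overline{\sigma},2\overline{\sigma}]$'' does not follow as a deduction (a sum of $\lfloor N/r\rfloor$ terms bounded by $2\sigma$ is only bounded by $2\lfloor N/r\rfloor\sigma$, not $2\sqrt{\lfloor N/r\rfloor}\,\sigma$); the paper's convention is rather that in the truncated setting \emph{each} Gaussian, including $e_j^*$, is modeled as truncated at twice its own standard deviation, so you should invoke that convention directly rather than attempt to derive it from the coefficient-level truncation.
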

\begin{proof}
The first item is equivalent to computing $\mathrm{P}(E|D=\mathcal{G}
_\sigma)$, which is $1$, according to the proof in
Proposition~\ref{SSA_PostProb_Trunc_Fq}. Analogously, the second item corresponds to
computing $\mathrm{P}(E^\prime|D=U)$, which in the same place is shown to be
greater or equal to $1-q\left(\frac{|\Sigma|}{q}\right)^M$.
\qed
\end{proof}

Now for the not truncated case, we have the following
\begin{proposition}\label{SSA_Success_NotTrunc_Fq}
With the same notations and assumptions as in Proposition~\ref{SSA_PostProb_NotTrunc_Fq},
\begin{enumerate}
\item
If the samples are PLWE, Algorithm~\ref{SSAAlg_Fq} guesses correctly with
probability at least $p_0^{Mr}$.
\item
If the samples are uniform, Algorithm~\ref{SSAAlg_Fq} guesses correctly with
probability at least $1-q\left(\frac{|\Sigma|}{q}\right)^M$.
\end{enumerate}
\end{proposition}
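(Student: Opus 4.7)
The plan is to reduce both items directly to quantities already bounded in the proof of Proposition~\ref{SSA_PostProb_NotTrunc_Fq}, mirroring exactly how Proposition~\ref{SSA_Success_Trunc_Fq} was deduced from Proposition~\ref{SSA_PostProb_Trunc_Fq}. The key observation is that ``Algorithm~\ref{SSAAlg_Fq} guesses correctly'' splits cleanly into two events: on PLWE samples it means that the true value $g^\ast = s(\alpha)$ survives the inner loop (so the algorithm returns either $g^\ast$ or \textbf{NOT ENOUGH SAMPLES}, but never \textbf{NOT PLWE} nor an incorrect singleton), whereas on uniform samples it means that no $g \in \mathbb{F}_q$ survives, i.e., the output is \textbf{NOT PLWE}.

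For item 1 I would identify the success event with $U_{g^\ast}$ from the earlier proof, namely $\{b_i(\alpha) - a_i(\alpha)g^\ast \in \Sigma \text{ for all } i = 1,\dotsc,M\}$, and then invoke the bound $\mathrm{P}(U_{g^\ast}\mid D = \mathcal{G}_\sigma) \geq p_0^{Mr}$ proved inside Proposition~\ref{SSA_PostProb_NotTrunc_Fq}. That bound rests on the independence of the $M$ error polynomials and on the decomposition of each $e_i(\alpha)$ as $\sum_{j=0}^{r-1} e^\ast_j \alpha^j$ into $r$ independent Gaussian blocks of variance $\overline{\sigma}^2$, each of which falls in $[-2\overline{\sigma},2\overline{\sigma}]$ with probability at least $p_0$; nothing new needs to be computed.

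For item 2, the correct output on uniform samples is \textbf{NOT PLWE}, which is the complementary event $E^\prime$. Since the uniform distribution on $\mathbb{F}_q$ is insensitive to whether the Gaussian is truncated, the union-bound estimate $\mathrm{P}(E \mid D = U) \leq q\left(\tfrac{|\Sigma|}{q}\right)^M$ used in Proposition~\ref{SSA_PostProb_Trunc_Fq} carries over verbatim; the only ingredient is Lemma~\ref{SumUnifGen}, which guarantees that for each fixed $g$ the quantity $b_i(\alpha) - a_i(\alpha)g$ is uniform on $\mathbb{F}_q$, so that $\mathrm{P}(U_g) = (|\Sigma|/q)^M$ and one sums over the $q$ candidates $g$. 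Taking complements gives the claimed lower bound $1 - q(|\Sigma|/q)^M$.

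The main (and really the only) subtlety is bookkeeping: one must verify that on PLWE samples the event ``guesses correctly'' truly coincides with $U_{g^\ast}$ and not something strictly larger. In particular, even when $|G| > 1$ the algorithm does not output a wrong guess — it merely declines to commit by returning \textbf{NOT ENOUGH SAMPLES} — so under either natural reading of ``correct'' the stated lower bound is valid, and no additional estimate beyond those recalled above is required.
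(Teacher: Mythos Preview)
Your proposal is correct and matches the paper's own argument: both items are reduced to the conditional probabilities $\mathrm{P}(E\mid D=\mathcal{G}_\sigma)\geq p_0^{Mr}$ and $\mathrm{P}(E'\mid D=U)\geq 1-q(|\Sigma|/q)^M$ already established inside Proposition~\ref{SSA_PostProb_NotTrunc_Fq} (and, for the uniform case, carried over unchanged from the truncated analysis). Your extra care in distinguishing the event $U_{g^\ast}$ from the larger event $E$ is more than the paper spells out, but since $U_{g^\ast}\subseteq E$ and the bound $p_0^{Mr}$ comes from $U_{g^\ast}$ anyway, the conclusion is identical.
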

\begin{proof}
The first item is equivalent to computing $\mathrm{P}(E|D=\mathcal{G}
_\sigma)$, which now is at least $p_0^{Mr}$, according to the proof in
Proposition~\ref{SSA_PostProb_NotTrunc_Fq}. The second item is not affected by the
truncation so the bound is the same as for the truncated case, namely,
greater or equal to $1-q\left(\frac{|\Sigma|}{q}\right)^M$.
\qed
\end{proof}

\subsubsection{An improvement for a higher number of samples.}\label{HNS_SSA}

Remark that in the attack described above, one of the success probabilities
actually decreases with the number of samples. Although shocking at first, this
is a natural consequence of the non-truncated behavior, as with each sample we
are asking for another tentative error value to maintain the truncated behavior,
which is less likely to happen.

While the attack itself could be negatively influenced by a higher
number of samples (in case the samples were originally PLWE
distributed), this does not mean that additional information (i.e. more
samples) cannot be employed to assist in the decision.

One could always find an
acceptable value of $M := M_0$ (such that both success probabilities
remain as high as possible) and run the attack multiple times for
different combinations of $M_0$ samples, and make a decision based upon
some established rule. This is precisely what the remainder of this
section will provide.

The idea is to take Algorithm~\ref{SSAAlg_Fq} as a sub-process of our final
Decision Algorithm. Given $M$ samples, an appropriate number of samples
$M_0 < M$ will be chosen to be used within the sub-process. Then, it
will be run $c := \left\lfloor M/M_0\right\rfloor$ times, collecting the output
results from each round. Informally, if Algorithm~\ref{SSAAlg_Fq} outputs
anything different from \textbf{NOT PLWE} a number of times greater than
expected, the Decision Algorithm concludes that the samples were indeed PLWE.
Otherwise, the Decision Algorithm outputs that the samples were drawn from the
uniform distribution.

Thus, our first task is to decide on an acceptable threshold for the number of
expected executions of Algorithm~\ref{SSAAlg_Fq} as a sub-process in order to
output \textbf{PLWE}. In Proposition~\ref{SSA_Success_NotTrunc_Fq}, the success
probability of guessing the PLWE distribution of the samples is
characterized by:
\[
\mathrm{P}_{\mathcal{G}_{\sigma}}(M_0) := \mathrm{P}(E| D = \mathcal{G}_{\sigma}) \geq p_0^{M_0 r}.
\]
Therefore, given $M$ samples, the sub-process must be executed $c$ times, thus
providing an expected number of outputs, $T$, different from \textbf{NOT PLWE}
given by
\[
T := c \cdot \mathrm{P}_{\mathcal{G}_{\sigma}}(M_0) \geq c \cdot p_0^{M_0 r}
\]
in case the samples were actually PLWE.

With these values, we define the \emph{Extended Small Set Attack} in
Algorithm~\ref{ExtendedSSAAlg_Fq}.

\begin{figure}[H]
\centering
\begin{tabular}[c]{ll}
\hline
\textbf{Input:} & A collection of samples $S=\{(a_i(x),b_i(x))\}
                                           _{i=1}^M \in R_q^2$ \\
& A choice $M_0$ for the number of samples of the sub-process \\
\textbf{Output:} & A guess for the distribution of the samples, either \\
                 & \textbf{PLWE} or \textbf{UNIFORM}\\
\hline
\end{tabular}

\medskip

\begin{itemize}
\item $T:=\lceil c \cdot p_0^{M_0r} \rceil$
\item $C:=0$
\item \texttt{\emph{for}} $j\in \{0, ..., c - 1\}$ \texttt{\emph{do}}
	\begin{itemize}
	\item $result := \mathrm{SmallSetAttack}\left(S_{j} := \{(a_i(x),b_i(x))\}
                         _{i=j\cdot M_0 + 1}^{(j+1)\cdot M_0}\right)$
		\begin{itemize}
		\item \texttt{\emph{if}} $result$ $\neq$ \textbf{NOT PLWE}
            \texttt{\emph{then}}
			\begin{itemize}
			\item $C := C + 1$
			\end{itemize}
		\end{itemize}
	\end{itemize}
\item \texttt{\emph{if}} $C < T$ \texttt{\emph{then return}} \textbf{UNIFORM}
\item \texttt{\emph{else return}} \textbf{PLWE}
\end{itemize}
\hrule
\caption{Algorithm for Extended Small Set Attack}
\label{ExtendedSSAAlg_Fq}
\end{figure}

Under this attack, we have the following success prediction probabilities:
\begin{itemize}
\item
Case $D = \mathcal{G}_{\sigma}$: It is required to analyze the probability
$\mathrm{P}(C \geq T | D = \mathcal{G}_{\sigma})$. By definition, this can be
seen as $1 - \mathrm{P}(C < T | D = \mathcal{G}_{\sigma})$, which in turn can
be interpreted in terms of a cumulative binomial distribution. Therefore, we
have that
\[
\mathrm{P}(C \geq T | D = \mathcal{G}_{\sigma}) := 1 - F(T-1, c,
\mathrm{P}_{\mathcal{G}_{\sigma}}(M_0)) \geq 1 - F(T-1, c, p_0^{M_0 r})
\]
since the cumulative binomial distribution enjoys decreasing monotonicity on the
\emph{success probability} parameter.

\item
Case $D = U$: Now $\mathrm{P}(C < T | D = U)$ is yet again given
in terms of the cumulative binomial distribution, i.e., the probability of a
binomial distribution of parameters $n = c$ and $p = \mathrm{P}_{U}(M_0)$, to
output any of the values $\{0,1,2,\dotsc,T-1\}$. Therefore, we have:
\[
\mathrm{P}(C < T | D = U) := F(T-1, c, \mathrm{P}_{U} (M_0)) \geq F(T-1, c,
1 - \left(|\Sigma|/q\right)^{M_0})
\]
since the probability of success in case the samples were uniform (i.e.
$\mathrm{P}_{U} (M_0)$), in terms of the proof of Proposition
\ref{SSA_Success_NotTrunc_Fq}, verifies
\[
\mathrm{P}_{U} (M_0) = 1 - \mathrm{P}\left(\bigcup_{g=0}^{q-1} U_g\right)
\geq 1 - \mathrm{P}\left(U_g\right) = 1 - \left(|\Sigma|/q\right)^M, \forall
g \in \mathbb{F}_q
\]
and therefore can be
upper-bounded by the probability that a particular $g_0 \in
\mathbb{F}_q$ verifies the condition for $M = M_0$ samples, which is $1 - \left(|\Sigma|/q\right)^{M_0}$.
\end{itemize}

With the above characterization of the success probabilities of each
distribution, we prove that Algorithm~\ref{ExtendedSSAAlg_Fq} is successful:
\begin{proposition}\label{Extended_SSA_Success_Fq}
Let $C, T, M_0$ be as specified in Algorithm~\ref{ExtendedSSAAlg_Fq}. Then, if
$1 - \left(|\Sigma|/q\right)^{M_0} < p_0^{M_0 r}$, Algorithm~\ref{ExtendedSSAAlg_Fq}
is successful.
\end{proposition}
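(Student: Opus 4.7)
The strategy is to model the counter $C$ in Algorithm~\ref{ExtendedSSAAlg_Fq} as a binomial random variable. Since the $c$ invocations of Algorithm~\ref{SSAAlg_Fq} operate on disjoint blocks $S_j$ of $M_0$ fresh samples, the per-round events ``$\mathrm{SmallSetAttack}(S_j)\neq\textbf{NOT PLWE}$'' are i.i.d.\ Bernoulli trials of common parameter $p_D:=\mathrm{P}(E\mid D)$, so $C\sim\mathrm{Bin}(c,p_D)$. From (the proof of) Proposition~\ref{SSA_PostProb_NotTrunc_Fq} one reads off the lower bound $p_{\mathcal{G}_\sigma}\geq p_0^{M_0 r}$, while the union-bound step in the proof of Proposition~\ref{SSA_PostProb_Trunc_Fq}---which uses only the uniformity of $b_i-a_i g$ from Lemma~\ref{SumUnifGen} and is therefore insensitive to truncation---combined with the standing hypothesis $1-(|\Sigma|/q)^{M_0}<p_0^{M_0 r}$ yields $p_U<p_0^{M_0 r}$.

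I would then exploit the monotonicity of the binomial cumulative distribution function $F(\cdot;c,p)$ as a decreasing function of the success parameter $p$. On the PLWE side,
\[
\mathrm{P}(C\geq T\mid D=\mathcal{G}_\sigma)=1-F(T-1;c,p_{\mathcal{G}_\sigma})\geq 1-F(T-1;c,p_0^{M_0 r}),
\]
and since $T=\lceil c\,p_0^{M_0 r}\rceil$ coincides, up to rounding, with the mean of $\mathrm{Bin}(c,p_0^{M_0 r})$, the right-hand side is at least the probability of a binomial exceeding its (near-)median, which is bounded below by a constant strictly greater than $1/2$. On the uniform side, the same monotonicity together with $p_U<p_0^{M_0 r}$ gives
\[
\mathrm{P}(C<T\mid D=U)=F(T-1;c,p_U)\geq F(T-1;c,p_0^{M_0 r}),
\]
and the same median argument places this above $1/2$. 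Averaging over the two equiprobable hypotheses (see the assumption laid out before Algorithm~\ref{SSAAlg_Fq}), Algorithm~\ref{ExtendedSSAAlg_Fq} then outperforms random guessing, which is the success property claimed.

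The subtle point, and the step I expect to be the main obstacle, is the sharpness of the median-versus-mean comparison: setting $T$ exactly at the PLWE mean pins both tails close to $1/2$, so the bare monotonicity of $F$ alone is tight. To convert this into a robust distinguishing advantage one should combine it either with an explicit Chernoff or Hoeffding tail bound---invoking the strict gap $p_0^{M_0 r}-p_U>0$ guaranteed by the hypothesis---or with an asymptotic amplification as $c=\lfloor M/M_0\rfloor$ grows. The larger $c$ is, the more the $\Theta(c)$ separation between the two binomial means dominates the $O(\sqrt{c})$ standard deviations, driving both success probabilities towards $1$; it is precisely this amplification that justifies the design choice of splitting $M$ into $c$ sub-batches rather than using all $M$ samples in a single call.
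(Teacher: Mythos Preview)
Your proposal overcomplicates the argument and, in doing so, introduces a gap. The paper's proof is a two-line computation: it takes the two lower bounds displayed just before the proposition,
\[
\mathrm{P}(C\geq T\mid D=\mathcal{G}_\sigma)\;\geq\;1-F\bigl(T-1,c,p_0^{M_0 r}\bigr),
\qquad
\mathrm{P}(C<T\mid D=U)\;\geq\;F\bigl(T-1,c,\,1-(|\Sigma|/q)^{M_0}\bigr),
\]
adds them, and observes that the average exceeds $\tfrac12$ exactly when
$F\bigl(T-1,c,\,1-(|\Sigma|/q)^{M_0}\bigr)>F\bigl(T-1,c,\,p_0^{M_0 r}\bigr)$,
which by the decreasing monotonicity of $p\mapsto F(T-1,c,p)$ is precisely the
hypothesis $1-(|\Sigma|/q)^{M_0}<p_0^{M_0 r}$. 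No median comparison, no Chernoff
bound, no amplification in $c$ is needed: the strict gap comes from evaluating
the CDF at \emph{two different} success parameters, not from pushing both
bounds to the common value $p_0^{M_0 r}$.

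By collapsing both binomial parameters to $p_0^{M_0 r}$ you discard exactly the
slack that makes the paper's argument immediate; your two tail probabilities
then sum to $1$ on the nose, which is why you are forced into the delicate
median discussion and the appeal to large $c$. Separately, your derivation of
$p_U<p_0^{M_0 r}$ from ``the union-bound step'' is not valid as stated: the
union bound of Proposition~\ref{SSA_PostProb_Trunc_Fq} gives
$p_U\le q\,(|\Sigma|/q)^{M_0}$, while the hypothesis only says
$(|\Sigma|/q)^{M_0}>1-p_0^{M_0 r}$; combining these yields an upper bound on
$p_U$ that may well exceed $p_0^{M_0 r}$. The quantity the paper actually feeds
into $F$ on the uniform side is $1-(|\Sigma|/q)^{M_0}$, and keeping it there
(rather than replacing it by $p_0^{M_0 r}$) is exactly what makes the single
monotonicity step conclusive.
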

\begin{proof}
We need to prove that the algorithm verifies that:
\begin{multline*}
\frac{1}{2} \cdot (\mathrm{P}(C < T | D = U) + \mathrm{P}(C
\geq T | D = \mathcal{G}_{\sigma})) \geq \\
\frac{1}{2} + \frac{1}{2}\cdot (F(T-1, c,
1 - \left(|\Sigma|/q\right)^{M_0}) -F(T-1, c,
p_0^{M_0 r})) > \frac{1}{2}.
\end{multline*}

Employing the decreasing monotonicity of the cumulative binomial distribution,
this is true as long as
\[
1 - \left(|\Sigma|/q\right)^{M_0} < p_0^{M_0 r}.
\]
\qed
\end{proof}

This result provides a characterization of how $M_0$ can be chosen in
order for the attack to be applicable, although further refinements
could be employed.

It is important to note that this extended algorithm does not make direct use
of features of $\mathbb{F}_q$, other than those related to the algorithm being
employed as sub-process. Therefore, this technique will also be applicable in
the generalization to finite field extensions of Section~\ref{s.4}.

\subsection{Small Values Attack, Revisited}
The authors of \cite{ELOS:2015:PWI} give yet another attack
on the decisional version of PLWE for $R_q$ whenever a root
$\alpha\in\mathbb{F}_q$ of $f(x)$ exists such that the evaluations
of the error polynomials at $\alpha$ give ``small'' values.

Algorithm~\ref{SVAAlg_Fq}, as proposed in \cite{ELOS:2015:PWI}, is given a
collection $S$ of $M$ samples, which come either from a uniform
distribution, $U$, or from a PLWE distribution, $\mathcal{G}_\sigma$,
assuming that $\mathrm{P}(D=\mathcal{G}_\sigma) = \mathrm{P}(D=U) = 1/2$,
i.e., both distributions are equiprobable. The Algorithm decides which one of
both distributions the samples in collection $S$ were drawn from, either from
PLWE (returning a guess, $g$, for the evaluation of the secret $s(\alpha)$),
or from the uniform one. Sometimes the algorithm is not able to decide, then
returning \textbf{NOT ENOUGH SAMPLES}.

\begin{figure}[ht]
\centering
\begin{tabular}[c]{ll}
\hline
\textbf{Input:} & A collection of samples $S=\{(a_i(x),b_i(x))\}
_{i=1}^M \in R_q^2$ \\
\textbf{Output:} & A guess $g\in\mathbb{F}_q$ for $s(\alpha)$,\\
 & or \textbf{NOT PLWE},\\
 & or \textbf{NOT ENOUGH SAMPLES}\\
\hline
\end{tabular}

\medskip

\begin{itemize}
\item \texttt{\emph{set}} $G:=\emptyset$
\item \texttt{\emph{for}} $g\in \mathbb{F}_q$ \texttt{\emph{do}}
	\begin{itemize}
	\item \texttt{\emph{for}} $(a_i(x),b_i(x))\in S$  \texttt{\emph{do}}
		\begin{itemize}
		\item \texttt{\emph{if}} $b_i(\alpha)-a_i(\alpha)g \notin [-\frac{q}{4}, \frac{q} {4})$ \texttt{\emph{then}} \texttt{\textbf{next}} $g$
		\end{itemize}
        \item \texttt{\emph{set}} $G:=G\cup\{g\}$
	\end{itemize}
    \item \texttt{\emph{if}} $G=\emptyset$ \texttt{\emph{then return}} \textbf{NOT
    PLWE}
    \item \texttt{\emph{if}} $G=\{g\}$ \texttt{\emph{then return}} $g$
    \item \texttt{\emph{if}} $|G|>1$ \texttt{\emph{then return}} \textbf{NOT ENOUGH
    SAMPLES}
\end{itemize}
\hrule
\caption{Algorithm for Small Values Attack over $\mathbb{F}_q$}
\label{SVAAlg_Fq}
\end{figure}
The reason of the algorithm's applicability is that the evaluation of the error
polynomials at $\alpha$ yields new Gaussian distributions of mean $0$ and a
certain variance $\overline{\sigma}^2$. Moreover, the smaller the order of the
root, the smaller the variance of the image Gaussian distribution.

\subsubsection{Case \texorpdfstring{$\alpha = \pm 1$.}{Lg}}\label{case1}
In this case, the value $e_i(\alpha) = \sum_{j=0}^{n-1} e_{ij}\cdot \alpha^j$
is a sum of $n$ Gaussians of mean $0$ and variance $\sigma^2$, so it is a
Gaussian of mean $0$ and variance $n \cdot \sigma^2$. Therefore
$\mathcal{G}_{\overline{\sigma}}$ is a Gaussian with $ \overline{\sigma}^2 = n
\cdot \sigma^2$.

\subsubsection{Case \texorpdfstring{$\alpha \neq \pm 1$ and has small order
$r$ modulo $q$.}{Lg}}\label{case2}
Here, the small multiplicative order of the root $\alpha$ is used to group the error
value $e(\alpha)$ into $r$ packs of $\frac{n}{r}$ values. Each pack
forms an independent Gaussian of mean $0$ and variance $\frac{n}{r} \cdot
\sigma^2$, and thus the overall distribution is a weighted sum of such Gaussians,
which generates a Gaussian of mean $0$ and variance $\overline{\sigma}^2 =
\sum_{i=0}^{r-1} \frac{n}{r} \cdot \sigma^2 \cdot \alpha^{2\cdot i}$.

\subsubsection{Case \texorpdfstring{$\alpha \neq \pm 1$ and does not have
small order modulo $q$.}{Lg}}\label{case3}
In the most general case, we just have a weighted sum of $n$ Gaussians,
which is itself a Gaussian of mean $0$ and variance $\overline{\sigma}^2 =
\sum_{i=0}^{n-1} \sigma^2 \cdot \alpha^{2\cdot i}$.

Under this attack the key factor is that, if the bound that reigns over the
Gaussian image does not surpass the value $\frac{q}{4}$, then the probability
that for any sample $i$ the evaluation of the error polynomial lies inside
$[-\frac{q}{4}, \frac{q}{4}$) is $1$ for $g = s(\alpha)\bmod{q}$, under the
supposition that the resulting Gaussian distribution is truncated.

Below we will deal in turn with both truncated and not truncated Gaussian
distributions. To begin with, we deal with the first one by stating the
following

\begin{proposition}\label{SVA_PostProb_Trunc_Fq}
Assume $2\overline{\sigma} \leq \frac{q}{4}$ and let $M$ be the number of input
samples. Then
\begin{enumerate}
\item
If Algorithm~\ref{SVAAlg_Fq} returns \textbf{PLWE} or \textbf{NOT ENOUGH SAMPLES},
then the samples are \textbf{PLWE} with probability at least
$1-q\left(\frac{1}{2}\pm\frac{1}{2q}\right)^M$.
\item
If Algorithm~\ref{SVAAlg_Fq} returns \textbf{NOT PLWE}, then the samples are
uniform with probability $1$.
\end{enumerate}
\end{proposition}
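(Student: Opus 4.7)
The plan is to mirror the structure of Proposition~\ref{SSA_PostProb_Trunc_Fq} almost verbatim, replacing the role of the look-up set $\Sigma$ by the interval $[-q/4,q/4)$. Only two ingredients really change: the probability that a PLWE sample is guaranteed to pass the membership test, and the per-sample ``hit'' probability in the uniform case.

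For item (1), I would set $E$ to be the event ``Algorithm~\ref{SVAAlg_Fq} returns \textbf{PLWE} or \textbf{NOT ENOUGH SAMPLES}'' and apply Bayes' theorem in the form of Equation~\eqref{eq.1}, so that the task reduces to proving $\mathrm{P}(E\mid D=\mathcal{G}_\sigma)=1$ and $\mathrm{P}(E\mid D=U)\leq q\bigl(\frac{1}{2}\pm\frac{1}{2q}\bigr)^M$. The first equality uses the hypothesis $2\overline{\sigma}\leq q/4$: the truncated image distribution $\mathcal{G}_{\overline{\sigma}}$ is supported in $[-2\overline{\sigma},2\overline{\sigma}]\subseteq[-q/4,q/4)$, so taking $g=s(\alpha)$ makes $b_i(\alpha)-a_i(\alpha)g=e_i(\alpha)$ fall in the target interval for every $i=1,\dotsc,M$, ensuring that $G$ is nonempty.

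For the second bound, I would union-bound over the $q$ candidate values of $g$ and, for each fixed $g$, invoke Lemma~\ref{SumUnifGen}: viewing the coefficients of $a_i,b_i\in R_q$ as i.i.d.\ uniform variables in $\mathbb{F}_q$ and the powers of $\alpha$ as the (not all zero) scalars, the random variable $b_i(\alpha)-a_i(\alpha)g$ is itself uniform on $\mathbb{F}_q$. Corollary~\ref{CorSumProb} then supplies the per-sample probability $\frac{1}{2}\pm\frac{1}{2q}$ of landing inside $[-q/4,q/4)$, and independence of the $M$ samples yields $\mathrm{P}(U_g)=\bigl(\frac{1}{2}\pm\frac{1}{2q}\bigr)^M$. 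Substituting these two quantities into the Bayesian expression reproduces the posterior bound claimed in the proposition, exactly as in Proposition~\ref{SSA_PostProb_Trunc_Fq}.

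Item (2) is immediate: letting $E'$ denote the event ``returns \textbf{NOT PLWE}'', this is the complement of $E$ among the algorithm's outputs, so $\mathrm{P}(E'\mid D=\mathcal{G}_\sigma)=1-\mathrm{P}(E\mid D=\mathcal{G}_\sigma)=0$, and Bayes' theorem collapses to $\mathrm{P}(D=U\mid E')=1$. The only mildly delicate step in the whole argument is the uniformity of $b_i(\alpha)-a_i(\alpha)g$ on $\mathbb{F}_q$ under a uniform choice of $a_i,b_i$, which is precisely where Lemma~\ref{SumUnifGen} is essential; everything else is a faithful transcription of the truncated small-set-attack analysis.
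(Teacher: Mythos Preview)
Your proposal is correct and matches the paper's proof essentially line for line: the same Bayesian decomposition, the same use of $g=s(\alpha)$ together with $[-2\overline{\sigma},2\overline{\sigma}]\subseteq[-q/4,q/4)$ to get $\mathrm{P}(E\mid D=\mathcal{G}_\sigma)=1$, the same union bound over $g$ combined with Lemma~\ref{SumUnifGen} for uniformity, and the same complementarity argument for item~(2). The only cosmetic difference is that you cite Corollary~\ref{CorSumProb} for the per-sample probability $\tfrac12\pm\tfrac1{2q}$, whereas the paper cites Lemma~\ref{SumProb}; yours is arguably the cleaner reference since under $D=U$ there is no Gaussian summand present.
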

\begin{proof}
We follow essentially the same lines as in the proof of
Proposition~\ref{SSA_PostProb_Trunc_Fq}, dealing with both items in turn.
\begin{enumerate}
\item
Denote with $E$ the event ``Algorithm~\ref{SVAAlg_Fq} returns \textbf{PLWE} or
\textbf{NOT ENOUGH SAMPLES}''. Then, applying Bayes' theorem,
\begin{equation}\label{eq.2}
\mathrm{P}(D=\mathcal{G}_\sigma | E) = 1 - \mathrm{P}(D=U|E) =
1 - \frac{\mathrm{P}(E|D=U)}{\mathrm{P}(E|D=U) +
\mathrm{P}(E|D=\mathcal{G}_\sigma)}.
\end{equation}
Now remark that $\mathrm{P}(E|D=\mathcal{G}_\sigma)$ is the probability that
at least a $g\in\mathbb{F}_q$ exists such that $b_i(\alpha)-a_i(\alpha)g\in
[-\frac{q}{4},\frac{q}{4})$, for all $i=1,\dotsc,M$, i.e., for all samples in
$S$. Since we are supposing that the distribution is $\mathcal{G}_\sigma$, at
least one $g$ is sure to exist, namely, $g=s(\alpha)$, the evaluation of the
secret, as $e_i(\alpha)\in [-2\overline{\sigma},2\overline{\sigma}]\subseteq
[-\frac{q}{4},\frac{q}{4})$ for each sample $i$. Then we have that
$\mathrm{P}(E|D=\mathcal{G}_\sigma)=1$.

As for $\mathrm{P}(E|D=U)$, let $U_g$ be the event that $b_i(\alpha)-a_i
(\alpha)g\in [-\frac{q}{4},\frac{q}{4})$, for all $i=1,\dotsc,M$, i.e., for
all samples in $S$. Then
\[
\mathrm{P}(E|D=U) = \mathrm{P}\left(\bigcup_{g=0}^{q-1} U_g\right) \leq
\sum_{g=0}^{q-1}\mathrm{P}(U_g),
\]
where the inequality comes from the fact that we cannot guarantee that the
events $U_g$ be independent for different values of $g$.

Since we are now in the case where $a_i$ and $b_i$ are uniformly randomly
selected, so $b_i-a_ig$ is also uniformly random according to
Lemma~\ref{SumUnifGen}. By applying Proposition~\ref{SumProb}, this means
that for any $i$ and any $g$, $\mathrm{P}(b_i-a_ig\in [-\frac{q}{4},
\frac{q}{4})) = \frac{1}{2}\pm\frac{1}{2q}$, so
\[
\mathrm{P}(U_g) = \prod_{i=1}^M \frac{1}{2}\pm\frac{1}{2q} = \left(\frac{1}{2}
\pm\frac{1}{2q}\right)^M,
\]
and thus we have
\[
\mathrm{P}(E|D=U) \leq \sum_{g=0}^{q-1}\mathrm{P}(U_g) = q\left(\frac{1}{2}\pm
\frac{1}{2q}\right)^M.
\]
Plugging all the results above into Equation~\eqref{eq.2}, we eventually obtain
\[
\mathrm{P}(D=\mathcal{G}_\sigma | E) = 1 - \frac{\mathrm{P}(E|D=U)}
{\mathrm{P}(E|D=U)+1} \geq 1 - \mathrm{P}(E|D=U) \geq 1 -
q\left(\frac{1}{2}\pm\frac{1}{2q}\right)^M.
\]
\item
The proof for this item is exactly the same as the one for item~2 in
Proposition~\ref{SSA_PostProb_Trunc_Fq}.
\end{enumerate}
\qed
\end{proof}

Now we turn our attention to the not truncated case. We have the following
\begin{proposition}\label{SVA_PostProb_NotTrunc_Fq}
Assume $2\overline{\sigma} \leq \frac{q}{4}$ and $\frac{1}{2}\pm\frac{1}{2q}
<p_0$. Let $M$ be the number of input samples. Then
\begin{enumerate}
\item
If Algorithm~\ref{SVAAlg_Fq} returns \textbf{PLWE} or \textbf{NOT ENOUGH SAMPLES},
then the samples are \textbf{PLWE} with probability at least $1-q\left(\frac
{\frac{1}{2} \pm \frac{1}{2q}}{p_0}\right)^M$.
\item
If Algorithm~\ref{SVAAlg_Fq} returns \textbf{NOT PLWE}, then the samples are
uniform with probability at least $1/2$ for large enough $M$.
\end{enumerate}
\end{proposition}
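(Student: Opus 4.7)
The plan is to follow exactly the scheme used in Proposition~\ref{SSA_PostProb_NotTrunc_Fq}, replacing the set-membership event $b_i(\alpha)-a_i(\alpha)g \in \Sigma$ by the interval-membership event $b_i(\alpha)-a_i(\alpha)g \in [-q/4, q/4)$, and recycling the truncated-case computation of $\mathrm{P}(E \mid D=U)$ given in Proposition~\ref{SVA_PostProb_Trunc_Fq}. The main structural observation is that dropping truncation only alters the conditional probability with $D=\mathcal{G}_\sigma$: the uniform-side bound is unaffected because the uniform distribution is insensitive to whether the error distribution was truncated.

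For item (1), I would again let $E$ denote the event that the algorithm returns \textbf{PLWE} or \textbf{NOT ENOUGH SAMPLES}, apply Bayes' theorem and the elementary bound
\[
\mathrm{P}(D=\mathcal{G}_\sigma \mid E) \geq 1 - \frac{\mathrm{P}(E \mid D=U)}{\mathrm{P}(E \mid D=\mathcal{G}_\sigma)}.
\]
With $U_g$ denoting the event $b_i(\alpha)-a_i(\alpha)g \in [-q/4,q/4)$ for all $i$, the same Lemma~\ref{SumUnifGen}/Lemma~\ref{SumProb} argument from the truncated case gives $\mathrm{P}(E \mid D=U) \leq q(\tfrac{1}{2}\pm \tfrac{1}{2q})^M$. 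For the denominator, I would bound $\mathrm{P}(E\mid D=\mathcal{G}_\sigma) \geq \mathrm{P}(U_{g^\ast})$ for $g^\ast=s(\alpha)$; since $b_i(\alpha)-a_i(\alpha)g^\ast = e_i(\alpha)$, and the hypothesis $2\overline\sigma \leq q/4$ ensures that whenever $e_i(\alpha)$ lies in $[-2\overline\sigma, 2\overline\sigma]$ it also lies in $[-q/4, q/4)$, each factor is at least $p_0$. Hence $\mathrm{P}(U_{g^\ast}) \geq p_0^M$ and the stated bound follows by substitution.

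For item (2), let $E'$ be the complement of $E$. Bayes' theorem reduces the question to showing
\[
\mathrm{P}(D=U \mid E') = \frac{\mathrm{P}(E' \mid D=U)}{\mathrm{P}(E' \mid D=U) + \mathrm{P}(E' \mid D=\mathcal{G}_\sigma)} \geq \frac{1}{2},
\]
which holds iff $\mathrm{P}(E' \mid D=U) \geq \mathrm{P}(E' \mid D=\mathcal{G}_\sigma)$. From item (1) and complementation we have $\mathrm{P}(E' \mid D=\mathcal{G}_\sigma) \leq 1 - p_0^M$ and $\mathrm{P}(E' \mid D=U) \geq 1 - q(\tfrac{1}{2}\pm\tfrac{1}{2q})^M$, so a sufficient condition is $q(\tfrac{1}{2}\pm\tfrac{1}{2q})^M \leq p_0^M$, equivalently $q^{1/M}(\tfrac{1}{2}\pm\tfrac{1}{2q}) \leq p_0$. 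Under the standing assumption $\tfrac{1}{2}\pm\tfrac{1}{2q} < p_0$ this holds for all sufficiently large $M$, which is exactly the ``for large enough $M$'' clause.

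The only genuinely delicate step is the lower bound $\mathrm{P}(e_i(\alpha) \in [-q/4,q/4)) \geq p_0$ in the non-truncated setting: one must invoke the variance computation of Subsection~\ref{case2}/\ref{case3} to identify $e_i(\alpha)$ as a centered Gaussian of standard deviation $\overline\sigma$, and then use the hypothesis $2\overline\sigma \leq q/4$ together with the definition of $p_0$ as the mass of $[-2\sigma, 2\sigma]$ under a centered Gaussian of standard deviation $\sigma$ (which is scale-invariant). Everything else is a mechanical translation of the small-set argument to the small-values setting.
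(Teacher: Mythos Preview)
Your proposal is correct and mirrors the paper's own proof essentially line by line: the same Bayes-plus-ratio bound for item~(1), the same reuse of the uniform-side estimate from Proposition~\ref{SVA_PostProb_Trunc_Fq}, the same lower bound $\mathrm{P}(U_{g^\ast})\geq p_0^M$, and for item~(2) the same complementation argument leading to the sufficient condition $q^{1/M}(\tfrac{1}{2}\pm\tfrac{1}{2q})\leq p_0$. Your added remark justifying $\mathrm{P}(e_i(\alpha)\in[-q/4,q/4))\geq p_0$ via the scale-invariance of $p_0$ is a point the paper leaves implicit, so if anything your write-up is slightly more explicit.
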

\begin{proof}
We follow essentially the same lines as in the proof of
Proposition~\ref{SSA_PostProb_NotTrunc_Fq}, dealing with both items in turn.
\begin{enumerate}
\item
As before, denote with $E$ the event ``Algorithm~\ref{SVAAlg_Fq} returns
\textbf{PLWE} or \textbf{NOT ENOUGH SAMPLES}''. Applying again Bayes' theorem,
\[
\mathrm{P}(D=\mathcal{G}_\sigma | E) = 1 - \mathrm{P}(D=U|E) =
1 - \frac{\mathrm{P}(E|D=U)}{\mathrm{P}(E|D=U) +
\mathrm{P}(E|D=\mathcal{G}_\sigma)}
\geq 1 - \frac{\mathrm{P}(E|D=U)} {\mathrm{P}(E|D=\mathcal{G}_\sigma)}.
\]
Let $U_g$ be the event that $b_i(\alpha)-a_i (\alpha)g\in [-\frac{q}{4},
\frac{q}{4})$, for all $i=1,\dotsc,M$, i.e., for all samples in $S$. Then
\[
\mathrm{P}(E|D=\mathcal{G}_\sigma) = \mathrm{P}\left(\bigcup_{g=0}^{q-1}
U_g\right) \geq \mathrm{P}(U_{g^\ast}),
\]
where $g^\ast = s(\alpha)$. Observe that $b_i(\alpha)-a_i (\alpha)g^\ast =
e_i(\alpha)$. Since we are now dealing with the not truncated case, we have
\[
\mathrm{P}(U_{g^\ast}) = \prod_{i=1}^M \mathrm{P}(e_i(\alpha)\in
[-\frac{q}{4},\frac{q}{4})) \geq  p_0^M.
\]
Moreover, $\mathrm{P}(E|D=U)$ is the same as in the truncated case, since the
uniform distribution is not affected by the truncation. Combining results,
\[
\mathrm{P}(D=\mathcal{G}_\sigma | E) \geq 1 - \frac{\mathrm{P}(E|D=U)}
{\mathrm{P}(E|D=\mathcal{G}_\sigma)} \geq 1 - \frac{\mathrm{P}(E|D=U)}{p_0^M}
\geq 1 - \frac{q\left(\frac{1}{2}\pm\frac{1}{2q}\right)^M}{p_0^M}.
\]
\item
The proof for this item is exactly the same as the one for item~2 in
Proposition~\ref{SSA_PostProb_Trunc_Fq}.
\end{enumerate}
\qed
\end{proof}

\subsubsection{Success probabilities.}
Remark that from the discussion above, we can derive the success probability
of Algorithm~\ref{SVAAlg_Fq}. For the truncated case, we have the following
\begin{proposition}\label{SVA_Success_Trunc_Fq}
With the same notations and assumptions as in Proposition~\ref{SVA_PostProb_Trunc_Fq},
\begin{enumerate}
\item
If the samples are PLWE, Algorithm~\ref{SVAAlg_Fq} guesses correctly with
probability $1$.
\item
If the samples are uniform, Algorithm~\ref{SVAAlg_Fq} guesses correctly with
probability at least $1-q\left(\frac{1}{2}\pm\frac{1}{2q}\right)^M$.
\end{enumerate}
\end{proposition}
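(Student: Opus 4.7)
The plan is to recognize that this proposition is essentially a reinterpretation of Proposition~\ref{SVA_PostProb_Trunc_Fq}: the event ``Algorithm~\ref{SVAAlg_Fq} guesses correctly'' has already been analyzed there, just under a different name. First I would make the translation explicit. If the samples are PLWE, the algorithm guesses correctly precisely when it does not return \textbf{NOT PLWE}, which is the event $E$ of the previous proposition (return of \textbf{PLWE} or \textbf{NOT ENOUGH SAMPLES}; the correct secret evaluation is always a member of $G$, so the algorithm never misclassifies a PLWE sample as uniform when $E$ occurs). Conversely, if the samples are uniform, the algorithm guesses correctly precisely when it returns \textbf{NOT PLWE}, which is the complementary event $E^\prime$.

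For item~1, I would invoke the computation made inside the proof of Proposition~\ref{SVA_PostProb_Trunc_Fq}: under the assumption $2\overline{\sigma}\leq q/4$ and Gaussian truncation, the value $g=s(\alpha)$ always satisfies $b_i(\alpha)-a_i(\alpha)g=e_i(\alpha)\in[-2\overline{\sigma},2\overline{\sigma}]\subseteq[-q/4,q/4)$ for every sample $i$. Hence $G$ is nonempty with probability one, so $\mathrm{P}(E\mid D=\mathcal{G}_\sigma)=1$, giving the first claim.

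For item~2, the key identity is $\mathrm{P}(E^\prime\mid D=U)=1-\mathrm{P}(E\mid D=U)$. The same proof already bounded $\mathrm{P}(E\mid D=U)$ by a union bound over the $q$ candidate guesses, invoked Lemma~\ref{SumUnifGen} to see that $b_i-a_i g$ remains uniform, and then applied Corollary~\ref{CorSumProb} (equivalently Lemma~\ref{SumProb}) to replace the probability that a single uniform sample lies in $[-q/4,q/4)$ by $1/2\pm 1/(2q)$. The upshot is
\[
\mathrm{P}(E\mid D=U)\leq q\left(\tfrac{1}{2}\pm\tfrac{1}{2q}\right)^M,
\]
so taking complements yields the announced lower bound for $\mathrm{P}(E^\prime\mid D=U)$.

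There is no real obstacle here: the whole statement is a direct bookkeeping corollary of the intermediate quantities computed in Proposition~\ref{SVA_PostProb_Trunc_Fq}, exactly in parallel with how Proposition~\ref{SSA_Success_Trunc_Fq} follows from Proposition~\ref{SSA_PostProb_Trunc_Fq}. The only subtlety worth flagging in the write-up is the correct identification of ``guessing correctly'' with the events $E$ and $E^\prime$; once that translation is fixed, both items reduce to one-line citations of the computations already performed.
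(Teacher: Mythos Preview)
Your proposal is correct and follows essentially the same approach as the paper's own proof: identify ``guessing correctly'' in the PLWE case with the event $E$ and in the uniform case with the complementary event $E'$, then cite the computations $\mathrm{P}(E\mid D=\mathcal{G}_\sigma)=1$ and $\mathrm{P}(E\mid D=U)\leq q\left(\tfrac{1}{2}\pm\tfrac{1}{2q}\right)^M$ already carried out in Proposition~\ref{SVA_PostProb_Trunc_Fq}. Your write-up is simply more explicit about the translation and the intermediate steps than the paper's two-line version.
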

\begin{proof}
The first item is equivalent to computing $\mathrm{P}(E|D=\mathcal{G}
_\sigma)$, which is $1$, according to the proof in
Proposition~\ref{SVA_PostProb_Trunc_Fq}. Analogously, the second item corresponds to
computing $\mathrm{P}(E^\prime|D=U)$, which in the same place is shown to be
greater or equal to $1-q\left(\frac{1}{2}\pm\frac{1}{2q}\right)^M$.
\qed
\end{proof}

Now for the not truncated case, we have the following
\begin{proposition}\label{SVA_Success_NotTrunc_Fq}
With the same notations and assumptions as in Proposition~\ref{SVA_PostProb_NotTrunc_Fq},
\begin{enumerate}
\item
If the samples are PLWE, Algorithm~\ref{SVAAlg_Fq} guesses correctly with
probability at least $p_0^M$.
\item
If the samples are uniform, Algorithm~\ref{SVAAlg_Fq} guesses correctly with
probability at least $1-q\left(\frac{1}{2}\pm\frac{1}{2q}\right)^M$.
\end{enumerate}
\end{proposition}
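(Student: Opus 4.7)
The plan is to recycle the intermediate calculations already performed in the proof of Proposition~\ref{SVA_PostProb_NotTrunc_Fq}, since the two items of this proposition amount to reading off the conditional probabilities $\mathrm{P}(E\mid D=\mathcal{G}_\sigma)$ and $\mathrm{P}(E'\mid D=U)$ from that argument. This parallels exactly the strategy used for Proposition~\ref{SSA_Success_NotTrunc_Fq}, and the analogous structure of the Small Values Attack algorithm makes the transfer essentially mechanical.

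For item 1, I would argue as follows. ``Guessing correctly when the samples are PLWE'' means that the algorithm does not return \textbf{NOT PLWE}, i.e., the event $E$ occurs. The secret's evaluation $g^\ast=s(\alpha)$ always belongs to the set $G$ as long as $e_i(\alpha)\in[-q/4,q/4)$ for every $i=1,\dots,M$. In the non-truncated case each such inclusion holds with probability at least $p_0$ independently across samples, so by the bound already derived in Proposition~\ref{SVA_PostProb_NotTrunc_Fq}, $\mathrm{P}(E\mid D=\mathcal{G}_\sigma)\geq \mathrm{P}(U_{g^\ast})\geq p_0^M$, which is precisely the claimed bound.

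For item 2, I would observe that ``guessing correctly when the samples are uniform'' corresponds to the complementary event $E'$, i.e., returning \textbf{NOT PLWE}. Since truncation concerns only the Gaussian error distribution and not the uniform one, the estimate for $\mathrm{P}(E\mid D=U)$ is exactly the same as in the truncated case: applying Lemma~\ref{SumUnifGen} and Lemma~\ref{SumProb} sample-by-sample and then a union bound over $g\in\mathbb{F}_q$ yields $\mathrm{P}(E\mid D=U)\leq q\bigl(\tfrac{1}{2}\pm\tfrac{1}{2q}\bigr)^M$. Consequently $\mathrm{P}(E'\mid D=U)\geq 1-q\bigl(\tfrac{1}{2}\pm\tfrac{1}{2q}\bigr)^M$.

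There is no real obstacle here; the entire content of the proposition is already contained in the proof of Proposition~\ref{SVA_PostProb_NotTrunc_Fq}, and the only thing to be careful about is the correct identification of ``correct guess'' with the event $E$ (when the true distribution is $\mathcal{G}_\sigma$) and with the event $E'$ (when the true distribution is $U$). The mild subtlety worth flagging is that, unlike the truncated case where $p_0=1$ makes item 1 trivial with bound $1$, here the non-truncation forces a genuinely smaller success bound $p_0^M$ that decays with $M$, which is exactly the motivation for the Extended Small Set Attack strategy of Subsection~\ref{HNS_SSA} and its analogue for the Small Values case.
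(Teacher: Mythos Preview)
Your proposal is correct and follows essentially the same approach as the paper: both identify item~1 with the bound $\mathrm{P}(E\mid D=\mathcal{G}_\sigma)\geq p_0^M$ already obtained in the proof of Proposition~\ref{SVA_PostProb_NotTrunc_Fq}, and item~2 with the complementary bound $\mathrm{P}(E'\mid D=U)\geq 1-q\bigl(\tfrac{1}{2}\pm\tfrac{1}{2q}\bigr)^M$, noting that truncation does not affect the uniform case. Your write-up is slightly more explicit in spelling out the identifications and the role of the union bound, but the substance is identical.
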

\begin{proof}
The first item is equivalent to computing $\mathrm{P}(E|D=\mathcal{G}
_\sigma)$, which now is at least $p_0^M$, according to the proof in
Proposition~\ref{SVA_PostProb_NotTrunc_Fq}. The second item is not affected by the
truncation so the bound is the same as for the truncated case, namely,
greater or equal to $1-q\left(\frac{1}{2}\pm\frac{1}{2q}\right)^M$.
\qed
\end{proof}

\subsubsection{An improvement for a higher number of samples.}\label{HNS_SVA}

The same considerations over the number of samples present apply to this attack.
So, in the remainder of this section the same techniques will be applied.

Thus, it is first required to revisit an acceptable threshold for the number
of expected executions of the \emph{small values attack} sub-process
to output \textbf{PLWE}. In Proposition~\ref{SVA_Success_NotTrunc_Fq}, the success
probability of guessing the PLWE distribution of the samples is
characterized by:
\[
\mathrm{P}_{\mathcal{G}_{\sigma}}(M_0) := \mathrm{P}(E | D = \mathcal{G}_{\sigma}) \geq p_0^{M_0}.
\]
Therefore, given $M$ samples, the sub-process must executed $c := \lfloor M/M_0
\rfloor$ times, thus providing an expected number of outputs, $T$, different
from \textbf{NOT PLWE} given by
\[
T := c \cdot \mathrm{P}_{\mathcal{G}_{\sigma}}(M_0) \geq c \cdot p_0^{M_0}
\]
in case the samples were actually PLWE.

With these values, we define the \emph{Extended Small Values Attack} in
Algorithm~\ref{Extended_SVAAlg_Fq}.

\begin{figure}[ht]
\centering
\begin{tabular}[c]{ll}
\hline
\textbf{Input:} & A collection of samples $S=\{(a_i(x),b_i(x))\}
_{i=1}^M \in R_q^2$ \\
& A choice $M_0$ for the number of samples of the sub-process \\
\textbf{Output:} & A guess into the distribution of the samples, either \\
 & \textbf{PLWE} or \textbf{UNIFORM}\\
\hline
\end{tabular}

\medskip

\begin{itemize}
\item $T:=\lceil c \cdot p_0^{M_0} \rceil$
\item $C:=0$
\item \texttt{\emph{for}} $j\in \{0, ..., c - 1\}$ \texttt{\emph{do}}
	\begin{itemize}
	\item $result := SmallValuesAttack\left(S_{j} := \{(a_i(x),b_i(x))\}
                    _{i=j\cdot M_0 + 1}^{(j+1)\cdot M_0}\right)$
		\begin{itemize}
		\item \texttt{\emph{if}} result $\neq$ \textbf{NOT PLWE}
\texttt{\emph{then}}
			\begin{itemize}
			\item $C := C + 1$
			\end{itemize}
		\end{itemize}
	\end{itemize}
\item \texttt{\emph{if}} $C < T$ \texttt{\emph{then return}} \textbf{UNIFORM}
\item \texttt{\emph{else return}} \textbf{PLWE}
\end{itemize}
\hrule
\caption{Algorithm for Extended Small Values Attack}
\label{Extended_SVAAlg_Fq}
\end{figure}

Under this attack, we have the following success prediction probabilities,
which closely follow those in Subsection~\ref{ss.31}:
\begin{itemize}
\item
Case $D = \mathcal{G}_{\sigma}$: It is required to analyze the probability
$\mathrm{P}(C \geq T | D = \mathcal{G}_{\sigma})$. By definition, this can be seen as
$1 - \mathrm{P}(C < T | D = \mathcal{G}_{\sigma})$, which in turn can be interpreted in
terms of a cumulative binomial distribution. Therefore, we have that
\[
\mathrm{P}(C \geq T | D = \mathcal{G}_{\sigma}) := 1 - F(T-1, c,
\mathrm{P}_{\mathcal{G}_{\sigma}}
(M_0)) \geq 1 - F(T-1, c, p_0^{M_0})
\]
since the cumulative binomial distribution enjoys decreasing monotonicity on
the \emph{success probability} parameter.
\item
Case $D = U$: Now $\mathrm{P}(C < T | D = U)$ is yet again given in
terms of the cumulative binomial distribution, that is, the probability of a
binomial distribution of parameters $n = c$ and $p = \mathrm{P}_{U}(M_0)$, to
output any of the values $\{0,1,2,\dotsc,T-1\}$. Therefore, we have:
\[
\mathrm{P}(C < T | D = U) := F(T-1, c, \mathrm{P}_{U} (M_0)) \geq
F\left(T-1, c, 1 - \left(\frac{1}{2} \pm \frac{1}{2q}\right)^{M_0}\right),
\]
since the probability of success in case the samples were uniform (i.e.
$\mathrm{P}_{U} (M_0)$), in terms of the proof of Proposition~\ref{SVA_Success_NotTrunc_Fq},
verifies
\[
\mathrm{P}_{U} (M_0) = 1 - \mathrm{P}\left(\bigcup_{g=0}^{q-1} U_g\right) \geq
1 - \mathrm{P}\left(U_g\right) = 1 - \left(\frac{1}{2} \pm \frac{1}{2q}\right)^M,
\forall g \in \mathbb{F}_q
\]
and therefore can be upper-bounded by the probability that a particular $g_0 \in
\mathbb{F}_q$ verifies the condition for $M = M_0$ samples, which is $1 -
\left(\frac{1}{2} \pm \frac{1}{2q}\right)^{M_0}$.
\end{itemize}

With the above characterization of the success probabilities of each
distribution, we prove that Algorithm~\ref{Extended_SVAAlg_Fq} is successful:

\begin{proposition}\label{Extended_SVA_Fq}
Let $C, T, M_0$ be as specified in Algorithm~\ref{Extended_SVAAlg_Fq}. Then, if
$1 - \left(\frac{1}{2} \pm \frac{1}{2q}\right)^{M_0} < p_0^{M_0}$,
Algorithm~\ref{Extended_SVAAlg_Fq} is successful.
\end{proposition}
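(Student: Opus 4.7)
The plan is to mirror exactly the proof of Proposition~\ref{Extended_SSA_Success_Fq}, since Algorithm~\ref{Extended_SVAAlg_Fq} is structurally identical to Algorithm~\ref{ExtendedSSAAlg_Fq} and only differs in the sub-process it invokes and in the per-sample success probabilities. Concretely, I would define ``success'' as outputting the correct label for the underlying distribution, and compute the overall success probability under the equiprobable-prior assumption as
\begin{equation*}
\tfrac{1}{2}\bigl(\mathrm{P}(C<T\mid D=U) + \mathrm{P}(C\geq T\mid D=\mathcal{G}_\sigma)\bigr),
\end{equation*}
which I need to show exceeds $\tfrac12$.

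Next I would plug in the two characterizations derived immediately before the statement of the proposition: the lower bound $\mathrm{P}(C\geq T\mid D=\mathcal{G}_\sigma)\geq 1-F(T-1,c,p_0^{M_0})$ in the Gaussian case, and $\mathrm{P}(C<T\mid D=U)\geq F\bigl(T-1,c,1-(\tfrac12\pm\tfrac{1}{2q})^{M_0}\bigr)$ in the uniform case. Adding these and subtracting $1$, the whole argument reduces to showing
\begin{equation*}
F\bigl(T-1,c,\,1-(\tfrac12\pm\tfrac{1}{2q})^{M_0}\bigr) \;-\; F(T-1,c,\,p_0^{M_0}) \;>\; 0.
\end{equation*}

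The final step invokes the decreasing monotonicity of the cumulative binomial distribution $F(T-1,c,\cdot)$ in its success-probability parameter, which is standard and was already used in the analogous proof. Under this monotonicity, the displayed inequality is equivalent to the hypothesis $1-(\tfrac12\pm\tfrac{1}{2q})^{M_0} < p_0^{M_0}$, and the conclusion follows.

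I do not anticipate a genuine obstacle here: the only subtle point is that $T=\lceil c\cdot p_0^{M_0}\rceil$ is defined so that the two binomial CDFs are evaluated at the \emph{same} threshold, which is precisely what makes the monotonicity argument applicable. The proof is therefore essentially a one-line application of monotonicity once the per-case bounds are substituted, and can be written in the same compact form as Proposition~\ref{Extended_SSA_Success_Fq}.
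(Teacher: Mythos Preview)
Your proposal is correct and follows essentially the same approach as the paper: compute the average success probability under the equiprobable prior, substitute the two cumulative-binomial bounds derived just before the proposition, and reduce to the inequality $F\bigl(T-1,c,1-(\tfrac12\pm\tfrac{1}{2q})^{M_0}\bigr)>F(T-1,c,p_0^{M_0})$, which follows from the decreasing monotonicity of $F(T-1,c,\cdot)$ under the stated hypothesis. The paper's proof is written in precisely this compact form, mirroring Proposition~\ref{Extended_SSA_Success_Fq} as you anticipated.
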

\begin{proof}
We need to prove that the algorithm verifies that:
\begin{multline*}
\frac{1}{2} \cdot (\mathrm{P}(C < T | D = U) +
\mathrm{P}(C \geq T | D = \mathcal{G}_{\sigma})) \geq  \\
\frac{1}{2} + \frac{1}{2}\cdot \left(F\left(T-1, c, 1 - \left(\frac{1}{2} \pm
\frac{1}{2q}\right)^{M_0}\right) - F(T-1, c, p_0^{M_0})\right) > \frac{1}{2}
\end{multline*}

Employing the decreasing monotonicity of the cumulative binomial distribution,
this is true as long as
\[
1 - \left(\frac{1}{2} \pm \frac{1}{2q}\right)^{M_0} < p_0^{M_0}.
\]
\qed
\end{proof}

This result provides a characterization of how $M_0$ should be chosen for the
attack to be applicable, although further refinements could be employed.

As in the case of the \emph{small set attack},
it is again noticeable that this extended algorithm does not make direct use
of features of $\mathbb{F}_q$, other than those related to the algorithm being
employed as sub-process. Therefore, this technique will also be applicable to
the extension to finite field extensions of Section~\ref{s.5} (with only having
to revisit, if necessary, the success probabilities of the sub-processes employed).

\subsection{Unbounded Small Values Attack, Revisited}\label{ss3.3}
We now turn our attention to the case where $2\overline{\sigma} \geq
\frac{q}{4}$. The value computed by evaluating at $\alpha$ a tentative error
polynomial, $b_i(\alpha) - a_i(\alpha)g$, for a sample $i$ and a guess
$g$ for $s(\alpha)$, no longer provides an automatic distinguisher, since this
value will lie outside the interval $[-\frac{q}{4}, \frac{q}{4})$ with
significant probability.

However, \cite{ELOS:2016:RCN} shows an attack to cover the previous
scenario, though considering a truncated Gaussian for the error polynomials.
This section provides a detailed overview of the implications of lifting the
latter requirement.

To construct this attack, the probability of the event $E_i$, defined as the
event in which $b_i(\alpha) - a_i(\alpha)g \in [-\frac{q}{4},
\frac{q}{4})$, for the $i$-th sample, and a guess $g$ for $s(\alpha)$,
needs to be re-calculated, and we will do it by means of a plain
``favorable vs total-cases'' analysis.

Since we work modulo $q$, we have (discrete) values inside the interval
$[0, q)$, which can equivalently be seen inside the interval $[-\frac{q}{2},
\frac{q}{2})$, by means of associating each value in the interval $[\frac{q}
{2}, q)$ to its unique congruent value in $[-\frac{q}{2}, 0)$.

With this transformation, its easy to see that for any $b = a \bmod{q}$, $b$
lies inside $[-\frac{q}{4}, \frac{q}{4})$ when
\[
a \in \left(\bigcup_{j \in \mathbb{N}} [-\frac{5\cdot q}{4} - j\cdot q, -
\frac{3\cdot q}{4} - j\cdot q)\right) \cup [-\frac{q}{4}, \frac{q}{4})
\cup \left(\bigcup_{j \in \mathbb{N}} [\frac{3\cdot q}{4} + j\cdot q, \frac{5\cdot q}
{4} + j\cdot q)\right).
\]
Hence, we can characterize the favorable cases by the area of the image
distribution across the favorable space, namely,
\[
\int_{-\frac{q}
{4}}^{\frac{q}{4}} \mathcal{G}_{\overline{\sigma}} + \sum_{j \in \mathbb{N}} \int_{-
\frac{5\cdot q}{4} - j\cdot q}^{-\frac{3\cdot q}{4} - j\cdot q}
\mathcal{G}_{\overline{\sigma}} + \sum_{j \in \mathbb{N}} \int_{\frac{3\cdot q}{4} +
j\cdot q}^{\frac{5\cdot q}{4} + j\cdot q} \mathcal{G}_{\overline{\sigma}}.
\]
Now, the area of this distribution is symmetric across $0$, so both integrals
can be truncated to its positive side, eventually yielding the following
probability:
\[
\mathrm{P}(E_i | D = \mathcal{G}_{\sigma}) =
\left(\int_{0}^{\infty}
\mathcal{G}_{\overline{\sigma}}\right)^{-1} \cdot \left( \int_{0}^{\frac{q}{4}}
\mathcal{G}_{\overline{\sigma}} + \sum_{j \in \mathbb{N}} \int_{\frac{3\cdot q}{4} +
j\cdot q}^{\frac{5\cdot q}{4} + j\cdot q}
\mathcal{G}_{\overline{\sigma}}\right).
\]
Note that, the implication of the truncated behavior of the Gaussian
distribution appears on the above probability: in a truncated setting, the
integral sum has a finite span, which corresponds to the finite number of
intervals inside $[-2\overline{\sigma}, 2\overline{\sigma}]$.

Denoting the value of this probability as $\mathrm{P}(E_i | D =
\mathcal{G}_{\sigma}) = \frac{1}{2} + \delta$, and
$\ell$ as the number of samples provided, the following attack applies:

\begin{figure}[ht]
\centering
\begin{tabular}[c]{ll}
\hline
\textbf{Input:} & A collection of samples $S=\{(a_i(x),b_i(x))\}
_{i=1}^\ell \in R_q^2$ \\
 & A value $\delta$  \\
\textbf{Output:} & A guess into the distribution of the samples, either \\
 & \textbf{PLWE} or \textbf{UNIFORM}\\
\hline
\end{tabular}

\medskip

\begin{itemize}
\item $T:=\left\lceil\frac{1}{2}\left(\ell\cdot q + 2\ell \cdot \delta \pm \ell
          \cdot \left(1 - \frac{1}{q}\right)\right) \right\rceil$
\item $C:=0$
\item \texttt{\emph{for}} $g\in \mathbb{F}_q$ \texttt{\emph{do}}
	\begin{itemize}
	\item \texttt{\emph{for}} $(a_i(x),b_i(x))\in S$  \texttt{\emph{do}}
		\begin{itemize}
		\item \texttt{\emph{if}} $b_i(\alpha)-a_i(\alpha)g \in [-\frac{q}{4}, \frac{q}{4})$
\texttt{\emph{then}}
			\begin{itemize}
			\item $C := C + 1$
			\end{itemize}
		\end{itemize}
	\end{itemize}
\item \texttt{\emph{if}} $C < T$ \texttt{\emph{then return}} \textbf{UNIFORM}
\item \texttt{\emph{else return}} \textbf{PLWE}
\end{itemize}
\hrule
\caption{Algorithm for Unbounded Small Values Attack over $\mathbb{F}_q$}
\label{USVAAlg_Fq}
\end{figure}
In Algorithm~\ref{USVAAlg_Fq}, $T$ represents the number of times that, on
average, one would expect the event $E_i$ to hold true when the Algorithm
is executed, should the samples actually come from the PLWE distribution.

To understand the value for this threshold $T$, consider that all candidates
$g$ from $\mathbb{F}_q$, except one, will not be the desired value
$g^\ast=s(\alpha)$, and therefore the \emph{tentative} error value
\[
b(\alpha) - g\cdot a(\alpha) = (s(\alpha) - g)\cdot a(\alpha) + e(\alpha)
\]
becomes the sum of a discrete uniform term in $\mathbb{F}_q$ and a discrete
Gaussian element of mean $0$ and variance $\overline{\sigma}^2$ over
$\mathbb{F}_q$. Therefore applying Lemma~\ref{SumProb}, we have that
the probability for that sum to lie inside $[-\frac{q}{4}, \frac{q}{4})$,
for any given sample $i$ and guess $g\neq g^\ast$, is $\frac{1}{2} \pm
\frac{1}{2q}$.

Moreover, for one value, namely, the correct guess $g^\ast$, we have that the
probability of the samples to lie inside the objective interval is $\frac{1}
{2} + \delta$, for each sample $i$, by the hypothesis.

Therefore, the number of expected values inside the desired interval can be
calculated as
\[
T:=\left\lceil \ell(q-1)\left(\frac{1}{2}\pm\frac{1}{2q}\right) +
                \ell\left(\frac{1}{2}+\delta\right) \right\rceil =
    \left\lceil\frac{1}{2}\left(\ell\cdot q + 2\ell \cdot \delta \pm \ell
    \cdot \left(1 - \frac{1}{q}\right)\right) \right\rceil.
\]

Under this attack, we have the following success prediction probabilities:
\begin{itemize}
\item
Case $D = U$: The probability $\mathrm{P}(C < T | D = U)$ is
given by the cumulative binomial distribution, that is, the probability of a
binomial distribution of parameters $n = \ell\cdot q$ and $p = \frac{1}{2} \pm
\frac{1}{2q}$, to output any of the values $\{0,1,2,\dotsc,T-1\}$. Therefore, we
have:
\[
\mathrm{P}(C < T | D = U) = F\left(T-1, \ell\cdot q, \frac{1}{2} \pm
\frac{1}{2q}\right).
\]
\item
Case $D = \mathcal{G}_{\sigma}$: Now we need to analyze the probability
$\mathrm{P}(C \geq T | D = \mathcal{G}_{\sigma})$. Let $C_{g^\ast}$ be defined
as the number of samples inside the objective interval, given the correct guess
$g^\ast$. Then:
\begin{equation}\label{eq.3}
\mathrm{P}(C \geq T | D = \mathcal{G}_{\sigma}) = \sum_{i=0}^{\ell}
\mathrm{P}(C - C_{g^\ast} \geq T - i | D = \mathcal{G}_{\sigma})
\cdot \mathrm{P}(C_{g^\ast} = i).
\end{equation}
The first term of the sum is yet again the complementary probability of a
cumulative binomial on all the guesses $g\neq g^\ast$, namely, those not
the correct one. Consequently, they fall under the case of
Lemma~\ref{SumProb}, and so its probability of satisfying the
event is $\frac{1}{2} \pm \frac{1}{2q}$.

The second term is the probability that a binomial distribution, with $n = \ell$
and $p = \frac{1}{2} + \delta$ outputs the value $i$. Therefore,
Equation~\eqref{eq.3} becomes
\[
\mathrm{P}(C \geq T | D = \mathcal{G}_{\sigma}) =
\sum_{i=0}^{\ell} \left (1 - F\left(T-i-1, \ell\cdot (q-1),\frac{1}{2} \pm
\frac{1}{2q}\right) \right) \cdot \mathrm{P}\left(B\left(\ell, \frac{1}{2} +
\delta\right) = i\right).
\]
\end{itemize}
Now, we prove that for a big enough modulo $q$, Algorithm~\ref{USVAAlg_Fq} is
successful:
\begin{lemma}\label{USVA_Success_Fq}
Let $C, T, \delta$ be as specified in Algorithm~\ref{USVAAlg_Fq}. Then, if
$\pm\frac{1}{2q} < \delta$, then the above algorithm is successful.
\end{lemma}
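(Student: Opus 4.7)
The plan is to verify the same success criterion used already in Propositions~\ref{Extended_SSA_Success_Fq} and \ref{Extended_SVA_Fq}, namely that
\[
\tfrac{1}{2}\bigl(\mathrm{P}(C<T\mid D=U)+\mathrm{P}(C\geq T\mid D=\mathcal{G}_\sigma)\bigr)>\tfrac{1}{2},
\]
which rearranges to the strict inequality
\[
\mathrm{P}(C\geq T\mid D=\mathcal{G}_\sigma)>\mathrm{P}(C\geq T\mid D=U).
\]
So the whole task reduces to comparing the two laws of $C$ at the specific integer threshold $T$ set by Algorithm~\ref{USVAAlg_Fq}.

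First, I would decompose $C=C'+C^*$, where $C'$ counts the samples lying in $[-q/4,q/4)$ across the $\ell(q-1)$ pairs $(i,g)$ with $g\neq g^*$, and $C^*$ counts them across the $\ell$ pairs $(i,g^*)$ associated with the correct secret guess. By Lemma~\ref{SumProb}, $C'\sim B(\ell(q-1),\tfrac{1}{2}\pm\tfrac{1}{2q})$ under both hypotheses; under $D=U$ the variable $C^*$ also follows $B(\ell,\tfrac{1}{2}\pm\tfrac{1}{2q})$, whereas under $D=\mathcal{G}_\sigma$ it follows $B(\ell,\tfrac{1}{2}+\delta)$ by the very definition of $\delta$, and $C'$, $C^*$ are independent under each hypothesis. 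Writing $\mu_U:=\mathbb{E}[C\mid U]$ and $\mu_P:=\mathbb{E}[C\mid\mathcal{G}_\sigma]$, one gets
\[
\mu_P-\mu_U=\ell\bigl(\delta\mp\tfrac{1}{2q}\bigr),
\]
which is strictly positive by the hypothesis $\pm\tfrac{1}{2q}<\delta$. Since $T=\lceil\mu_P\rceil$, this places $T$ strictly above $\mu_U$ and within distance $1$ of $\mu_P$.

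Second, I would invoke monotone coupling: because $\tfrac{1}{2}+\delta>\tfrac{1}{2}\pm\tfrac{1}{2q}$, a $B(\ell,\tfrac{1}{2}+\delta)$ variate can be built on top of a $B(\ell,\tfrac{1}{2}\pm\tfrac{1}{2q})$ variate so that the former dominates the latter sample-wise. Keeping the same realization of $C'$, this upgrades to $C\mid\mathcal{G}_\sigma\geq C\mid U$ pointwise, which yields the weak tail inequality $\mathrm{P}(C\geq T\mid\mathcal{G}_\sigma)\geq\mathrm{P}(C\geq T\mid U)$ for every integer $T$.

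The main obstacle is to promote this weak inequality to a strict one at the specific integer $T=\lceil\mu_P\rceil$. The strict positivity of $\delta\mp\tfrac{1}{2q}$ guarantees that the coupled gap $C^*_{\mathcal{G}}-C^*_U$ is positive with positive probability, and because $T$ lies strictly between $\mu_U$ and $\mu_P+1$, the event that $C'$ falls in the band $T-C^*_{\mathcal{G}}\leq C'<T-C^*_U$ has positive probability as well; combining these gives the required strict separation of the tail masses. Pinning this step down quantitatively can be done either through direct binomial point-mass estimates near the means of $C'$, or more smoothly through the normal approximation applied to $B(\ell(q-1),\tfrac{1}{2}\pm\tfrac{1}{2q})$, which places both conditional distributions of $C$ on a common approximate scale with means separated by $\ell(\delta\mp\tfrac{1}{2q})$ and with $T$ landing within $O(1)$ of $\mu_P$. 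Either route delivers the strict inequality and hence the success of Algorithm~\ref{USVAAlg_Fq}.
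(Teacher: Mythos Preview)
Your proposal is correct and follows essentially the same approach as the paper: both decompose $C=C'+C^*$ with $C'$ identically distributed under the two hypotheses, and then exploit that $C^*\sim B(\ell,\tfrac{1}{2}+\delta)$ stochastically dominates $C^*\sim B(\ell,\tfrac{1}{2}\pm\tfrac{1}{2q})$ --- you phrase this via an explicit monotone coupling, the paper via the monotonicity of the binomial CDF $F(\cdot,\ell,p)$ in $p$ applied termwise after conditioning on $C'$. Your extra work on locating $T$ between $\mu_U$ and $\mu_P$ and the band argument for strictness is more careful than what the paper writes, but in fact unnecessary: the paper's termwise comparison $F(T-i-1,\ell,\tfrac{1}{2}\pm\tfrac{1}{2q})>F(T-i-1,\ell,\tfrac{1}{2}+\delta)$ already gives strict inequality for any $T$ with $0\le T-i-1<\ell$ for some $i$ in the support of $C'$, so the specific choice of threshold plays no role in the argument.
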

\begin{proof}
Since by hypothesis, both distributions, namely, uniform and PLWE, are
equiprobable, we need to prove that the following holds:
\begin{multline*}
    \frac{1}{2} \cdot (\mathrm{P}(C < T | D = U) +
                       \mathrm{P}(C \geq T | D = \mathcal{G}_{\sigma})) = \\
    \frac{1}{2} + \frac{1}{2}\cdot (\mathrm{P}(C < T | D = U) -
                                    \mathrm{P}(C < T | D = \mathcal{G}_{\sigma}))
    > \frac{1}{2}.
\end{multline*}
To do so, we will exploit the fact that, if $\frac{1}{2} \pm \frac{1}{2q}$ is
close enough to $\frac{1}{2}$, we get the result.

We compute the second term in the subtraction in terms of the number $C_g$
for all $g\neq g^\ast$, reasoning as above:
\begin{multline*}
\mathrm{P}(C < T | D = \mathcal{G}_\sigma) =
\sum_{i=0}^{\ell q-\ell} \mathrm{P}(C-C_g < T - i | D = \mathcal{G}_{\sigma}) \cdot
\mathrm{P}(C_g = i) \\
= \sum_{i=0}^{\ell q-\ell} F\left(T -i -1, \ell, \frac{1}{2} +
                                                        \delta\right) \cdot
\mathrm{P}\left(B\left(\ell q-\ell, \frac{1}{2} \pm \frac{1}{2q}\right) =
                                                        i\right).
\end{multline*}
Now for the first term we proceed similarly, keeping in mind that we are in
the case of the uniform distribution:
\[
\mathrm{P}(C < T | D = U) =
\sum_{i=0}^{\ell q-\ell} F\left(T -i -1, \ell, \frac{1}{2} \pm
\frac{1}{2q}\right) \cdot \mathrm{P}\left(B\left(\ell q-\ell, \frac{1}{2}
\pm \frac{1}{2q}\right) = i\right).
\]

Exploiting the decreasing monotonicity of the cumulative binomial present in
the first term and the fact that the second term is the same in both
probabilities, we conclude that Algorithm~\ref{USVAAlg_Fq} is successful if
\[
\Delta := \frac{1}{2} + \delta - \left(\frac{1}{2} \pm \frac{1}{2q}\right) =
\delta - \left(\pm \frac{1}{2q}\right) > 0.
\]
\qed
\end{proof}

\subsubsection{Success probability depending on the `distribution ratio'.}
The probability of success is depending on the formula laid out above, which is
based on the following computations:
\begin{enumerate}
\item $\int_{0}^{\infty} \mathcal{G}_{\overline{\sigma}} = \frac{1}{2}$

\item $\int_{0}^{\frac{q}{4}} \mathcal{G}_{\overline{\sigma}} =
       \frac{1}{2}\erf\left(\frac{q}{4\sqrt{2} \cdot \overline{\sigma}}\right)$
\item $\int_{\frac{3\cdot q}{4} + j\cdot q}^{\frac{5\cdot q}{4} + j\cdot q}
       \mathcal{G}_{\overline{\sigma}} =
       \frac{1}{2} \left(\erf\left(q\frac{5/4 + j}{\sqrt{2} \cdot \overline{\sigma}}\right) -
       \erf\left(q\frac{3/4 + j}{\sqrt{2} \cdot \overline{\sigma}}\right) \right)$
\end{enumerate}
where $\erf(x) = \frac{\sqrt{2}}{\pi} \cdot \int_{0}^{x} e^{-t^{2}} \cdot dt$
is the `error function'.
A common value, $r = \frac{q}{\sqrt{2} \cdot \overline{\sigma}}$, appears.
This value, referred to from now on as the \emph{distribution ratio}, reigns
over the probability value and therefore its convergence will be analyzed
based on it.

Since our assumption is $2\overline{\sigma}\geq \frac{q}{4}$, the
\emph{distribution ratio} satisfies $r \in (0, 4\sqrt{2}]$. Therefore, plotting
the function
\[
f(r) = \erf(r/4) + \sum_{j \in \mathbb{N}} \left(\erf(r \cdot
(5/4 + j)) - \erf(r \cdot (3/4 + j)) \right)
\]
and calculating when the condition $f(r) > \frac{1}{2} \pm \frac{1}{2q}$ is satisfied, for
values of $r$ in the interval $(0, 4\sqrt{2}]$, completely characterizes
the PLWE values that give rise to a successful attack.

\subsubsection{Practical instances.}
We provide a number of \textit{toy} examples to demonstrate the applicability
of the \emph{unbounded small values attack}:
\begin{enumerate}
\item
\begin{itemize}
    \item $N = 256$
    \item $q = 3329$.
    \item $\alpha = 3330$, order of $\alpha$ in $\mathbb{F}_q^\ast$ is $r = 2$.
    \item $\sigma = 8$.
    \item $\mathrm{P}(E_i | D = \mathcal{G}_{\sigma}) \approx 0.5983462$.
    \item $\Delta \approx 0.0981960$
\end{itemize}

\item
\begin{itemize}
    \item $N = 256$
    \item $q = 3677$.
    \item $\alpha = 3676$, order of $\alpha$ in $\mathbb{F}_q^\ast$ is $r = 2$.
    \item $\sigma = 8$.
    \item $\mathrm{P}(E_i | D = \mathcal{G}_{\sigma}) \approx 0.6377288$.
    \item $\Delta \approx 0.13759$
\end{itemize}

\item
\begin{itemize}
    \item $N = 256$
    \item $q = 2887$.
    \item $\alpha = 698$, order of $\alpha$ in $\mathbb{F}_q^\ast$ is $r = 3$.
    \item $\sigma = 8$.
    \item $\mathrm{P}(E_i | D = \mathcal{G}_{\sigma}) \approx 0.4999999$.
    \item $\Delta \approx 0.00017$
\end{itemize}

\item
\begin{itemize}
    \item $N = 256$
    \item $q = 4111$.
    \item $\alpha = 1055$, order of $\alpha$ in $\mathbb{F}_q^\ast$ is $r = 3$.
    \item $\sigma = 8$.
    \item $\mathrm{P}(E_i | D = \mathcal{G}_{\sigma}) \approx 0.5000000000009331$.
    \item $\Delta \approx 0.0001216$
\end{itemize}
\end{enumerate}
Regarding application of this attack to practical known instances deployed,
the situation is less likely to be fruitful.

Current PLWE schemes are mostly based upon cyclotomic rings, i.e.,
polynomial rings constructed from a cyclotomic polynomial. The specific
nature of these polynomials (i.e., the fact that their roots are primitive
roots of unity) causes that the order associated with their roots (the
\textit{conductor/index} of the cyclotomic polynomial) will be high enough
so that the attack cannot be computed, due to the size of the value of the
resulting standard deviation of the image Gaussian distribution.

Note that while this will remain the case over finite extensions of degree
$n$ of $\mathbb{F}_q$ (see Section~\ref{s.5}) as the elements considered
there in the cyclotomic case will also be $\frac{N}{gcd(n, N)}$-th primitive
roots of unity, the
use of higher-degree extension can make the attacks more applicable, as
both the order of the elements and the powers involved decrease.

\section{Small Set Attack for roots over finite field extensions of
\texorpdfstring{$\mathbb{F}_q$}{Lg}}\label{s.4}
In \cite{BBDN:2023:CPB} and \cite{BDNB:2023:TBC}, an attack was presented when
there existed a root $\alpha$ of $f(x)$ belonging to a proper $\mathbb{F}_q$
quadratic extension. The idea of that attack was to replace the set of the
likely values for $e(\alpha)$ by the set of their $\mathbb{F}_q$-traces, hence
drastically reducing the size of the small set at the cost of producing just
a decisional attack, instead of a search one.

However, the attack just described suggests a natural extension to settings
where $f(x)$ has a higher-than-quadratic-degree factor. This is actually what
we will do in the coming Subsections for the \emph{small set attack}, and in
Section~\ref{s.5} for the \emph{small values attack}.

Next we revisit traces and also introduce some tools, which will prove
important for the present and the coming Sections.

\subsection{Preliminaries}\label{ss.41}
Recall that $R_q$ is the ring $\mathbb{F} _q[x]/(f(x))$, with $f(x)$ a monic
irreducible polynomial over $\mathbb{Z}[x]$ of degree $N$. To introduce this
new setting, suppose now that the following conditions are met:
\begin{itemize}
\item
There exists $a\in\mathbb{F}_q$ such that the polynomial $x^n - a$ divides
$f(x)$ in $\mathbb{F}_q[x]$ and, therefore, $n < N$.
\item
The polynomial $x^n - a$ is irreducible over $\mathbb{F}_q[x]$, i.e., $f(x)$
has an irreducible factor of degree $n$.
\item
The element $a$ has a small multiplicative order in $\mathbb{F}_q^*$.
\end{itemize}
We choose $\alpha\in\mathbb{F}_{q^n}\setminus\mathbb{F}_q$ such that its minimal
polynomial over $\mathbb{F}_q$ is precisely $x^n - a$.
To pursue the adaptation of the attack described in
\cite{BBDN:2023:CPB} to the higher-degree setting we define now
$$
R_{q,0}=\{p(x)\in R_q: p(\alpha)\in\mathbb{F}_q\}.
$$
Then we have the following
\begin{proposition}\label{dim0}
The set $R_{q,0}$ is a ring and, as an $\mathbb{F}_q$-vector subspace of
$R_q$, has dimension $N-n+1$.
\end{proposition}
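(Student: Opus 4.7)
The plan is to introduce the evaluation-at-$\alpha$ map and obtain both statements by a standard rank-nullity argument.

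First, I would define the evaluation map $\mathrm{ev}_\alpha\colon R_q\to\mathbb{F}_{q^n}$, $p(x)\mapsto p(\alpha)$. This is well-defined on $R_q$ because $x^n-a\mid f(x)$ and $\alpha$ is a root of $x^n-a$, so $f(\alpha)=0$; hence any polynomial representative of the class modulo $f(x)$ yields the same value at $\alpha$. Moreover $\mathrm{ev}_\alpha$ is an $\mathbb{F}_q$-algebra homomorphism. Since $x^n-a$ is irreducible of degree $n$ and is the minimal polynomial of $\alpha$ over $\mathbb{F}_q$, we have $\mathbb{F}_q[\alpha]=\mathbb{F}_{q^n}$, so $\mathrm{ev}_\alpha$ is surjective onto $\mathbb{F}_{q^n}$.

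For the first assertion, observe that $R_{q,0}=\mathrm{ev}_\alpha^{-1}(\mathbb{F}_q)$. Since $\mathbb{F}_q$ is a subring of $\mathbb{F}_{q^n}$ and the preimage of a subring under a ring homomorphism is a subring, $R_{q,0}$ is a subring of $R_q$ (containing $\mathbb{F}_q$, in particular the identity).

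For the dimension, compose $\mathrm{ev}_\alpha$ with the $\mathbb{F}_q$-linear quotient projection $\pi\colon\mathbb{F}_{q^n}\to\mathbb{F}_{q^n}/\mathbb{F}_q$ of $\mathbb{F}_q$-vector spaces. Then $R_{q,0}=\ker(\pi\circ\mathrm{ev}_\alpha)$. The map $\pi\circ\mathrm{ev}_\alpha$ is surjective because $\mathrm{ev}_\alpha$ is, so by rank-nullity
\[
\dim_{\mathbb{F}_q} R_{q,0}=\dim_{\mathbb{F}_q} R_q-\dim_{\mathbb{F}_q}(\mathbb{F}_{q^n}/\mathbb{F}_q)=N-(n-1)=N-n+1,
\]
as required.

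The only non-routine point is the check that the evaluation map descends to $R_q$, which relies precisely on the divisibility $x^n-a\mid f(x)$; once this is in place, the ring structure and dimension count follow immediately from elementary linear algebra, so there is no real obstacle. \qed
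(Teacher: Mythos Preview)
Your proof is correct. Both assertions follow cleanly from the properties of the evaluation map, and the rank--nullity argument is sound once surjectivity of $\mathrm{ev}_\alpha$ onto $\mathbb{F}_{q^n}$ is established (which you do via $\mathbb{F}_q[\alpha]=\mathbb{F}_{q^n}$ and $n<N$).

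The paper takes a more explicit, coordinate-based route: it expands $p(\alpha)$ in the $\mathbb{F}_q$-basis $1,\alpha,\dotsc,\alpha^{n-1}$ using $\alpha^n=a$, obtaining $n-1$ linear equations (on disjoint sets of coefficients of $p$) that cut out $R_{q,0}$, and then reads off the codimension as $n-1$. Your approach packages the same linear-algebraic content abstractly as the kernel of the surjection $\pi\circ\mathrm{ev}_\alpha$. The main practical difference is that the paper's computation is carried out under the simplifying assumption $n\mid N$, whereas your rank--nullity argument needs no such hypothesis and gives the dimension $N-n+1$ directly. On the other hand, the paper's explicit expansion of $p(\alpha)$ is reused later (e.g.\ in computing $\Tr(e(\alpha))$), so their hands-on approach feeds into subsequent arguments.
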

\begin{proof}
The fact that it is a subring of $R_q$ is obvious. As for the dimension we
observe that, given $p(x)=\sum_{i=0}^{N-1}p_ix^i$, we have that $p(x)\in
R_{q,0}$ if and only if
\[
p(\alpha) =          \sum_{j=0}^{N^\prime-1}a^jp_{nj} +
               \alpha\sum_{j=0}^{N^\prime-1}a^jp_{nj+1} +
\cdots + \alpha^{n-1}\sum_{j=0}^{N^\prime-1}a^jp_{nj+(n-1)}\in\mathbb{F}_q,
\]
with $N^\prime=\frac{N}{n}$, assuming for simplicity that $n\mid N$. Hence
$p(\alpha)\in\mathbb{F}_q$ if and only if
\begin{eqnarray*}
\sum_{j=0}^{N^\prime-1} a^jp_{nj + 1} & = & 0, \\
\sum_{j=0}^{N^\prime-1} a^jp_{nj + 2} & = & 0, \\
\cdots\\
\sum_{j=0}^{N^\prime-1} a^jp_{nj + n-1} & = & 0, \\
\end{eqnarray*}
and thus the result follows.
\qed
\end{proof}

We introduce now traces in this setting and show some interesting properties.
Let us begin with the following
\begin{proposition}\label{cvgen}
Suppose that $x^n-a$ divides $f(x)$ over $\mathbb{F}_q[x]$ for some
$n<N$. Assume that $x^n-a$ is irreducible over $\mathbb{F}_q[x]$ and let
$\alpha\in\mathbb{F}_{q^n}\setminus \mathbb{F}_q$ be a root of $f(x)$. Then,
for each $j\geq 1 $ such that $j$ is not a multiple of $n$ we have that
$\Tr(\alpha^j) = 0$.
\end{proposition}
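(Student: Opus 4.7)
The plan is to exploit the Galois-theoretic formula $\Tr(\beta)=\sum_{i=0}^{n-1}\beta^{q^i}$ for the trace $\mathbb{F}_{q^n}\to\mathbb{F}_q$, together with the symmetry of the roots of $x^n-a$. First I would reduce to the range $1\le j\le n-1$: writing $j=kn+r$ with $0<r<n$ and using $\alpha^n=a\in\mathbb{F}_q$, one gets $\alpha^j=a^k\alpha^r$, so by $\mathbb{F}_q$-linearity of $\Tr$ we have $\Tr(\alpha^j)=a^k\Tr(\alpha^r)$. Thus it suffices to establish the claim for $r\in\{1,\dots,n-1\}$.

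Next I would describe the Frobenius orbit of $\alpha$ explicitly. Since $\mathbb{F}_q$ is perfect and $x^n-a$ is irreducible, it is separable and so has $n$ distinct roots, all of which lie in its splitting field $\mathbb{F}_{q^n}$. Any two such roots differ by a multiplicative $n$-th root of unity, so $\mathbb{F}_{q^n}$ contains a primitive $n$-th root of unity $\zeta$, and the roots of $x^n-a$ are exactly $\{\alpha\zeta^k:0\le k\le n-1\}$. On the other hand, the Frobenius conjugates $\{\alpha^{q^i}:0\le i\le n-1\}$ are also precisely these $n$ roots of the minimal polynomial, so the two parametrizations coincide as sets. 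Consequently,
\[
\Tr(\alpha^j)=\sum_{i=0}^{n-1}\alpha^{jq^i}=\sum_{k=0}^{n-1}(\alpha\zeta^k)^j=\alpha^j\sum_{k=0}^{n-1}\zeta^{jk}.
\]

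When $n\nmid j$ one has $\zeta^j\neq 1$, so the geometric sum $\sum_{k=0}^{n-1}(\zeta^j)^k=\frac{\zeta^{jn}-1}{\zeta^j-1}$ vanishes, giving $\Tr(\alpha^j)=0$ as desired. The only genuinely subtle point is ensuring that a full set of $n$ distinct $n$-th roots of unity actually lives inside $\mathbb{F}_{q^n}$; this is where separability of the irreducible polynomial $x^n-a$ (and hence the coprimality of $n$ with the characteristic of $\mathbb{F}_q$) is essential. Once that is in hand, the rest is symbol manipulation and a standard geometric-series identity.
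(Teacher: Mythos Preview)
Your proof is correct, but it takes a genuinely different route from the paper's.

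Both arguments begin with the same reduction: writing $j=kn+r$ with $0<r<n$ and using $\alpha^n=a\in\mathbb{F}_q$ to reduce to the case $1\le j\le n-1$. From there the paths diverge. The paper treats $\Tr(\alpha^j)=\sum_i\alpha_i^j$ as the power-sum symmetric polynomial $p_j$ evaluated at the conjugates, invokes the fundamental theorem of symmetric polynomials to write $p_j=g(s_1,\dots,s_n)$, and then observes that since $p_j$ is homogeneous of degree $j<n$ the expression $g$ cannot involve $s_n$; as all the intermediate coefficients of $x^n-a$ vanish, $s_1=\cdots=s_{n-1}=0$ and hence $p_j=0$. Your approach instead parametrizes the conjugates explicitly as $\alpha\zeta^k$ via a primitive $n$-th root of unity and collapses the trace to $\alpha^j\sum_k\zeta^{jk}$, which vanishes by the geometric-series identity.

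Your route is arguably more elementary and transparent for this specific polynomial, and you correctly isolate the one delicate point: the existence of $n$ distinct $n$-th roots of unity in $\mathbb{F}_{q^n}$, which follows because irreducibility over the perfect field $\mathbb{F}_q$ forces separability (indeed, if $p\mid n$ then $x^n-a=(x^{n/p}-b)^p$ for $b=a^{1/p}\in\mathbb{F}_q$, contradicting irreducibility). The paper's symmetric-function argument avoids mentioning roots of unity altogether and would adapt more readily to other sparse minimal polynomials, but for $x^n-a$ both approaches are equally clean.
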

\begin{proof}
First, assume that $1\leq j<n$ and let $\alpha_1,\dotsc,\alpha_n$ be the
conjugates of $\alpha_1:=\alpha$ in $\mathbb{F}_{q^n}$. We have
\[
\Tr(\alpha^j)=\sum_{j=1}^n\alpha_i^j,
\]
which is a symmetric expression evaluated at the conjugated roots. Moreover,
the symmetric polynomial $\Tr_j(x_1,\dotsc,x_n):=\sum_{i=1}^nx_i^j$ is
homogeneous of degree $j$. Hence, we can express
\[
\Tr_j(x_1,\dotsc,x_n)=g(s_1,\dotsc,s_n),
\]
where $g(x_1,\dotsc,x_n)$ is some polynomial and $s_i(x_1,\dotsc,x_n)$ is the
$i$-th elementary symmetric polynomial in the indeterminates $x_1,\dotsc,x_n$
(see \cite[1.12]{ST:2002:ANT}). Since the $\Tr_j$ has no independent term we
have that $g(0,\dotsc,0)=0$ and moreover, since $j<n$, $g(x_1,\dotsc,x_n)$ does
not contain monomial terms divisible by $x_n$. Finally, again by Cardano-Vieta,
since all the coefficients of $x^n-a$ vanish except for the first and the
last, we have that $s_i(\alpha_1,\dotsc,\alpha_n)=0$ for $1\leq i<n$ so that
$\Tr(\alpha^j)=0$.

Now, for $j>n$ not a multiple of $n$ we can write $j=tn+k$ with $1\leq k<n$ and
hence
$$
\Tr(\alpha^j)=a^t\Tr(\alpha^k)=0.
$$
\qed
\end{proof}

Another important feature of the trace operator is that it respects the
distribution of the candidate values that will be considered under the attacks:
\begin{proposition}\label{TraceDistHigher}
The trace operator respects the distribution of the tentative error values
$b(\alpha) - g\cdot a(\alpha)$, for any fixed $g \in \mathbb{F}_{q^n}$.
\end{proposition}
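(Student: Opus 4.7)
The plan is to separate the analysis according to whether the sample $(a,b)$ is drawn uniformly from $R_q^2$ or from the PLWE distribution, and to check in each scenario that the trace map $\Tr\colon \mathbb{F}_{q^n}\to \mathbb{F}_q$ sends the tentative error $b(\alpha) - g\cdot a(\alpha)$ to a distribution of the same qualitative type: uniform is taken to uniform, and a PLWE tentative error is taken to a lower-dimensional sample of an analogous PLWE-type distribution.

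First I would consider the uniform case. Since $a(x),b(x)$ are uniformly random in $R_q$ and the evaluation at $\alpha\in\mathbb{F}_{q^n}\setminus\mathbb{F}_q$ is a surjective $\mathbb{F}_q$-linear map from $R_q$ onto $\mathbb{F}_{q^n}$ (because $x^n-a\mid f(x)$ is irreducible and $\alpha$ is a root), the evaluations $a(\alpha)$ and $b(\alpha)$ are independent uniform variables over $\mathbb{F}_{q^n}$. By Lemma~\ref{SumUnifGen}, $b(\alpha) - g\cdot a(\alpha)$ is then uniform on $\mathbb{F}_{q^n}$ for every fixed $g$. Finally, since $\Tr$ is a surjective $\mathbb{F}_q$-linear map whose fibers all have cardinality $q^{n-1}$, the image $\Tr(b(\alpha) - g\cdot a(\alpha))$ is uniform on $\mathbb{F}_q$.

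Next I would treat the PLWE case. Writing $b(x) = a(x)s(x) + e(x)$ one has
\[
b(\alpha) - g\cdot a(\alpha) = (s(\alpha) - g)\, a(\alpha) + e(\alpha).
\]
For $g \neq s(\alpha)$, the scalar $s(\alpha) - g$ is a nonzero element of $\mathbb{F}_{q^n}$, so $(s(\alpha)-g)\,a(\alpha)$ is uniform in $\mathbb{F}_{q^n}$ and, adding the independent $e(\alpha)$, remains uniform by Lemma~\ref{SumUnifGen}; hence its trace is uniform on $\mathbb{F}_q$ exactly as above. For $g = s(\alpha)$, the tentative error coincides with $e(\alpha)$, and using Proposition~\ref{cvgen} to kill the traces of $\alpha^j$ whenever $n\nmid j$, the expansion $e(\alpha)=\sum_j e_j\alpha^j$ collapses under $\Tr$ to a fixed $\mathbb{F}_q$-linear combination of those error coefficients with $n\mid j$, and so inherits a discrete Gaussian-like law of variance controlled by the original PLWE parameter and the number of surviving terms.

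The main obstacle is making precise what the statement means by \emph{respecting the distribution}. My reading, and what is actually required for the trace-based attacks of Section~\ref{s.4} to work, is exactly the above dichotomy: under $\Tr$, the uniform-vs-PLWE distinction on $\mathbb{F}_{q^n}$ passes faithfully to $\mathbb{F}_q$, so that every distinguisher designed in the $\mathbb{F}_q$-root setting of Section~\ref{s.3} can be transported verbatim to the $\mathbb{F}_{q^n}$-root setting by pre-composing the tentative error with the trace map.
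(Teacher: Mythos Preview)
Your argument is correct and reaches the same conclusion as the paper, but the route is somewhat different. The paper expands $\Tr(b(\alpha)-g\,a(\alpha))$ directly as an $\mathbb{F}_q$-linear combination of the coefficients $a_j,b_j\in\mathbb{F}_q$ (using $\Tr(\alpha^j)\in\mathbb{F}_q$ and $\Tr(\alpha^j g)\in\mathbb{F}_q$) and then invokes Lemma~\ref{SumUnifGen} at the $\mathbb{F}_q$ level. You instead first argue that $a(\alpha),b(\alpha)$ are uniform in $\mathbb{F}_{q^n}$ via surjectivity of the evaluation map, apply Lemma~\ref{SumUnifGen} over $\mathbb{F}_{q^n}$, and only then push down to $\mathbb{F}_q$ through the surjective trace. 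Your approach is a little more conceptual and makes the role of the two surjections (evaluation and trace) explicit; the paper's approach is more hands-on and avoids ever naming $\mathbb{F}_{q^n}$-uniformity as an intermediate step. For the correct-guess case you invoke Proposition~\ref{cvgen} to isolate the surviving terms, which the paper defers to the discussion around Equation~\eqref{eq.4}; this is fine and arguably clearer.

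One minor quibble: when you write that $(s(\alpha)-g)\,a(\alpha)+e(\alpha)$ ``remains uniform by Lemma~\ref{SumUnifGen}'', note that Lemma~\ref{SumUnifGen} as stated assumes all summands are uniform, whereas $e(\alpha)$ is not. The conclusion is of course still true (uniform plus an independent variable is uniform, by the same one-line computation), but the citation is slightly off.
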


\begin{proof}
It is required to see that the traces of the evaluated tentative error
polynomials follow the same distributions as the case $\alpha \in \mathbb{F}_q$.

Define $a(x)=\sum_{i=0}^{N-1}a_ix^i$, $b(x)=\sum_{i=0}^{N-1}b_ix^i$. If the
samples are uniform, i.e., $a_i$, $b_i$ are uniform in $\mathbb{F}_q$
then, given $g \in \mathbb{F}_{q^n}$,
\[
\Tr(b(\alpha) - g \cdot a(\alpha)) = \sum_{j = 0}^{N-1} b_j \cdot \Tr(\alpha^j)
- \Tr\left(\sum_{j = 0}^{N-1} a_j \cdot \alpha^j \cdot g\right) = \sum_{j = 0}
^{N-1} b_j \cdot \Tr(\alpha^j) - \sum_{j = 0}^{N-1} a_j \cdot \Tr(\alpha^j
\cdot g),
\]
which is the sum of two linear combination of independent uniform distributions
in $\mathbb{F}_q$ and coefficients in $\mathbb{F}_q$ and, hence, by
Lemma~\ref{CorSumProb}, is a uniform distribution in $\mathbb{F}_q$.

Define $g^\ast=s(\alpha)$. If the samples are PLWE, then for the case $g =
g^\ast$, we have:
\[
\Tr(b(\alpha) - g^\ast\cdot a(\alpha)) = \Tr(e(\alpha)) = \sum_{j=0}^{N-1}
e_j \cdot \Tr(\alpha^j),
\]
which is the weighted (by $\Tr(\alpha^j) \in \mathbb{F}_q$) sum of (at most)
$N$ Gaussian random variables $e_j \in \mathbb{F}_q$ of mean $0$ and variance
$\sigma^2$ and, therefore, is a Gaussian distribution of mean $0$ and a certain
variance $\overline{\sigma}^2$ in $\mathbb{F}_q$.

Conversely, for the case $g \neq g^\ast$,
\[
\Tr(b(\alpha) - g \cdot a(\alpha)) = \Tr(e(\alpha) + (s(\alpha) - g) \cdot
a(\alpha)) = \Tr(e(\alpha)) + \sum_{j = 0}^N a_j \cdot \Tr(\alpha^j \cdot
(s(\alpha) - g)),
\]
which is the sum of a Gaussian distribution of mean 0 and a variance
$\overline{\sigma}^2$ and a uniform distribution in $\mathbb{F}_q$, and
therefore the result follows.
\qed
\end{proof}

\subsection{Layout of the generalized attack}
We present now the generalized attack divided into two stages, each one of them
giving rise to a specific algorithm.

The first stage is carried out by Algorithm~\ref{SSAAlg_Trace}, whose input is
a set of samples $S$ coming from $R_{q,0}\times R_q$ and allowing one to
distinguish whether such set is drawn either from the uniform or from the PLWE
distribution. In particular, the outcome of the Algorithm will be
one out of three possible results: \textbf{PLWE}, which means
that the samples come from a PLWE distribution; \textbf{NOT PLWE}, which means
the samples come from the uniform distribution; or \textbf{NOT ENOUGH
SAMPLES}, which means that the number of samples in the set is not enough to
provide a definitive conclusion.

\begin{figure}[ht]
\centering
\begin{tabular}[c]{ll}
\hline
\textbf{Input:} & A set of $M$ samples $S=\{(a_i(x),b_i(x))\}_{i=1}^M\in
R_{q,0}\times R_q$ \\
 & A look-up table $\Sigma$ of values appearing in $\frac{1}{n}
\Tr(b(\alpha) - a(\alpha)\cdot s(\alpha))$ \\
 & with probability $\geq p_0^r$ \\
\textbf{Output:} & \textbf{PLWE},\\
                 & or \textbf{NOT PLWE},\\
                 & or \textbf{NOT ENOUGH SAMPLES}\\
\hline
\end{tabular}

\medskip

\begin{itemize}
\item $G:=\emptyset$
\item \texttt{\emph{for}} $g\in \mathbb{F}_q$ \texttt{\emph{do}}
	\begin{itemize}
	\item \texttt{\emph{for}} $(a_i(x),b_i(x))\in S$  \texttt{\emph{do}}
		\begin{itemize}
		\item \texttt{\emph{if}} $\frac{1}{n}\left(\Tr(b_i(\alpha))
                                -a_i(\alpha)g\right)\notin\Sigma$
             \texttt{\emph{then next}} $g$
		\end{itemize}
	\item $G:=G\cup\{g\}$
	\end{itemize}
\item \texttt{\emph{if}} $G=\emptyset$ \texttt{\emph{then return}} \textbf{NOT
PLWE}
\item \texttt{\emph{if}} $|G|=1$ \texttt{\emph{then return}} \textbf{PLWE}
\item \texttt{\emph{if}} $|G|>1$ \texttt{\emph{then return}} \textbf{NOT ENOUGH
SAMPLES}
\end{itemize}
\hrule
\caption{Attack based on the size of the set of possible error values over
higher-degree extensions on $R_{q,0}$}
\label{SSAAlg_Trace}
\end{figure}

Observe, then, that if $(a(x),b(x)=a(x)s(x)+e(x))\in R_{q,0}\times R_q$ is a
PLWE sample, evaluating at $\alpha$ we have that
$\Tr(b(\alpha)-a(\alpha)s(\alpha)) = \Tr(e(\alpha))$. Computing the latter
\begin{equation}\label{eq.4}
\Tr(e(\alpha)) = \sum_{i=0}^{N-1} e_i \cdot \Tr(\alpha^i) =
\sum_{j=0}^{N^\prime-1} e_{nj} \cdot \Tr(\alpha^{nj}) =
n \cdot \sum_{j=0}^{N^\prime-1} e_{nj} \cdot a^j,
\end{equation}
where $N^\prime = \frac{N}{n}$ and, for simplicity, $n\mid N$.
Assuming that $r$ is the multiplicative order of $a$ in $\mathbb{F}_q^*$,
let us define $N^{\prime\prime} = \frac{N^\prime}{r}$ and, again for
simplicity, suppose that $r\mid N^\prime$. Thus we eventually obtain
\begin{equation}\label{partida3}
\Tr(e(\alpha)) = n
\sum_{i=0}^{r-1} a^i\cdot\sum_{j=0}^{N^{\prime\prime}-1} e_{n(jr + i)}.
\end{equation}

Now, two remarks are in order.
\begin{enumerate}
\item
Since $a(x)\in R_{q,0}$, this means that $a(\alpha)\in\mathbb{F}_q$, by
definition. Then
\[
\Tr(b(\alpha) - a(\alpha)\cdot s(\alpha)) = \Tr(b(\alpha)) -
a(\alpha)\Tr(s(\alpha)).
\]
Remark that in the previous expression the only unknown is precisely
$\Tr(s(\alpha))\in\mathbb{F}_q$. This fact allows our algorithm to loop over
$\mathbb{F}_q$ rather than over the full ring $R_q$.
\item
For fixed values $n$ and $r$, let us define $e_i^\ast=\sum_{j=0}^
{N^{\prime\prime}-1} e_{n(jr+i)}$, $0\leq i\leq r-1$, so that
Equation~\eqref{partida3} now reads
\begin{equation}\label{eq.5}
\Tr(e(\alpha)) = n\sum_{i=0}^{r-1} a^i e^\ast_i.
\end{equation}
Remark that each sum $e_i^\ast$ of coefficients can be seen as a sample from
a centered
discrete Gaussian of standard deviation less than or equal to $\sqrt{N^{\prime
\prime}}\sigma$ and, henceforth, we can list those elements that occur with
probability beyond $p_0$, which are at most $4\sqrt{N^{\prime\prime}}\sigma+1$.
\end{enumerate}
With these at hand, we can construct a look-up table
$\Sigma$ for all the possible values in Equation~\eqref{partida3} that happen
with probability beyond $p_0^r$. Such look-up table is defined as follows:
\[
\Sigma:=\left\{\sum_{i=0}^{r-1}a^ix_i:\; |x_i|\leq 2\sqrt{N^{\prime\prime}}
\sigma, x_i\in\mathbb{Z}\right\}.
\]
Observe that
\begin{equation}
|\Sigma|\leq \left(4\sqrt{N^{\prime\prime}}\sigma+1\right)^r.
\label{smregioncard}
\end{equation}

The second stage of the attack consists essentially in sampling from the oracle
presented to the adversary until they obtain a suitable number of samples from
$R_{q,0}\times R_q$. We provide Algorithm~\ref{alg4} in order to achieve this
task.

\begin{figure}[ht]
\centering
\begin{tabular}[c]{ll}
\hline
\textbf{Input:} & A distribution $X$ over $R_q^2$ \\
\textbf{Output:} & A sample $(a(x),b(x))\in R_{q,0}\times R_q$ \\
\hline
\end{tabular}

\medskip

\begin{itemize}
\item $count:=0$
\item \texttt{\emph{do}}
	\begin{itemize}
	\item	$(a(x),b(x)) \overset{X}{\leftarrow} R_q^2$
	\item $count := count+1$
	\end{itemize}
\item \texttt{\emph{until}} $a(x)\in R_{q,0}$
\item[]
\item \texttt{\emph{return}} $(a(x),b(x)), count$
\end{itemize}
\hrule
\caption{Random variable $X_0$}
\label{alg4}
\end{figure}

Algorithm~\ref{alg4} returns a pair $(a(x),b(x))\in R_{q,0}\times R_q$ by
repeatedly sampling from the original distribution $X$ keeping also track of the
number of invocations to $X$. Identifying this algorithm with a random variable
$X_0$ with values on $R_{q,0}\times R_q$, it is proved in \cite{BBDN:2023:CPB}
that $X$ is uniform (resp. PLWE) if and only if $X_0$ is uniform (resp. PLWE).

Notice that the variable $count$ stores the expected number of times one has to
run $X$ before succeeding. Based on a similar statement found in
\cite{BBDN:2023:CPB}, we have now the next
\begin{proposition}\label{p.37}
Notations as before, the variable $count$ is distributed as a
geometric random variable of first kind with success probability $q^{-(n-1)}$.
In particular the expected number of times is just its mean $E[count]=q^{n-1}$
\end{proposition}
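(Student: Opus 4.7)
The plan is a short dimension-count followed by a standard geometric-distribution argument. The key input will be Proposition~\ref{dim0}, which tells us that $R_{q,0}$ is an $\mathbb{F}_q$-vector subspace of $R_q$ of dimension $N-n+1$; consequently $|R_{q,0}|/|R_q|=q^{N-n+1}/q^N = q^{-(n-1)}$. So the probability that a single draw $(a(x),b(x))$ produced by Algorithm~\ref{alg4} passes the membership test $a(x)\in R_{q,0}$ equals $q^{-(n-1)}$, provided the marginal distribution of $a(x)$ under $X$ is uniform on $R_q$.

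First I would verify that this uniformity condition holds for both distributions of interest. For the uniform distribution on $R_q^2$ it is immediate. For the PLWE distribution, the first component $a(x)$ is sampled uniformly in $R_q$ by definition, so again $\mathrm{P}(a(x)\in R_{q,0}) = q^{-(n-1)}$. This is the only place where one needs to be careful; the rest is purely combinatorial.

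Next, the \texttt{do-until} loop in Algorithm~\ref{alg4} produces i.i.d. trials of the distribution $X$, each independently succeeding (i.e., landing in $R_{q,0}\times R_q$) with probability $p := q^{-(n-1)}$. The variable $count$ records the index of the first successful trial, which is by definition a geometric random variable of first kind with parameter $p$. Therefore
\[
\mathrm{P}(count = k) = (1-p)^{k-1} p, \qquad k\geq 1,
\]
and its mean is
\[
E[count] = \frac{1}{p} = q^{n-1}.
\]

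There is no real obstacle beyond being explicit about the marginal uniformity of $a(x)$ and invoking Proposition~\ref{dim0} to obtain the exact cardinality $|R_{q,0}|=q^{N-n+1}$. Once these are in place the result is standard. \qed
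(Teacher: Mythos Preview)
Your proof is correct and follows essentially the same approach as the paper: invoke Proposition~\ref{dim0} for the dimension count $|R_{q,0}|/|R_q|=q^{-(n-1)}$ and then recognize the \texttt{do-until} loop as producing a geometric variable. If anything, you are slightly more careful than the paper in explicitly checking that the marginal of $a(x)$ is uniform under both the uniform and the PLWE oracle, and in using the correct mean $1/p=q^{n-1}$ for the first-kind geometric (the paper's own proof writes $(1-\theta)/\theta=q^{n-1}-1$, which is the shifted variant and mildly inconsistent with the stated proposition).
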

\begin{proof}
Since $\mathrm{dim}_{\mathbb{F}_q}\left(R_q\right)=N$ and $\mathrm{dim}_
{\mathbb{F}_q}\left(R_{q,0}\right)=N-n+1$, then the probability that an element
taken uniformly from $R_q^2$ actually belongs to $R_{q,0}\times R_q$ is
precisely $q^{N-n+1}/q^N = q^{1-n}$. Now the average of a
geometric random variable of the first kind of success probability $\theta$ is
$\frac{1-\theta}{\theta}$ which is our case equals $q^{n-1}-1$ (see for instance
\cite[Appendix A]{BP:2015:MAI}).
\qed
\end{proof}

Since, as recommended, $q$ should be of order $\mathcal{O}(N^2)$ for the PLWE
cryptosystem to be feasible, the expected number of calls to PLWE oracle to
grant that our attack succeeds is of order $\mathcal{O}(N^{2(n-1)})$.

Note that the setting presented allows, theoretically, to direct the laid-out
attacks to arbitrary-degree extensions without any further restrictions. Despite
this fact, it is clear that the attacks just described rely heavily on the
degree of the extension and become almost infeasible for $n$ higher than $4$.

\subsection{Analysis of the algorithm}
The following results establish the reliability of
Algorithm~\ref{SSAAlg_Trace}, over any arbitrary degree extension. We will visit
in turn the cases where the Gaussian distributions are either truncated to width
$2\sigma$ or not truncated.

Let us begin with the case where Gaussian distributions are truncated. Then, we
have the following
\begin{proposition}\label{SSA_PostProb_Trunc_Trace}
Let $\Sigma$, $M$ and $r$ be defined as above, and assume $|\Sigma|<q$. Then:
\begin{enumerate}
\item
If Algorithm~\ref{SSAAlg_Trace} returns \textbf{PLWE} or \textbf{NOT ENOUGH SAMPLES},
then the samples are \textbf{PLWE} with probability at least $1-q\left(\frac
{|\Sigma|}{q}\right)^M$.
\item
If Algorithm~\ref{SSAAlg_Trace} returns \textbf{NOT PLWE}, then the samples are
uniform with probability $1$.
\end{enumerate}
\end{proposition}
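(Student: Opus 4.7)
The plan is to mirror the proof of Proposition~\ref{SSA_PostProb_Trunc_Fq} step by step, carrying each ingredient across to the trace-composed setting of Algorithm~\ref{SSAAlg_Trace}. With the equiprobable prior on $D\in\{\mathcal{G}_\sigma,U\}$, Bayes' theorem rewrites both items in terms of $\mathrm{P}(E\mid D=\mathcal{G}_\sigma)$ and $\mathrm{P}(E\mid D=U)$, where $E$ denotes the event ``the algorithm returns \textbf{PLWE} or \textbf{NOT ENOUGH SAMPLES}''; item~2 will then drop out of item~1 by complementation, exactly as in the $\mathbb{F}_q$-case.

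For item~1 I would first argue that $\mathrm{P}(E\mid D=\mathcal{G}_\sigma)=1$. Since $a_i\in R_{q,0}$ we have $a_i(\alpha)\in\mathbb{F}_q$, so $\Tr(a_i(\alpha)s(\alpha))=a_i(\alpha)\Tr(s(\alpha))$, and for the guess $g=\Tr(s(\alpha))\in\mathbb{F}_q$ the quantity tested by the algorithm collapses to $\frac{1}{n}\Tr(e_i(\alpha))$ (cf.~Equation~\eqref{eq.4}); in the truncated regime ($p_0=1$) this value lies in $\Sigma$ by the very construction of the look-up table, so the correct guess is never eliminated and $G\neq\emptyset$. I would then bound $\mathrm{P}(E\mid D=U)$ by a union bound over the $q$ possible values of $g$. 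The only non-routine step is to verify that, when $(a_i,b_i)$ is uniform on $R_{q,0}\times R_q$ (the type of distribution guaranteed by Algorithm~\ref{alg4}), the variable $\frac{1}{n}(\Tr(b_i(\alpha))-a_i(\alpha)g)$ is uniform in $\mathbb{F}_q$. This is essentially the content of Proposition~\ref{TraceDistHigher}: $\Tr(b_i(\alpha))=\sum_j b_{i,j}\Tr(\alpha^j)$ is a non-trivial $\mathbb{F}_q$-linear combination of uniform coordinates (Lemma~\ref{SumUnifGen} applies, as $\Tr(1)=n\not\equiv 0\pmod q$), subtracting the independent $\mathbb{F}_q$-valued term $a_i(\alpha)g$ preserves uniformity, and rescaling by $n^{-1}$ is a bijection of $\mathbb{F}_q$. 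Independence across the $M$ samples then gives $\mathrm{P}(U_g)=(|\Sigma|/q)^M$, hence $\mathrm{P}(E\mid D=U)\leq q(|\Sigma|/q)^M$; plugging these into Bayes' formula yields the announced bound $1-q(|\Sigma|/q)^M$.

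Item~2 is then immediate: letting $E'$ be the complement of $E$, the step above gives $\mathrm{P}(E'\mid D=\mathcal{G}_\sigma)=1-1=0$, whence Bayes forces $\mathrm{P}(D=U\mid E')=1$. The main (and in fact only) step that goes beyond the $\mathbb{F}_q$-proof of Proposition~\ref{SSA_PostProb_Trunc_Fq} is the uniformity check just highlighted: one must confirm that conditioning $a_i$ to lie in the strict subspace $R_{q,0}$ does not destroy the uniformity of $\Tr(b_i(\alpha))-a_i(\alpha)g$. This is handled cleanly by the independence of $a_i$ and $b_i$ together with the marginal uniformity of $b_i$ on $R_q$, so no genuinely new difficulty arises.
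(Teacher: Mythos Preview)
Your proposal is correct and follows essentially the same route as the paper's own proof: Bayes' theorem with the equiprobable prior, the observation that the correct guess $g^\ast$ forces $\frac{1}{n}\Tr(e_i(\alpha))\in\Sigma$ with probability~$1$ in the truncated regime, and the union bound on $\mathrm{P}(E\mid D=U)$ using uniformity of the tested quantity via Proposition~\ref{TraceDistHigher}/Lemma~\ref{SumUnifGen}. You are in fact slightly more precise than the paper in two places---you correctly identify the winning guess as $g^\ast=\Tr(s(\alpha))\in\mathbb{F}_q$ (the paper writes $g^\ast=s(\alpha)$, an abuse), and you explicitly address why the restriction $a_i\in R_{q,0}$ does not spoil uniformity, which the paper's citation of Proposition~\ref{TraceDistHigher} leaves somewhat implicit.
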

\begin{proof}
Let us prove both items in turn.
\begin{enumerate}
\item
Denote with $E$ the event ``Algorithm~\ref{SSAAlg_Trace} returns \textbf{PLWE}
or \textbf{NOT ENOUGH SAMPLES}''. Applying Bayes' theorem,
\[
\mathrm{P}(D=\mathcal{G}_\sigma | E) = 1 - \mathrm{P}(D=U|E) =
1 - \frac{\mathrm{P}(E|D=U)}{\mathrm{P}(E|D=U) +
\mathrm{P}(E|D=\mathcal{G}_\sigma)}
\geq 1 - \frac{\mathrm{P}(E|D=U)}{\mathrm{P}(E|D=\mathcal{G}_\sigma)}.
\]
Let $U_g$ be the event that $\frac{1}{n}\Tr(b_i(\alpha)-a_i (\alpha)g)\in\Sigma$, for
all $i=1,\dotsc,M$, i.e., for all samples in $S$. Then
\[
\mathrm{P}(E|D=\mathcal{G}_\sigma) = \mathrm{P}\left(\bigcup_{g=0}^{q-1}
U_g\right) \geq \mathrm{P}(U_{g^\ast}),
\]
where $g^\ast = s(\alpha)$. Observe that $b_i(\alpha)-a_i (\alpha)g^\ast =
e_i(\alpha)$. Since we are dealing with the truncated case, we have
\[
\mathrm{P}(U_{g^\ast}) = \prod_{i=1}^M \mathrm{P}(\frac{1}{n}\Tr(e_i(\alpha))
\in\Sigma) = 1,
\]
because, according to Equation~\ref{eq.5},
$$
\frac{1}{n}\Tr(e_i(\alpha)) = \sum_{j= 0}^{r-1} a^j \cdot e^*_j\;\;
\in \Sigma,\;\;\forall i
$$
and each $e^\ast_j$ is a (truncated) Gaussian random variable.

Moreover, $\mathrm{P}(E|D=U)$ has the same bounds as in the proof of
Proposition~\ref{SSA_PostProb_Trunc_Fq}, since $\frac{1}{n}\Tr(b_i(\alpha)-a_i
(\alpha)g)$ is uniformly random in $\mathbb{F}_q$ by Proposition~$\ref{TraceDistHigher}$.
Combining results,
\[
\mathrm{P}(D=\mathcal{G}_\sigma | E) \geq 1 - \frac{\mathrm{P}(E|D=U)}
{\mathrm{P}(E|D=\mathcal{G}_\sigma)} = 1 -
\mathrm{P}(E|D=U)
\geq 1 - q\left(\frac{|\Sigma|}{q}\right)^M.
\]
\item
Denote again with $E^\prime$ the event ``Algorithm~\ref{SSAAlg_Trace} returns \textbf{NOT
PLWE}''. Applying Bayes' theorem,
\[
\mathrm{P}(D=U|E^\prime) = \frac{\mathrm{P}(E^\prime|D=U)}{\mathrm{P}(E^\prime|D=U) +
\mathrm{P}(E^\prime|D=\mathcal{G}_\sigma)}.
\]
Remark as before that the event $E^\prime$ is complementary to the event $E$,
defined in the previous item. Hence
\[
\mathrm{P}(E^\prime|D=\mathcal{G}_\sigma) = 1 - \mathrm{P}(E|D=\mathcal{G}
_\sigma) = 0.
\]
Consequently,
\[
\mathrm{P}(D=U|E^\prime) = 1.
\]
\end{enumerate}
\qed
\end{proof}

We turn now our attention to the non-truncated case, by stating the following
\begin{proposition}\label{SSA_PostProb_NotTrunc_Trace}
Let $\Sigma$, $M$ and $r$ be defined as above, and assume $|\Sigma|<qp_0^r$. Then:
\begin{enumerate}
\item
If Algorithm~\ref{SSAAlg_Trace} returns \textbf{PLWE} or \textbf{NOT ENOUGH SAMPLES},
then the samples are \textbf{PLWE} with probability at least $1-q\left(\frac
{|\Sigma|}{qp_0^r}\right)^M$.
\item
If Algorithm~\ref{SSAAlg_Trace} returns \textbf{NOT PLWE}, then the samples are
uniform with probability at least $1/2$ for large enough $M$.
\end{enumerate}
\end{proposition}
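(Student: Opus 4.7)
The plan is to mirror the structure of Proposition~\ref{SSA_PostProb_NotTrunc_Fq}, but with evaluations at $\alpha$ replaced by the normalized trace $\frac{1}{n}\Tr(\cdot)$, leveraging the distributional invariance supplied by Proposition~\ref{TraceDistHigher} and the decomposition in Equation~\eqref{eq.5}.

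For item~1, I would denote by $E$ the event ``Algorithm~\ref{SSAAlg_Trace} returns \textbf{PLWE} or \textbf{NOT ENOUGH SAMPLES}'' and by $U_g$ the event that $\frac{1}{n}\Tr(b_i(\alpha)-a_i(\alpha)g)\in\Sigma$ for all $i=1,\dotsc,M$. Applying Bayes' theorem in the same way as in Proposition~\ref{SSA_PostProb_NotTrunc_Fq},
\[
\mathrm{P}(D=\mathcal{G}_\sigma\mid E)\;\geq\;1-\frac{\mathrm{P}(E\mid D=U)}{\mathrm{P}(E\mid D=\mathcal{G}_\sigma)}.
\]
For the numerator, Proposition~\ref{TraceDistHigher} guarantees that under uniform samples the random variable $\frac{1}{n}\Tr(b_i(\alpha)-a_i(\alpha)g)$ is uniform on $\mathbb{F}_q$, so exactly as in the truncated case I obtain $\mathrm{P}(U_g)=(|\Sigma|/q)^M$ and hence $\mathrm{P}(E\mid D=U)\leq q(|\Sigma|/q)^M$, unaffected by the truncation. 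For the denominator I fix $g^{\ast}=\Tr(s(\alpha))$ and use Equation~\eqref{eq.5} to write
\[
\tfrac{1}{n}\Tr(e_i(\alpha))=\sum_{j=0}^{r-1}a^j e^{\ast}_{i,j},
\]
where the $e^{\ast}_{i,j}$ are independent centered Gaussians of standard deviation at most $\sqrt{N''}\sigma$. By the defining property of $p_0$, each factor lies in $[-2\sqrt{N''}\sigma,2\sqrt{N''}\sigma]$ with probability $\geq p_0$, so $\mathrm{P}(U_{g^{\ast}})\geq p_0^{Mr}$ and therefore $\mathrm{P}(E\mid D=\mathcal{G}_\sigma)\geq p_0^{Mr}$. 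Combining the two bounds yields $1-q(|\Sigma|/(qp_0^{r}))^M$ as claimed.

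For item~2, I let $E'$ denote the complement of $E$ and apply Bayes' theorem again:
\[
\mathrm{P}(D=U\mid E')=\frac{\mathrm{P}(E'\mid D=U)}{\mathrm{P}(E'\mid D=U)+\mathrm{P}(E'\mid D=\mathcal{G}_\sigma)}.
\]
From item~1 one has $\mathrm{P}(E'\mid D=\mathcal{G}_\sigma)\leq 1-p_0^{Mr}$ and $\mathrm{P}(E'\mid D=U)\geq 1-q(|\Sigma|/q)^M$, so the ratio exceeds $1/2$ precisely when $1-q(|\Sigma|/q)^M\geq 1-p_0^{Mr}$, which reduces to the inequality $q^{1/M}|\Sigma|/q\leq p_0^{r}$. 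Under the standing assumption $|\Sigma|<qp_0^{r}$ this holds for all sufficiently large $M$, giving the $\geq 1/2$ bound.

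The main obstacle, and the only substantive divergence from the proof of Proposition~\ref{SSA_PostProb_NotTrunc_Fq}, is verifying that the trace step does not spoil either the uniformity in the $D=U$ branch or the Gaussian tail estimate in the $D=\mathcal{G}_\sigma$ branch; the first is resolved by invoking Proposition~\ref{TraceDistHigher}, and the second by the clean $r$-term decomposition in Equation~\eqref{eq.5} that lets the per-sample probability factor as $p_0^r$. Once these two identifications are in place the remainder is a routine Bayesian computation identical to the $\mathbb{F}_q$ case.
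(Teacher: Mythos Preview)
Your proposal is correct and follows essentially the same route as the paper's own proof: Bayes' theorem combined with the bounds $\mathrm{P}(E\mid D=U)\leq q(|\Sigma|/q)^M$ (via Proposition~\ref{TraceDistHigher}) and $\mathrm{P}(E\mid D=\mathcal{G}_\sigma)\geq p_0^{Mr}$ (via the decomposition in Equation~\eqref{eq.5}), and then the complementary computation for item~2. Your identification $g^\ast=\Tr(s(\alpha))$ is in fact more precise than the paper's proof, which writes $g^\ast=s(\alpha)$ even though the loop variable $g$ ranges over $\mathbb{F}_q$.
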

\begin{proof}
Let us prove both items in turn.
\begin{enumerate}
\item
As before, denote with $E$ the event ``Algorithm~\ref{SSAAlg_Trace} returns
\textbf{PLWE} or \textbf{NOT ENOUGH SAMPLES}''. Applying again Bayes' theorem,
\[
\mathrm{P}(D=\mathcal{G}_\sigma | E)
\geq 1 - \frac{\mathrm{P}(E|D=U)}{\mathrm{P}(E|D=\mathcal{G}_\sigma)}.
\]
Let $U_g$ be the event that $\frac{1}{n}\Tr(b_i(\alpha)-a_i (\alpha)g)\in\Sigma$, for
all $i=1,\dotsc,M$, i.e., for all samples in $S$. Then
\[
\mathrm{P}(E|D=\mathcal{G}_\sigma) = \mathrm{P}\left(\bigcup_{g=0}^{q-1}
U_g\right) \geq \mathrm{P}(U_{g^\ast}),
\]
where $g^\ast = s(\alpha)$. Since we are now dealing with the not truncated case, we have
\[
\mathrm{P}(U_{g^\ast}) = \prod_{i=1}^M \mathrm{P}(e_i(\alpha)\in\Sigma) \geq
p_0^{Mr}.
\]
Moreover, $\mathrm{P}(E|D=U)$ is the same as in the truncated case, since the
uniform distribution is not affected by the truncation. Combining results,
\[
\mathrm{P}(D=\mathcal{G}_\sigma | E) \geq 1 - \frac{\mathrm{P}(E|D=U)}
{\mathrm{P}(E|D=\mathcal{G}_\sigma)} \geq 1 -
\frac{\mathrm{P}(E|D=U)}{p_0^{Mr}}
\geq 1 - \frac{q\left(\frac{|\Sigma|}{q}\right)^M}{p_0^{Mr}} = 1 -
q\left(\frac{|\Sigma|}{qp_0^r}\right)^M.
\]
\item
Denote again with $E^\prime$ the event ``Algorithm~\ref{SSAAlg_Trace} returns \textbf{NOT
PLWE}''. Applying Bayes' theorem,
\[
\mathrm{P}(D=U|E^\prime) = \frac{\mathrm{P}(E^\prime|D=U)}{\mathrm{P}(E^\prime|D=U) +
\mathrm{P}(E^\prime|D=\mathcal{G}_\sigma)}.
\]
Remark as before that the event $E^\prime$ is complementary to the event $E$,
defined in the previous item. Hence
\[
\mathrm{P}(E^\prime|D=\mathcal{G}_\sigma) = 1 - \mathrm{P}(E|D=\mathcal{G}
_\sigma) \leq 1 - p_0^{Mr}.
\]
Consequently,
\[
\mathrm{P}(D=U|E^\prime) \geq \frac{\mathrm{P}(E^\prime|D=U)}{\mathrm{P}
(E^\prime|D=U) + 1-p_0^{Mr}},
\]
which is greater than $1/2$ if and only if $\mathrm{P}
(E^\prime|D=U)\geq 1-p_0^{Mr}$. Since
\[
\mathrm{P}(E^\prime|D=U) \geq 1-q\left(\frac{|\Sigma|}{q}\right)^M,
\]
for the conclusion to hold it is enough to take $M$ such that $q^{\frac{1}{M}}\frac{|\Sigma|}{q}\leq
p_0^r$.
\end{enumerate}
\qed
\end{proof}

All told, whenever $\left(4\sqrt{N^{\prime\prime}}\sigma+1\right)^r<qp_0^r$,
we have a decisional attack against the $R_{q,0}\times R_q$ PLWE problem. This
condition already imposes a restriction on $n$, since $r\leq q^n-1$ so that for
the look-up table to be checked in feasible time at each iteration, we must
restrict to roots such that $r$ is \emph{small enough} in relation to $n$,
which is difficult to test unless $n$ has a suitable size. This is one of the
reasons why we focus in the case $n=2,3$ or at most $4$. Nevertheless, from a
theoretical point of view, the presented generalization does not impose
restrictions on $n$.

\subsubsection{Success probabilities.}
Observe that from the discussion above, we can derive the success probability
of Algorithm~\ref{SSAAlg_Trace}. For the truncated case, we have the following
\begin{proposition}\label{SSA_Success_Trunc_Trace}
With the same notations and assumptions as in Proposition~\ref{SSA_PostProb_Trunc_Trace},
\begin{enumerate}
\item
If the samples are PLWE, Algorithm~\ref{SSAAlg_Trace} guesses correctly with
probability 1
\item
If the samples are uniform, Algorithm~\ref{SSAAlg_Trace} guesses correctly with
probability at least $1-q\left(\frac{|\Sigma|}{q}\right)^M$.
\end{enumerate}
\end{proposition}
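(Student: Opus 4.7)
The plan is to observe that both assertions are immediate rephrasings of conditional probabilities that were already bounded inside the proof of Proposition~\ref{SSA_PostProb_Trunc_Trace}, so the work reduces to reading off those bounds in the appropriate direction, exactly parallel to how Proposition~\ref{SSA_Success_Trunc_Fq} was obtained from Proposition~\ref{SSA_PostProb_Trunc_Fq} in Section~\ref{s.3}.

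For item~1, I would note that guessing correctly when $D=\mathcal{G}_\sigma$ means Algorithm~\ref{SSAAlg_Trace} does not return \textbf{NOT PLWE}, i.e., the event $E$ in the previous proof holds. Inside that proof, the correct guess $g^\ast=\Tr(s(\alpha))$ was shown to always satisfy
\[
\tfrac{1}{n}\Tr(b_i(\alpha)-a_i(\alpha)g^\ast) = \tfrac{1}{n}\Tr(e_i(\alpha)) = \sum_{j=0}^{r-1}a^je_j^\ast \in\Sigma,
\]
because under the truncation assumption each $e_j^\ast$ is a truncated Gaussian supported on $[-2\sqrt{N^{\prime\prime}}\sigma,2\sqrt{N^{\prime\prime}}\sigma]$, which is precisely the range defining $\Sigma$. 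Thus $\mathrm{P}(U_{g^\ast})=1$ and so $\mathrm{P}(E\mid D=\mathcal{G}_\sigma)=1$, which is exactly the claim.

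For item~2, I would observe that guessing correctly when $D=U$ means the algorithm returns \textbf{NOT PLWE}, i.e., the complementary event $E^\prime$. The same preceding proof established, via a union bound over $g\in\mathbb{F}_q$ together with Lemma~\ref{SumUnifGen} and Proposition~\ref{TraceDistHigher} (which guarantee that $\tfrac{1}{n}\Tr(b_i(\alpha)-a_i(\alpha)g)$ is uniform in $\mathbb{F}_q$ whenever $D=U$), the bound $\mathrm{P}(E\mid D=U)\leq q\bigl(|\Sigma|/q\bigr)^M$. Taking complements yields $\mathrm{P}(E^\prime\mid D=U)\geq 1-q\bigl(|\Sigma|/q\bigr)^M$, which is the announced lower bound.

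The proof is essentially a two-line corollary, so the only real obstacle is rhetorical: one has to be careful to state that ``guessing correctly'' in each case coincides with exactly the event ($E$ or $E^\prime$) whose conditional probability was already computed, and to match the notation of Proposition~\ref{SSA_PostProb_Trunc_Trace} precisely. No new probabilistic computation, and no new use of the trace structure of Section~\ref{ss.41}, is required beyond what already appears in the preceding proposition.
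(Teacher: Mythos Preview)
Your proposal is correct and follows exactly the paper's approach: both items are read off directly from the conditional probabilities $\mathrm{P}(E\mid D=\mathcal{G}_\sigma)=1$ and $\mathrm{P}(E'\mid D=U)\geq 1-q(|\Sigma|/q)^M$ already established in Proposition~\ref{SSA_PostProb_Trunc_Trace}. One minor notational slip: in your displayed formula you should write $\tfrac{1}{n}\bigl(\Tr(b_i(\alpha))-a_i(\alpha)g^\ast\bigr)$ rather than $\tfrac{1}{n}\Tr\bigl(b_i(\alpha)-a_i(\alpha)g^\ast\bigr)$, matching the algorithm's actual test (the two differ since $a_i(\alpha)g^\ast\in\mathbb{F}_q$ has trace $n\,a_i(\alpha)g^\ast$), but this does not affect the argument.
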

\begin{proof}
The first item is equivalent to computing $\mathrm{P}(E|D=\mathcal{G}
_\sigma)$, which is $1$, according to the proof in
Proposition~\ref{SSA_PostProb_Trunc_Trace}. The second item is not affected by the
truncation so the bound is the same as for the truncated case, namely,
greater or equal to $1-q\left(\frac{|\Sigma|}{q}\right)^M$.
\qed
\end{proof}

For the not truncated case, we have the following
\begin{proposition}\label{SSA_Success_NotTrunc_Trace}
With the same notations and assumptions as in Proposition~\ref{SSA_PostProb_NotTrunc_Trace},
\begin{enumerate}
\item
If the samples are PLWE, Algorithm~\ref{SSAAlg_Trace} guesses correctly with
probability at least $p_0^{Mr}$.
\item
If the samples are uniform, Algorithm~\ref{SSAAlg_Trace} guesses correctly with
probability at least $1-q\left(\frac{|\Sigma|}{q}\right)^M$.
\end{enumerate}
\end{proposition}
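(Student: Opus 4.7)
The plan is to observe that both items are essentially a rephrasing of conditional probabilities that were already bounded inside the proof of Proposition~\ref{SSA_PostProb_NotTrunc_Trace}. Concretely, let $E$ again denote the event that Algorithm~\ref{SSAAlg_Trace} returns \textbf{PLWE} or \textbf{NOT ENOUGH SAMPLES}, and let $E'$ be its complement, i.e. the event that the algorithm returns \textbf{NOT PLWE}. Then ``guessing correctly'' under $D=\mathcal{G}_\sigma$ is exactly the event $E$, while ``guessing correctly'' under $D=U$ is exactly the event $E'$. So item (1) asks for a lower bound on $\mathrm{P}(E\mid D=\mathcal{G}_\sigma)$ and item (2) for a lower bound on $\mathrm{P}(E'\mid D=U)$.

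For item (1), I would reuse the chain of inequalities from the non-truncated posterior proof: setting $g^\ast=s(\alpha)$ and $U_g=\{\tfrac{1}{n}\Tr(b_i(\alpha)-a_i(\alpha)g)\in\Sigma\text{ for all }i\}$, we have $\mathrm{P}(E\mid D=\mathcal{G}_\sigma)\geq \mathrm{P}(U_{g^\ast})=\prod_{i=1}^M \mathrm{P}(\tfrac{1}{n}\Tr(e_i(\alpha))\in\Sigma)$. By Equation~\eqref{eq.5}, each $\tfrac{1}{n}\Tr(e_i(\alpha))$ is the $\Sigma$-shaped combination $\sum_{j=0}^{r-1} a^j e^\ast_j$ of $r$ independent centred Gaussians $e^\ast_j$ of standard deviation $\leq \sqrt{N''}\sigma$; hence each of the $r$ coordinates lies in its $2\sqrt{N''}\sigma$-window independently with probability at least $p_0$, giving $\mathrm{P}(U_{g^\ast})\geq p_0^{Mr}$. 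This produces the claimed $p_0^{Mr}$ bound.

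For item (2), I would invoke the fact that under $D=U$ the truncation hypothesis plays no role. By Proposition~\ref{TraceDistHigher} and Lemma~\ref{SumUnifGen}, $\tfrac{1}{n}\Tr(b_i(\alpha)-a_i(\alpha)g)$ is uniform on $\mathbb{F}_q$ for every $g$, so $\mathrm{P}(U_g)=(|\Sigma|/q)^M$ for each $g$, and the union bound over $g\in\mathbb{F}_q$ gives $\mathrm{P}(E\mid D=U)\leq q(|\Sigma|/q)^M$; passing to the complement yields $\mathrm{P}(E'\mid D=U)\geq 1-q(|\Sigma|/q)^M$, which is exactly the claim.

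There is no genuine obstacle: the only subtlety is bookkeeping — making clear that under $D=\mathcal{G}_\sigma$ the probability we want is a lower bound on $\mathrm{P}(U_{g^\ast})$ (a product over $M$ independent samples, each a product over $r$ independent Gaussian coordinates, explaining the exponent $Mr$), whereas under $D=U$ the probability we want is insensitive to truncation and therefore inherits the bound from the truncated Proposition~\ref{SSA_PostProb_Trunc_Trace}. Nothing new beyond identifying the correct conditional event is needed.
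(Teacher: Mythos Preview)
Your proposal is correct and follows essentially the same approach as the paper: both items are identified with the conditional probabilities $\mathrm{P}(E\mid D=\mathcal{G}_\sigma)$ and $\mathrm{P}(E'\mid D=U)$ already bounded in the proof of Proposition~\ref{SSA_PostProb_NotTrunc_Trace}, with the second bound noted to be insensitive to truncation. You supply slightly more detail on the $p_0^{Mr}$ exponent and the uniformity argument, but the structure is identical.
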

\begin{proof}
The first item is equivalent to computing $\mathrm{P}(E|D=\mathcal{G}
_\sigma)$, which is at least $p_0^{Mr}$, according to the proof in
Proposition~\ref{SSA_PostProb_NotTrunc_Trace}. The second item is not affected by the
truncation so the bound is the same as for the truncated case, namely,
greater or equal to $1-q\left(\frac{|\Sigma|}{q}\right)^M$.
\qed
\end{proof}

\subsubsection{An improvement for a higher number of samples.}
As discussed
in Section~\ref{HNS_SSA}, the extension algorithm presented, which employed
Algorithm~\ref{SSAAlg_Fq} as a sub-process, did not employ any feature specific
of $\alpha \in \mathbb{F}_q$. As such, the same discussion and results presented
there can be applied to finite extensions of $\mathbb{F}_q$, giving rise to the
following algorithm:
\begin{figure}[H]
\centering
\begin{tabular}[c]{ll}
\hline
\textbf{Input:} & A collection of samples $S=\{(a_i(x),b_i(x))\}
                                           _{i=1}^M \in R_{q, 0} \times R_q$ \\
& A choice $M_0$ for the number of samples of the sub-process \\
\textbf{Output:} & A guess for the distribution of the samples, either \\
                 & \textbf{PLWE} or \textbf{UNIFORM}\\
\hline
\end{tabular}

\medskip

\begin{itemize}
\item $T:=\lceil \left\lfloor M/M_0\right\rfloor \cdot p_0^{M_0r} \rceil$
\item $C:=0$
\item \texttt{\emph{for}} $j\in \{0, ..., \left\lfloor M/M_0\right\rfloor - 1\}$ \texttt{\emph{do}}
	\begin{itemize}
	\item $result := \mathrm{SmallSetAttackTraces}\left(S_{j} := \{(a_i(x),b_i(x))\}
                         _{i=j\cdot M_0 + 1}^{(j+1)\cdot M_0}\right)$
		\begin{itemize}
		\item \texttt{\emph{if}} $result$ $\neq$ \textbf{NOT PLWE}
            \texttt{\emph{then}}
			\begin{itemize}
			\item $C := C + 1$
			\end{itemize}
		\end{itemize}
	\end{itemize}
\item \texttt{\emph{if}} $C < T$ \texttt{\emph{then return}} \textbf{UNIFORM}
\item \texttt{\emph{else return}} \textbf{PLWE}
\end{itemize}
\hrule
\caption{Algorithm for Extended Small Set Attack over finite extensions of $\mathbb{F}_q$}
\label{Extended_SSAAlg_Trace}
\end{figure}
Moreover, since the success probabilities detailed in
Proposition~\ref{SSA_Success_NotTrunc_Trace} for Algorithm~\ref{SSAAlg_Trace}
are the same as the ones in Proposition~\ref{SSA_Success_NotTrunc_Fq} for
Algorithm~\ref{SSAAlg_Fq}, all the results detailed in Section~\ref{HNS_SSA}
regarding the success probabilities of the extended attack apply here as well.

In particular, this means that:
\begin{itemize}
\item
Case $D = \mathcal{G}_{\sigma}$: We
have that
\[
\mathrm{P}(C \geq T | D = \mathcal{G}_{\sigma}) := 1 - F(T-1, \left\lfloor M/M_0\right\rfloor,
\mathrm{P}_{\mathcal{G}_{\sigma}}(M_0)) \geq 1 - F(T-1, \left\lfloor M/M_0\right\rfloor, p_0^{M_0 r})
\]

\item
Case $D = U$: We have:
\[
\mathrm{P}(C < T | D = U) := F(T-1, \left\lfloor M/M_0\right\rfloor, \mathrm{P}_{U}
(M_0)) \geq F(T-1, \left\lfloor M/M_0\right\rfloor,
1 - \left(|\Sigma|/q\right)^{M_0})
\]
\end{itemize}

\begin{proposition}
Let $C, T, M_0$ be as specified in Algorithm~\ref{Extended_SSAAlg_Trace}. Then, if
$1 - \left(|\Sigma|/q\right)^{M_0} < p_0^{M_0 r}$, Algorithm~\ref{Extended_SSAAlg_Trace}
is successful.
\end{proposition}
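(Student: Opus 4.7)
The plan is to mimic, almost verbatim, the proof of Proposition~\ref{Extended_SSA_Success_Fq}, since the two success-probability bounds required to make that argument work have already been transferred to the trace setting in the displayed formulas just above the statement. Concretely, since by hypothesis the two input distributions are equiprobable, the Algorithm is successful as soon as
\[
\tfrac{1}{2}\bigl(\mathrm{P}(C < T \mid D = U) + \mathrm{P}(C \geq T \mid D = \mathcal{G}_\sigma)\bigr) > \tfrac{1}{2},
\]
so the whole task reduces to proving that $\mathrm{P}(C<T\mid D=U)-\mathrm{P}(C<T\mid D=\mathcal{G}_\sigma)>0$.

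First I would substitute the two lower bounds just displayed, namely
\[
\mathrm{P}(C \geq T \mid D = \mathcal{G}_\sigma) \geq 1 - F\bigl(T-1,\lfloor M/M_0\rfloor, p_0^{M_0 r}\bigr),
\]
\[
\mathrm{P}(C < T \mid D = U) \geq F\bigl(T-1,\lfloor M/M_0\rfloor, 1-(|\Sigma|/q)^{M_0}\bigr),
\]
into the success condition. This rewrites the quantity to be bounded below as
\[
\tfrac{1}{2} + \tfrac{1}{2}\Bigl(F\bigl(T-1,\lfloor M/M_0\rfloor,1-(|\Sigma|/q)^{M_0}\bigr)-F\bigl(T-1,\lfloor M/M_0\rfloor,p_0^{M_0 r}\bigr)\Bigr),
\]
so that strict $>1/2$ success is equivalent to the bracketed difference being strictly positive.

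Next I would invoke the decreasing monotonicity of the cumulative binomial distribution $F(k,n,p)$ in the parameter $p$ (the same fact used in Proposition~\ref{Extended_SSA_Success_Fq}): for fixed $T-1$ and $\lfloor M/M_0\rfloor$, if the success probability passed to $F$ shrinks, the value of $F$ grows. Under the hypothesis $1-(|\Sigma|/q)^{M_0} < p_0^{M_0 r}$, this monotonicity applied to the two probability parameters yields $F(T-1,\lfloor M/M_0\rfloor,1-(|\Sigma|/q)^{M_0}) > F(T-1,\lfloor M/M_0\rfloor,p_0^{M_0 r})$, which is precisely what is needed. Closing the chain gives the claim.

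The only place where I would pause is to justify why the binomial interpretation is legitimate at all: the $c=\lfloor M/M_0\rfloor$ invocations of Algorithm~\ref{SSAAlg_Trace} operate on pairwise disjoint blocks $S_j$ of $M_0$ samples, so the indicator variables ``$\text{result}_j\neq \textbf{NOT PLWE}$'' are independent and identically distributed under each fixed hypothesis $D=U$ or $D=\mathcal{G}_\sigma$, and thus $C$ is genuinely a binomial. Given that, no further probabilistic work is required; the argument is essentially a bookkeeping copy of Proposition~\ref{Extended_SSA_Success_Fq}, transported verbatim using Proposition~\ref{SSA_Success_NotTrunc_Trace} in place of Proposition~\ref{SSA_Success_NotTrunc_Fq}. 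The main obstacle is not mathematical but notational: making clear to the reader that replacing the $\mathbb{F}_q$-sub-process by its trace-based counterpart leaves the structural inequality untouched because the two per-round success bounds coincide in form.
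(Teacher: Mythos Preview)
Your proposal is correct and matches the paper's approach exactly: the paper's own proof is simply ``See proof of Proposition~\ref{Extended_SSA_Success_Fq}'', and you have spelled out precisely that argument with the trace-setting bounds substituted in. If anything, your added remark on why $C$ is genuinely binomial (disjoint sample blocks) is more detail than the paper provides.
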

\begin{proof}
See proof of Proposition~\ref{Extended_SSA_Success_Fq}.
\qed
\end{proof}

\subsubsection{Practical instances.}\label{sss.421}
We evaluated the trace attack generalization over two \textit{toy} instances to
exemplify the practical applications of the above presented attacks. We have:
\begin{enumerate}
\item
\begin{itemize}
    \item $f = x^{23} - 2018x^{20} + x^{13} - 2018x^{10} + 2017x^{3} + 1$.
    \item $q = 4099$.
    \item $a = 2018$, order of $a$ in $\mathbb{F}_q^\ast$ is $r = 6$.
    \item $g(x) = x^3 - 2018$ irreducible in $\mathbb{F}_q[x]$ and a divisor of
$f(x)$.
    \item $N^\prime = 7$, $N^{\prime\prime} = 1$.
    \item $\sigma = 0.7$.
    \item $\left\lfloor 4\sqrt{N^{\prime\prime}}
           \sigma+1\right\rfloor ^r \approx 3010 < qp_0^r$.
\end{itemize}
Using these values, if Algorithm~\ref{SSAAlg_Trace} returns \textbf{PLWE}, then
the samples are indeed \textbf{PLWE} with probability $0.861$ for $350$ samples,
and with probability $0.998$ for $500$ samples.
\item
\begin{itemize}
    \item $f(x)=x^{23}-2017x^{20}+x^{13}-2017x^{10}+2018x^3+1$.
    \item $q = 4099$.
    \item $a = 2017$, order of $a$ in $\mathbb{F}_q^\ast$ is $r = 3$.
    \item $g(x) = x^3 - 2017$ irreducible in $\mathbb{F}_q[x]$ and a divisor of
$f(x)$.
    \item $N^\prime = 7$, $N^{\prime\prime} = 2$.
    \item $\sigma = 5/2$.
    \item $\left(4\sqrt{N^{\prime\prime}}
           \sigma+1\right)^r \approx 3471 < qp_0^r$.
\end{itemize}
For this second case, if Algorithm~\ref{SSAAlg_Trace} returns \textbf{PLWE}, then
the samples are indeed \textbf{PLWE} with probability $0.629$ for $350$ samples,
and with probability $0.993$ for $500$ samples.
\end{enumerate}

Notice that these simulations require less than one half of the number of samples
than in \cite{CLS:2017:ASR} and, moreover, we do not require the roots to belong
to $\mathbb{F}_q$. On the other hand, we are far from the improvement given in
\cite{CIV:2016:EDR} but it is convenient to point out that their attack only
applies to two particular (although infinite) family of trinomials.

We conducted also experiments to check the ability to find elements in
$R_{q,0}$. The details can be found in Appendix~\ref{ap.1}.

\section{Small Values Attack for roots over finite extensions of
\texorpdfstring{$\mathbb{F}_q$}{Lg}}\label{s.5}
In the previous section, we introduced a way to generalize root-based attacks,
focused on the \emph{small set attack} developed in \cite{ELOS:2015:PWI}.
In this section, we translate the same ideas to the \emph{Attack based on the
size of error values}, and provide with a generalization of these attacks for
arbitrary degree extensions of $\mathbb{F}_q$, exploiting the trace setting as
in the previous sections. As such, the general framework employed in the
previous section will apply here as well.

\subsection{Preliminaries}
Within this setting, we apply Proposition~\ref{cvgen}. Therefore, given the polynomial
$x^n - a$, we have:
\begin{enumerate}
    \item If $i$ is not a multiple of $n$, $\Tr(\alpha^{i}) = 0$.
    \item If $i$ is of the form $n\cdot j$, then $\Tr(\alpha^i) = \Tr(a^j) = a^j
\cdot \Tr(1) = n \cdot a^j$.
\end{enumerate}

The cases to consider are as follows, given the independent term $a$ of the
polynomial that builds the $n$-degree extension:
\begin{enumerate}
    \item Case $a = \pm 1$.
    \item Case $a \neq \pm 1$ and has small order $r$ modulo $q$.
    \item Case $a \neq \pm 1$ and does not have small order modulo $q$.
\end{enumerate}
When migrating to the trace setting under this attack, we have to study whether
the traces of the error values, once evaluated in $\alpha$ and taken modulo
$q$, are small. For convenience, let us remember Equation~\eqref{eq.4} where
the trace of an error value is presented:
\begin{equation*}
\Tr(e(\alpha)) = \sum_{i=0}^{N-1} e_i \cdot \Tr(\alpha^i) =
\sum_{j=0}^{N^\prime-1} e_{nj} \cdot \Tr(\alpha^{nj}) =
n \cdot \sum_{j=0}^{N^\prime-1} e_{nj} \cdot a^j,
\end{equation*}
where $N^\prime = \frac{N}{n}$ and, for simplicity, $n\mid N$.  Now we deal
in turn with the cases above:

\subsubsection{Case \texorpdfstring{$a = \pm 1$.}{Lg4}}
In this case, the value $\Tr(e(\alpha)) = n\sum_{j=0}^{N^\prime-1}
e_{nj} \cdot a^j$ is ($n$ times) a sum of $N^\prime$ Gaussians of mean $0$ and variance
$\sigma^2$, so it is a Gaussian of mean $0$ and variance $N^\prime \cdot \sigma^2$.
Therefore $\mathcal{G}
_{\overline{\sigma}}$ is a Gaussian with $\overline{\sigma}^2 = N^\prime \cdot \sigma^2$.

\subsubsection{Case \texorpdfstring{$a \neq \pm 1$ and has small
order $r$ modulo $q$.}{Lg5}}
Here, the small order property of the element $a$ is used to group the error
value $\Tr(e(\alpha))$ into $r$ packs of $\frac{N^\prime}{r}$ values (assuming
for simplicity that $r\mid N^\prime$). Each pack forms an
independent Gaussian of mean $0$ and variance $\frac{N^\prime}{r} \cdot \sigma^2$, and
thus the overall distribution is a weighted sum of such Gaussians, which generates
a Gaussian of mean $0$ and variance $\overline{\sigma}^2 = \sum_{i=0}^{r-1}
\frac{N^\prime}{r} \cdot \sigma^2 \cdot a^{2\cdot i}$.

\subsubsection{Case \texorpdfstring{$a \neq \pm 1$ and does not have
small order modulo $q$.}{Lg6}}
In the most general case, we just have a weighted sum of $N^\prime$ Gaussians, which
is itself a Gaussian of mean $0$ and variance $\overline{\sigma}^2 = \sum_{i=0}
^{N^\prime-1} \sigma^2 \cdot a^{2\cdot i}$.

\subsection{Small Values Attack}
Now, regardless of the above cases, the argument goes as follows: we will work
with the values of the Gaussian image generated by the trace of the evaluation
of the error polynomials inside $[-2\overline{\sigma}, 2\overline{\sigma}]$,
which have a combined mass probability of $p_0$. If $2\overline{\sigma} < \frac{q}{4}$ is
satisfied, Algorithm~\ref{SVAAlg_Trace} can be executed.

The introduction of higher-degree extensions on this attack not only allows to
generalize the setting in which this attack may be applicable, increasing the
reach of it, but also makes it more likely that the necessary condition
$2\overline{\sigma}<\frac{q}{4}$ holds.

Actually, introducing
the trace creates a narrower $\overline{\sigma}$ value, due to the fact that
fewer error terms are considered (most trace values are $0$), and
the fact that the order of the value $a$ is a divisor of the order of the root
$\alpha$.

In turn, the above attack severely increases the computational complexity of
the attack, on two fronts: The main loop over the guesses $g$ is displaced from
$\mathbb{F}_q$ to $\mathbb{F}_{q^n}$, and the requirement of evaluating the
trace on each sample and guess value $b_i(\alpha) - g \cdot a_i(\alpha)$.

To
deal with it, we resort to the auxiliary ring $R_{q,0}$ that was introduced
in \cite{BBDN:2023:CPB} and we have already used in Subsection~\ref{ss.41}.
With this modification, the two-fold increase in
computational complexity is alleviated, as the loop can now be restored to
$\mathbb{F}_q$, and the trace function need not be evaluated for each sample
and guess, it can be pre-computed for every sample before running the
algorithm.

\begin{figure}[ht]
\centering
\begin{tabular}[c]{ll}
\hline
\textbf{Input:} & A collection of samples $S=\{(a_i(x),b_i(x))\}
_{i=1}^M\subseteq R_{q,0} \times R_q$ \\
\textbf{Output:} & A guess $g\in\mathbb{F}_q$ for $\Tr(s(\alpha))$,\\
 & or \textbf{NOT PLWE},\\
 & or \textbf{NOT ENOUGH SAMPLES}\\
\hline
\end{tabular}

\medskip

\begin{itemize}
\item $G:=\emptyset$
\item \texttt{\emph{for}} $g \in \mathbb{F}_q$ \texttt{\emph{do}}
    \begin{itemize}
        \item \texttt{\emph{for}} $(a_i(x),b_i(x))\in S$  \texttt{\emph{do}}
            \begin{itemize}
                  \item \texttt{\emph{if}} $\frac{1}{n}(\Tr(b_i(\alpha))-a_i
                                            (\alpha)g) \notin [-\frac{q}{4}, \frac{q}{4})$
                   \texttt{\emph{then next}} $g$
            \end{itemize}
        \item $G:=G\cup\{g\}$
    \end{itemize}
\item \texttt{\emph{if}} $G=\emptyset$ \texttt{\emph{then return}} \textbf{NOT
PLWE}
\item \texttt{\emph{if}} $G=\{g\}$ \texttt{\emph{then return}} $g$
\item \texttt{\emph{if}} $|G|>1$ \texttt{\emph{then return}} \textbf{NOT ENOUGH
SAMPLES}
\end{itemize}
\hrule
\caption{Attack based on the size of error values over higher-degree
extensions on $R_{q,0}$}
\label{SVAAlg_Trace}
\end{figure}

\subsection{Analysis of the algorithm}
As in previous sections, we begin with the analysis of the truncated case:
\begin{proposition}\label{SVA_PostProb_Trunc_Trace}
Assume $2\overline{\sigma}<\frac{q}{4}$. Let
$M$ be the number of input samples.
\begin{itemize}
    \item[1.] If Algorithm~\ref{SVAAlg_Trace}
returns \textbf{NOT PLWE}, then the samples come from the uniform distribution,
with probability at least $1 - q\left(\frac{1}{2}\pm\frac{1}{2q}\right)^M$.
    \item[2.] If it outputs anything other than \textbf{NOT PLWE}, then the
samples are valid PLWE samples with probability $1$.

\end{itemize}
\end{proposition}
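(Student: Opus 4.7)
The plan is to mirror almost verbatim the proof of Proposition~\ref{SSA_PostProb_Trunc_Trace}, with the single adaptation that the membership test in the look-up table $\Sigma$ is replaced by the interval test $[-q/4, q/4)$ applied to the same trace-based quantity $\frac{1}{n}(\Tr(b_i(\alpha)) - a_i(\alpha) g)$. I would apply Bayes' theorem to the complementary events $E = $ \{Algorithm~\ref{SVAAlg_Trace} does \emph{not} return \textbf{NOT PLWE}\} and $E'$, exploiting the fact that $a_i(\alpha) \in \mathbb{F}_q$ whenever $a_i \in R_{q,0}$, so that $a_i(\alpha) g$ already sits inside $\mathbb{F}_q$ and commutes with the trace map.

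For the branch where the algorithm does not output \textbf{NOT PLWE}, the key computation is $\mathrm{P}(E \mid D = \mathcal{G}_\sigma) = 1$. Indeed, for the distinguished guess $g^\ast := \Tr(s(\alpha))$,
\begin{equation*}
\frac{1}{n}\left(\Tr(b_i(\alpha)) - a_i(\alpha) g^\ast\right) = \frac{1}{n}\Tr(e_i(\alpha)),
\end{equation*}
and by Equation~\eqref{eq.4} together with the standing hypothesis $2\overline{\sigma} < q/4$ and the truncation of the errors to width $2\sigma$, this value lies in $[-2\overline{\sigma}, 2\overline{\sigma}] \subseteq [-q/4, q/4)$ with probability one, so $g^\ast$ always survives the inner loop. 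For the uniform side I would invoke Proposition~\ref{TraceDistHigher} to conclude that $\frac{1}{n}(\Tr(b_i(\alpha)) - a_i(\alpha) g)$ is uniform in $\mathbb{F}_q$ for each fixed $g$, then apply Corollary~\ref{CorSumProb} to bound the per-sample probability of falling in $[-q/4, q/4)$ by $\frac{1}{2} \pm \frac{1}{2q}$, and finally take a union bound over the $q$ candidate values of $g$ to obtain $\mathrm{P}(E \mid D = U) \leq q\left(\frac{1}{2} \pm \frac{1}{2q}\right)^M$. Plugging these into Bayes' theorem with equiprobable priors yields the claimed posterior bound.

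For the \textbf{NOT PLWE} branch, complementarity of $E'$ with $E$ immediately gives $\mathrm{P}(E' \mid D = \mathcal{G}_\sigma) = 1 - 1 = 0$, and the Bayesian quotient collapses to $\mathrm{P}(D = U \mid E') = 1$, which in particular dominates $1 - q(\frac{1}{2} \pm \frac{1}{2q})^M$. The only genuinely nontrivial step in the entire argument is the appeal to Proposition~\ref{TraceDistHigher}: it is precisely the restriction of the first component of each sample to $R_{q,0}$ that forces $a_i(\alpha) \in \mathbb{F}_q$ and turns $\frac{1}{n}(\Tr(b_i(\alpha)) - a_i(\alpha) g)$ into an $\mathbb{F}_q$-valued expression to which Corollary~\ref{CorSumProb} applies directly, sidestepping the combinatorial analysis over $\mathbb{F}_{q^n}$ that would otherwise be forced upon us.
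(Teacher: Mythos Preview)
Your proposal is correct and follows essentially the same approach as the paper: Bayes' theorem applied to the complementary events $E$ and $E'$, with $\mathrm{P}(E\mid D=\mathcal{G}_\sigma)=1$ coming from the truncation hypothesis $2\overline{\sigma}<q/4$, and $\mathrm{P}(E\mid D=U)\le q\bigl(\tfrac12\pm\tfrac{1}{2q}\bigr)^M$ obtained via Proposition~\ref{TraceDistHigher} together with a union bound over the $q$ guesses. Your identification $g^\ast=\Tr(s(\alpha))$ is in fact slightly more precise than the paper's $g^\ast=s(\alpha)$, since the loop variable $g$ ranges only over $\mathbb{F}_q$.
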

\begin{proof}
Let us prove both items in turn.
\begin{enumerate}
\item
As before, denote with $E$ the event ``Algorithm~\ref{SVAAlg_Trace} returns
\textbf{PLWE} or \textbf{NOT ENOUGH SAMPLES}''. Applying again Bayes' theorem,
\[
\mathrm{P}(D=\mathcal{G}_\sigma | E) = 1 - \mathrm{P}(D=U|E) =
1 - \frac{\mathrm{P}(E|D=U)}{\mathrm{P}(E|D=U) +
\mathrm{P}(E|D=\mathcal{G}_\sigma)}
\geq 1 - \frac{\mathrm{P}(E|D=U)}{\mathrm{P}(E|D=\mathcal{G}_\sigma)}.
\]
Let $U_g$ be again the event that $\frac{1}{n}\Tr(b_i(\alpha)-a_i (\alpha)g)
\in\left[-\frac{q}{4},\frac{q}{4}\right)$, for
all $i=1,\dotsc,M$, i.e., for all samples in $S$. Then
\[
\mathrm{P}(E|D=\mathcal{G}_\sigma) = \mathrm{P}\left(\bigcup_{g=0}^{q-1}
U_g\right) \geq \mathrm{P}(U_{g^\ast}),
\]
where $g^\ast = s(\alpha)$. Observe that $b_i(\alpha)-a_i (\alpha)g^\ast =
e_i(\alpha)$. Since we are dealing with the truncated case and since
$\left[-2\overline{\sigma},2\overline{\sigma}\right)\subseteq\left[
-\frac{q}{4},\frac{q}{4}\right)$ we have
\[
\mathrm{P}(U_{g^\ast}) = \prod_{i=1}^M \mathrm{P}\left(\frac{1}{n}
\Tr(e_i(\alpha))\in\left[-\frac{q}{4},\frac{q}{4}\right)\right) = 1.
\]
as $\frac{1}{n}\Tr(e_i(\alpha))$ is a (truncated) Gaussian random variable of
mean 0 and variance $\overline{\sigma}^2$.

Moreover, $\mathrm{P}(E|D=U)$ has the same bounds as in the proof of
Proposition~\ref{SVA_PostProb_Trunc_Fq}, since $\frac{1}{n}\Tr(b_i(\alpha)-a_i
(\alpha)g)$ is uniformly random in $\mathbb{F}_q$ by Proposition $\ref{TraceDistHigher}$.
Combining results,
\[
\mathrm{P}(D=\mathcal{G}_\sigma | E) \geq 1 - \frac{\mathrm{P}(E|D=U)}
{\mathrm{P}(E|D=\mathcal{G}_\sigma)} = 1 - \mathrm{P}(E|D=U)
= 1 - q\left(\frac{1}{2}\pm\frac{1}{2q}\right)^M.
\]
\item
Denote again with $E^\prime$ the event ``Algorithm~\ref{SSAAlg_Fq} returns \textbf{NOT
PLWE}''. Applying Bayes' theorem,
\[
\mathrm{P}(D=U|E^\prime) = \frac{\mathrm{P}(E^\prime|D=U)}{\mathrm{P}(E^\prime|D=U) +
\mathrm{P}(E^\prime|D=\mathcal{G}_\sigma)}.
\]
Remark as before that the event $E^\prime$ is complementary to the event $E$,
defined in the previous item. Hence
\[
\mathrm{P}(E^\prime|D=\mathcal{G}_\sigma) = 1 - \mathrm{P}(E|D=\mathcal{G}
_\sigma) = 0.
\]
Consequently,
\[
\mathrm{P}(D=U|E^\prime) = 1.
\]
\end{enumerate}
\qed
\end{proof}
Meanwhile, for the non truncated case
\begin{proposition}\label{SVA_PostProb_NotTrunc_Trace}
Assume $2\overline{\sigma}<\frac{q}{4}$ and $\frac{1}{2}\pm\frac{1}{2q}<p_0$
(which is the case for $q>2$). Let $M$ be the number of input samples.
\begin{itemize}
    \item[1.] If Algorithm~\ref{SVAAlg_Trace}
returns \textbf{NOT PLWE}, then the samples come from the uniform distribution,
with probability at least $1 - \frac{q\left(\frac{1}{2}\pm\frac{1}{2q}\right)^M}{p_0^{M}}$.
    \item[2.] If it outputs anything other than \textbf{NOT PLWE}, then the
samples are valid PLWE samples with probability at least $1/2$ for large enough $M$.

\end{itemize}
\end{proposition}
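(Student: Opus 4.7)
The plan is to replicate, in the trace setting of Algorithm~\ref{SVAAlg_Trace}, the two-case Bayes' theorem analysis of Proposition~\ref{SVA_PostProb_NotTrunc_Fq}, reducing the combinatorial inputs to the $\mathbb{F}_q$-level via Proposition~\ref{TraceDistHigher}. The enabling observation is that for $a \in R_{q,0}$ one has $a(\alpha) \in \mathbb{F}_q$, so that the scalar test quantity $\tfrac{1}{n}(\Tr(b(\alpha)) - a(\alpha)g)$, $g \in \mathbb{F}_q$, is uniform on $\mathbb{F}_q$ when the sample is uniform (by Proposition~\ref{TraceDistHigher} and Corollary~\ref{CorSumProb}) and equals $\tfrac{1}{n}\Tr(e(\alpha))$ at the correct scalar guess $g^* = \tfrac{1}{n}\Tr(s(\alpha))$ when the sample is PLWE. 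Since $\tfrac{1}{n}\Tr(e(\alpha))$ is a centered Gaussian of variance $\overline{\sigma}^2$ with $2\overline{\sigma} \le q/4$, the event $\{\tfrac{1}{n}\Tr(e_i(\alpha)) \in [-q/4, q/4)\}$ occurs with probability at least $p_0$ in the non-truncated setting.

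For item 1, let $E'$ denote the event that the algorithm outputs \textbf{NOT PLWE}. Bayes yields
\[
P(D = U \mid E') = \frac{P(E' \mid D = U)}{P(E' \mid D = U) + P(E' \mid D = \mathcal{G}_\sigma)}.
\]
I plan to lower-bound $P(E' \mid D = U) \ge 1 - q(\tfrac{1}{2} \pm \tfrac{1}{2q})^M$ by the union-bound argument of Proposition~\ref{SVA_PostProb_Trunc_Trace} (insensitive to truncation), using Lemma~\ref{SumProb} for each fixed $g$, and upper-bound $P(E' \mid D = \mathcal{G}_\sigma) = 1 - P(E \mid D = \mathcal{G}_\sigma) \le 1 - p_0^M$ using the correct-guess inequality $P(U_{g^*}) \ge p_0^M$. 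The arithmetic manipulation of the Bayes quotient, performed exactly as in Proposition~\ref{SVA_PostProb_NotTrunc_Fq} and controlled by the hypothesis $\tfrac{1}{2} \pm \tfrac{1}{2q} < p_0$, then yields the stated posterior lower bound $1 - q(\tfrac{1}{2} \pm \tfrac{1}{2q})^M/p_0^M$.

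For item 2, let $E$ be the complementary event (any output other than \textbf{NOT PLWE}). By Bayes,
\[
P(D = \mathcal{G}_\sigma \mid E) = \frac{P(E \mid D = \mathcal{G}_\sigma)}{P(E \mid D = \mathcal{G}_\sigma) + P(E \mid D = U)} \ge \frac{p_0^M}{p_0^M + q(\tfrac{1}{2} \pm \tfrac{1}{2q})^M},
\]
using $P(E \mid D = \mathcal{G}_\sigma) \ge p_0^M$ and $P(E \mid D = U) \le q(\tfrac{1}{2} \pm \tfrac{1}{2q})^M$ as before. This quotient exceeds $1/2$ iff $q(\tfrac{1}{2} \pm \tfrac{1}{2q})^M \le p_0^M$, which the hypothesis $\tfrac{1}{2} \pm \tfrac{1}{2q} < p_0$ guarantees for all sufficiently large $M$. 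The main obstacle is the delicate algebraic rearrangement in item 1: since $1 - p_0^M$ is close to $1$ for large $M$, the compact form of the stated bound must be read off the Bayes quotient in exactly the same direction as in the $\mathbb{F}_q$-case, where the $p_0^M$-factor appears in the denominator of the final estimate via the inequality $x/(x+y) \ge 1 - y/x$ applied to the numerator and denominator in the right order.
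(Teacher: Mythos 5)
There is a genuine gap in your item 1, and it stems from taking the proposition's printed pairing of events and bounds at face value. From the two estimates you correctly establish, namely $\mathrm{P}(E'\mid D=U)\ge 1-q\left(\tfrac12\pm\tfrac1{2q}\right)^M$ and $\mathrm{P}(E'\mid D=\mathcal{G}_\sigma)\le 1-p_0^M$, the Bayes quotient only yields
\[
\mathrm{P}(D=U\mid E')\;\ge\;\frac{1-q\left(\tfrac12\pm\tfrac1{2q}\right)^M}{\left(1-q\left(\tfrac12\pm\tfrac1{2q}\right)^M\right)+\left(1-p_0^M\right)},
\]
which tends to $1/2$ as $M$ grows, not to $1$. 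The manipulation you invoke, $x/(x+y)\ge 1-y/x$, gives $1-\frac{1-p_0^M}{1-q(\cdot)^M}$, and there is no way to convert this into $1-\frac{q(\cdot)^M}{p_0^M}$: the former goes to $0$ for large $M$ while the latter goes to $1$. The ``delicate algebraic rearrangement'' you defer to does not exist. The underlying issue is that the proposition as printed has swapped the two conclusions relative to its exact analogue, Proposition~\ref{SVA_PostProb_NotTrunc_Fq}, and relative to the paper's own proof: the bound $1-q\left(\tfrac12\pm\tfrac1{2q}\right)^M/p_0^M$ belongs to the event ``output other than \textbf{NOT PLWE} $\Rightarrow$ PLWE'' (where $\mathrm{P}(E\mid D=\mathcal{G}_\sigma)\ge p_0^M$ sits in the denominator and $\mathrm{P}(E\mid D=U)\le q(\cdot)^M$ in the numerator), while the \textbf{NOT PLWE} $\Rightarrow$ uniform direction only admits the $1/2$ bound for $M$ large enough that $q^{1/M}\left(\tfrac12\pm\tfrac1{2q}\right)\le p_0$.

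Your item 2 is essentially correct and in fact proves the stronger statement: the quotient $\frac{p_0^M}{p_0^M+q\left(\tfrac12\pm\tfrac1{2q}\right)^M}$ is bounded below by $1-q\left(\tfrac12\pm\tfrac1{2q}\right)^M/p_0^M$, which is exactly the bound the paper derives for that event. So the substance of the intended proof is present in your proposal, but distributed to the wrong items; as written, your item 1 asserts a conclusion that your own estimates cannot deliver. The fix is to recognize the transposition in the statement and prove the two claims with the bounds interchanged, as the paper's proof does.
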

\begin{proof}
Let us prove both items in turn.
\begin{enumerate}
\item
As before, denote with $E$ the event ``Algorithm~\ref{SVAAlg_Trace} returns
\textbf{PLWE} or \textbf{NOT ENOUGH SAMPLES}''. As before,
\[
\mathrm{P}(D=\mathcal{G}_\sigma | E)
\geq 1 - \frac{\mathrm{P}(E|D=U)}{\mathrm{P}(E|D=\mathcal{G}_\sigma)}.
\]
Let $U_g$ be again the event that $\frac{1}{n}\Tr(b_i(\alpha)-a_i (\alpha)g)
\in\left[-\frac{q}{4},\frac{q}{4}\right)$, for
all $i=1,\dotsc,M$, i.e., for all samples in $S$. Then
\[
\mathrm{P}(E|D=\mathcal{G}_\sigma) = \mathrm{P}\left(\bigcup_{g=0}^{q-1}
U_g\right) \geq \mathrm{P}(U_{g^\ast}),
\]
where $g^\ast = s(\alpha)$. Since we are now dealing with the not truncated
case and since $\left[-2\overline{\sigma},2\overline{\sigma}\right)\subseteq
\left[-\frac{q}{4},\frac{q}{4}\right)$ we have
\[
\mathrm{P}(U_{g^\ast}) = \prod_{i=1}^M \mathrm{P}\left(e_i(\alpha)\in\left[
-\frac{q}{4},\frac{q}{4}\right)\right) \geq p_0^{M}.
\]
Moreover, $\mathrm{P}(E|D=U)$ is the same as in the truncated case, since the
uniform distribution is not affected by the truncation. Combining results,
\[
\mathrm{P}(D=\mathcal{G}_\sigma | E) \geq 1 - \frac{\mathrm{P}(E|D=U)}
{\mathrm{P}(E|D=\mathcal{G}_\sigma)} \geq 1 - \frac{\mathrm{P}(E|D=U)}{p_0^{M}}
\geq 1 - \frac{q\left(\frac{1}{2}\pm\frac{1}{2q}\right)^M}{p_0^{M}}.
\]
\item
Denote again with $E^\prime$ the event ``Algorithm~\ref{SVAAlg_Trace} returns
\textbf{NOT PLWE}''. As before,
\[
\mathrm{P}(D=U|E^\prime) = \frac{\mathrm{P}(E^\prime|D=U)}{\mathrm{P}(E^\prime|D=U) +
\mathrm{P}(E^\prime|D=\mathcal{G}_\sigma)}.
\]
Remark as before that the event $E^\prime$ is complementary to the event $E$,
defined in the previous item. Hence
\[
\mathrm{P}(E^\prime|D=\mathcal{G}_\sigma) = 1 - \mathrm{P}(E|D=\mathcal{G}
_\sigma) \leq 1 - p_0^{M}.
\]
Consequently,
\[
\mathrm{P}(D=U|E^\prime) \geq \frac{\mathrm{P}(E^\prime|D=U)}{\mathrm{P}
(E^\prime|D=U) + 1-p_0^{M}},
\]
which is greater than $1/2$ if and only if $\mathrm{P}
(E^\prime|D=U)\geq 1-p_0^{M}$. Since
\[
\mathrm{P}(E^\prime|D=U) \geq 1-q\left(\frac{1}{2}\pm\frac{1}{2q}\right)^M,
\]
for the conclusion to hold it is enough to take $M$ such that $q^{\frac{1}{M}}
\left(\frac{1}{2}\pm\frac{1}{2q}\right)\leq p_0$.
\end{enumerate}
\qed
\end{proof}

\subsubsection{Success probabilities.}
Observe that from the discussion above, we can derive the success probability
of Algorithm~\ref{SVAAlg_Trace}. For the truncated case, we have the following
\begin{proposition}\label{SVA_Success_Trunc_Trace}
With the same notations and assumptions as in Proposition~\ref{SVA_PostProb_Trunc_Trace},
\begin{enumerate}
\item
If the samples are PLWE, Algorithm~\ref{SVAAlg_Trace} guesses correctly with
probability $1$.
\item
If the samples are uniform, Algorithm~\ref{SVAAlg_Trace} guesses correctly with
probability at least $1-q(\frac{1}{2}\pm\frac{1}{2q})^M$.
\end{enumerate}
\end{proposition}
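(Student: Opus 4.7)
The plan is to read off both items directly from quantities already computed in the proof of Proposition~\ref{SVA_PostProb_Trunc_Trace}, mirroring the reductions used in Propositions~\ref{SSA_Success_Trunc_Fq} and \ref{SVA_Success_Trunc_Fq}. Throughout, let $E$ denote the event ``Algorithm~\ref{SVAAlg_Trace} outputs anything other than \textbf{NOT PLWE}'' and let $E'$ be its complement; then ``guessing correctly'' when $D = \mathcal{G}_\sigma$ means $E$ occurs, and ``guessing correctly'' when $D = U$ means $E'$ occurs. Making this verbal-to-event identification at the outset reduces the statement to two probability computations that have already been carried out.

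For item~1 I would identify the target probability as $\mathrm{P}(E \mid D = \mathcal{G}_\sigma)$. In the proof of Proposition~\ref{SVA_PostProb_Trunc_Trace} it was shown that, under the assumption $2\overline{\sigma} < q/4$, the guess $g^\ast = \Tr(s(\alpha))$ always survives every iteration of the inner loop, because $\frac{1}{n}\Tr(e_i(\alpha))$ is a truncated Gaussian supported in $[-2\overline{\sigma}, 2\overline{\sigma}) \subseteq [-q/4, q/4)$. Hence $\mathrm{P}(U_{g^\ast}) = 1$, which forces $\mathrm{P}(E \mid D = \mathcal{G}_\sigma) = 1$ and closes the first item.

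For item~2 I would compute $\mathrm{P}(E' \mid D = U) = 1 - \mathrm{P}(E \mid D = U)$. The bound on $\mathrm{P}(E \mid D = U)$ was already derived in Proposition~\ref{SVA_PostProb_Trunc_Trace} by a union bound $\mathrm{P}(E \mid D = U) \leq \sum_{g=0}^{q-1} \mathrm{P}(U_g)$: Proposition~\ref{TraceDistHigher} guarantees that $\frac{1}{n}\Tr(b_i(\alpha) - a_i(\alpha)g)$ is uniform in $\mathbb{F}_q$, so Lemma~\ref{SumProb} yields $\mathrm{P}(U_g) = (\frac{1}{2} \pm \frac{1}{2q})^M$ and hence $\mathrm{P}(E \mid D = U) \leq q(\frac{1}{2} \pm \frac{1}{2q})^M$; subtracting from $1$ gives the stated lower bound.

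There is no genuine obstacle: the content of the proposition is a bookkeeping rearrangement of estimates already present in Proposition~\ref{SVA_PostProb_Trunc_Trace}, with no new probabilistic input required. The only point that deserves a moment's care is ensuring that the verbal phrase ``guesses correctly'' is matched to the correct conditional event in each of the two cases, which is exactly what the opening identification accomplishes.
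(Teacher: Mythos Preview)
Your proposal is correct and follows essentially the same approach as the paper: both items are identified with the conditional probabilities $\mathrm{P}(E\mid D=\mathcal{G}_\sigma)$ and $\mathrm{P}(E'\mid D=U)$, and both are read off directly from the computations already performed in Proposition~\ref{SVA_PostProb_Trunc_Trace}. Your write-up is in fact more detailed than the paper's two-line proof, and your identification $g^\ast=\Tr(s(\alpha))$ is the correct one for this algorithm (the paper's proof of Proposition~\ref{SVA_PostProb_Trunc_Trace} writes $g^\ast=s(\alpha)$, which is a slip); the only minor quibble is that once $\frac{1}{n}\Tr(b_i(\alpha)-a_i(\alpha)g)$ is known to be uniform, the relevant reference is Corollary~\ref{CorSumProb} rather than Lemma~\ref{SumProb}.
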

\begin{proof}
The first item is equivalent to computing $\mathrm{P}(E|D=\mathcal{G}
_\sigma)$, which is equal to $1$, according to the proof in
Proposition~\ref{SVA_PostProb_Trunc_Trace}. The second item is not affected by the
truncation so the bound is the same as for the truncated case, namely,
greater or equal to $1-q(\frac{1}{2}\pm\frac{1}{2q})^M$.
\qed
\end{proof}

For the not truncated case, we have the following
\begin{proposition}\label{SVA_Success_NotTrunc_Trace}
With the same notations and assumptions as in Proposition~\ref{SVA_PostProb_NotTrunc_Trace},
\begin{enumerate}
\item
If the samples are PLWE, Algorithm~\ref{SVAAlg_Trace} guesses correctly with
probability at least $p_0^{M}$.
\item
If the samples are uniform, Algorithm~\ref{SVAAlg_Trace} guesses correctly with
probability at least $1-q(\frac{1}{2}\pm\frac{1}{2q})^M$.
\end{enumerate}
\end{proposition}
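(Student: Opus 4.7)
The plan is to mirror the reasoning used in Proposition~\ref{SVA_Success_Trunc_Trace} and in Proposition~\ref{SSA_Success_NotTrunc_Trace}, since the statement is structurally identical and the relevant probability bounds have already been isolated inside the proof of the preceding Proposition~\ref{SVA_PostProb_NotTrunc_Trace}. Concretely, I would observe that ``guessing correctly'' in each of the two distributional regimes coincides with one of the conditional events analysed there, so no new computation is needed; it only remains to cite the right inequality.

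First I would dispatch item~1. Under $D=\mathcal{G}_{\sigma}$, Algorithm~\ref{SVAAlg_Trace} guesses correctly precisely when the true secret image $g^{\ast}=\Tr(s(\alpha))/n$ survives every filtering step, i.e.\ when the event $E$ (the algorithm returns \textbf{PLWE} or \textbf{NOT ENOUGH SAMPLES}) occurs. Hence the success probability equals $\mathrm{P}(E\mid D=\mathcal{G}_\sigma)$, and inside the proof of Proposition~\ref{SVA_PostProb_NotTrunc_Trace} this quantity was bounded below by $\mathrm{P}(U_{g^\ast})\ge p_0^{M}$, using that $\tfrac{1}{n}\Tr(e_i(\alpha))$ is a centred Gaussian of variance $\overline{\sigma}^{2}$ and that $2\overline{\sigma}<q/4$, so each factor lies in $[-q/4,q/4)$ with probability at least $p_0$.

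Next I would handle item~2. Under $D=U$, a correct guess corresponds to the algorithm returning \textbf{NOT PLWE}, i.e.\ to the complementary event $E'$; therefore the success probability is $\mathrm{P}(E'\mid D=U)=1-\mathrm{P}(E\mid D=U)$. Crucially, the uniform distribution is insensitive to the truncation of the Gaussian, so the bound on $\mathrm{P}(E\mid D=U)$ derived in the truncated analysis (Proposition~\ref{SVA_PostProb_Trunc_Trace}, via Lemma~\ref{SumUnifGen} and Lemma~\ref{SumProb}) carries over verbatim, yielding $\mathrm{P}(E\mid D=U)\le q\bigl(\tfrac{1}{2}\pm\tfrac{1}{2q}\bigr)^{M}$ and hence the claimed lower bound $1-q\bigl(\tfrac{1}{2}\pm\tfrac{1}{2q}\bigr)^{M}$.

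There is no real obstacle here: the only subtlety to flag is the standard reminder that the bound $p_0^{M}$ in item~1 comes from retaining only those realisations of the Gaussian image that fall in $[-2\overline{\sigma},2\overline{\sigma}]\subseteq[-q/4,q/4)$, and that the uniform-side bound in item~2 is genuinely unchanged with respect to the truncated case because truncation is applied only to the error distribution. With both items reduced to earlier work, the proof is a two-line citation of Proposition~\ref{SVA_PostProb_NotTrunc_Trace}.
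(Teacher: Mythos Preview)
Your proposal is correct and follows essentially the same approach as the paper: both items are reduced to the conditional probabilities $\mathrm{P}(E\mid D=\mathcal{G}_\sigma)\ge p_0^{M}$ and $\mathrm{P}(E'\mid D=U)\ge 1-q(\tfrac{1}{2}\pm\tfrac{1}{2q})^{M}$ already established in the proof of Proposition~\ref{SVA_PostProb_NotTrunc_Trace}, with the second bound carried over unchanged from the truncated analysis. The only cosmetic slip is writing $g^\ast=\Tr(s(\alpha))/n$; the correct guess in Algorithm~\ref{SVAAlg_Trace} is $g^\ast=\Tr(s(\alpha))$, but this does not affect the argument.
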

\begin{proof}
The first item is equivalent to computing $\mathrm{P}(E|D=\mathcal{G}
_\sigma)$, which is at least $p_0^{M}$, according to the proof in
Proposition~\ref{SVA_PostProb_NotTrunc_Trace}. The second item is not affected by the
truncation so the bound is the same as for the truncated case, namely,
greater or equal to $1-q(\frac{1}{2}\pm\frac{1}{2q})^M$.
\qed
\end{proof}

\subsubsection{An improvement for a higher number of samples.} As discussed in
Section~\ref{HNS_SVA}, the extension algorithm presented which employed
Algorithm~\ref{SVAAlg_Fq} as a sub-process did not employ any feature specific
of $\alpha \in \mathbb{F}_q$. As such, the same discussion and results
presented there can be applied to finite extensions of $\mathbb{F}_q$, giving
rise to the following algorithm:
\begin{figure}[H]
\centering
\begin{tabular}[c]{ll}
\hline
\textbf{Input:} & A collection of $M$ samples $S=\{(a_i(x),b_i(x))\}
                                           _{i=1}^M \in R_{q, 0} \times R_q$ \\
& A choice $M_0$ of the number of samples for the sub-process \\
\textbf{Output:} & A guess for the distribution of the samples, either \\
                 & \textbf{PLWE} or \textbf{UNIFORM}\\
\hline
\end{tabular}

\medskip

\begin{itemize}
\item $T:=\lceil \left\lfloor M/M_0\right\rfloor \cdot p_0^{M_0} \rceil$
\item $C:=0$
\item \texttt{\emph{for}} $j\in \{0, ..., \left\lfloor M/M_0\right\rfloor - 1\}$ \texttt{\emph{do}}
	\begin{itemize}
	\item $result := \mathrm{SmallValueAttackTraces}\left(S_{j} := \{(a_i(x),b_i(x))\}
                         _{i=j\cdot M_0 + 1}^{(j+1)\cdot M_0}\right)$
		\begin{itemize}
		\item \texttt{\emph{if}} $result$ $\neq$ \textbf{NOT PLWE}
            \texttt{\emph{then}}
			\begin{itemize}
			\item $C := C + 1$
			\end{itemize}
		\end{itemize}
	\end{itemize}
\item \texttt{\emph{if}} $C < T$ \texttt{\emph{then return}} \textbf{UNIFORM}
\item \texttt{\emph{else return}} \textbf{PLWE}
\end{itemize}
\hrule
\caption{Algorithm for Extended Small Values Attack over finite extensions of
$\mathbb{F}_q$}
\label{ExtendedSVAAlg_Trace}
\end{figure}
Moreover, since the success probabilities detailed in Proposition~\ref{SVA_Success_NotTrunc_Trace} for Algorithm~\ref{SVAAlg_Trace} are the same as the ones in Proposition~\ref{SVA_Success_NotTrunc_Fq} for Algorithm~\ref{SVAAlg_Fq}, all the results detailed in Section~\ref{HNS_SVA} regarding the success probabilities of the extended attack apply here as well.

In particular, this means that:
\begin{itemize}
\item
Case $D = \mathcal{G}_{\sigma}$: We
have that
\[
\mathrm{P}(C \geq T | D = \mathcal{G}_{\sigma}) := 1 - F(T-1, \left\lfloor M/M_0\right\rfloor,
\mathrm{P}_{\mathcal{G}_{\sigma}}(M_0)) \geq 1 - F(T-1, \left\lfloor M/M_0\right\rfloor, p_0^{M_0})
\]

\item
Case $D = U$: We have:
\[
\mathrm{P}(C < T | D = U) := F(T-1, \left\lfloor M/M_0\right\rfloor, \mathrm{P}_{U}
(M_0)) \geq F(T-1, \left\lfloor M/M_0\right\rfloor,
1-(\frac{1}{2}\pm\frac{1}{2q})^{M_0})
\]
\end{itemize}

\begin{proposition}
Let $C, T, M_0$ be as specified in Algorithm~\ref{ExtendedSVAAlg_Trace}. Then, if
$1-(\frac{1}{2}\pm\frac{1}{2q})^{M_0} < p_0^{M_0}$, Algorithm~\ref{ExtendedSVAAlg_Trace}
is successful.
\end{proposition}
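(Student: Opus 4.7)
The plan is to mirror verbatim the reasoning already carried out in the proof of Proposition~\ref{Extended_SVA_Fq}, transplanted to the trace/extension setting, since the success probabilities at the level of the sub-process are identical: the only quantities feeding into the computation are $p_0^{M_0}$ and $1-(\frac{1}{2}\pm\frac{1}{2q})^{M_0}$. Concretely, by the standing equiprobable-priors assumption $\mathrm{P}(D=\mathcal{G}_\sigma)=\mathrm{P}(D=U)=1/2$, the overall correctness probability of Algorithm~\ref{ExtendedSVAAlg_Trace} equals
\[
\tfrac{1}{2}\bigl(\mathrm{P}(C<T\mid D=U)+\mathrm{P}(C\geq T\mid D=\mathcal{G}_\sigma)\bigr),
\]
so I will show this quantity strictly exceeds $1/2$.

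The first step is to substitute the two lower bounds displayed immediately before the statement, namely
\[
\mathrm{P}(C\geq T\mid D=\mathcal{G}_\sigma)\geq 1-F\bigl(T-1,\lfloor M/M_0\rfloor,p_0^{M_0}\bigr),
\]
\[
\mathrm{P}(C<T\mid D=U)\geq F\bigl(T-1,\lfloor M/M_0\rfloor,1-(\tfrac{1}{2}\pm\tfrac{1}{2q})^{M_0}\bigr).
\]
Adding the two bounds and halving gives
\[
\tfrac{1}{2}+\tfrac{1}{2}\Bigl(F(T-1,\lfloor M/M_0\rfloor,1-(\tfrac{1}{2}\pm\tfrac{1}{2q})^{M_0})-F(T-1,\lfloor M/M_0\rfloor,p_0^{M_0})\Bigr),
\]
so success reduces to proving that the difference inside the parentheses is strictly positive.

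The second step is the familiar monotonicity argument: the cumulative binomial distribution $F(k,n,p)$ is strictly decreasing in $p$ for fixed $k<n$. Hence the sign of the bracketed difference is determined entirely by the comparison of the two probability parameters, and the hypothesis $1-(\tfrac{1}{2}\pm\tfrac{1}{2q})^{M_0}<p_0^{M_0}$ is exactly what makes it positive. This is the identical final step used in Propositions~\ref{Extended_SSA_Success_Fq} and \ref{Extended_SVA_Fq}, so the argument can simply be invoked rather than re-derived. I do not foresee a genuine obstacle: the work has already been done in the paragraph immediately above the statement, in which the expressions for $\mathrm{P}(C\geq T\mid D=\mathcal{G}_\sigma)$ and $\mathrm{P}(C<T\mid D=U)$ are shown to coincide with those in the $\mathbb{F}_q$ case of Section~\ref{HNS_SVA}; the only task is to observe this coincidence and appeal to Proposition~\ref{Extended_SVA_Fq}. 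Accordingly, the most compact presentation is a one-line proof that reads ``See the proof of Proposition~\ref{Extended_SVA_Fq}'', possibly preceded by the displayed success inequality and the monotonicity remark, for the reader's convenience.
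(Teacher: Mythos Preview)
Your proposal is correct and matches the paper's approach exactly: the paper's entire proof is the single line ``See proof of Proposition~\ref{Extended_SVA_Fq}'', which is precisely the compact presentation you anticipated.
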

\begin{proof}
See proof of Proposition~\ref{Extended_SVA_Fq}.
\qed
\end{proof}

\subsection{Unbounded Small Values Attack}
This scenario represents the case in which the associated bound $2 \cdot
\overline{\sigma} \leq \frac{q}{4}$ does not hold. Let $E_i$ denote the event
that $\Tr(b_i(\alpha) - g \cdot a_i(\alpha))\in [\frac{-q}{4}, \frac{q}{4})$
for the $i$-th sample and fixed $g$. Then we must compute the probability of
$E_i$ given a PLWE sample distribution. But this can be done in the same way
as in Section~\ref{ss3.3}, just considering the Gaussian distributions
derived from the application of the trace, as defined above.

Note that, as stated in Proposition~\ref{TraceDistHigher}, the trace operator
respects the distributions. Accordingly, the application of
the trace operator to the tentative error values in the cases the samples are
either PLWE or uniform does not modify the expected behavior of the
distribution they follow with respect to the original case of Section~\ref{s.3}.

To integrate the trace evaluation into the attack, it is required to inspect
the threshold value $T$, to verify that, under this new setting, it is still
a useful decisional limit. Now denoting $g^\ast = s(\alpha) \in \mathbb{F}
_{q^n}$, it is clear that for the any value $g\neq g^\ast$  we have
\begin{multline*}
\frac{1}{n} \cdot \Tr(b(\alpha) - g\cdot a(\alpha)) =
\frac{1}{n} \cdot \Tr((s(\alpha) - g)\cdot a(\alpha) + e(\alpha)) \\ =
\frac{1}{n} \cdot \Tr((s(\alpha) - g)\cdot a(\alpha)) + \frac{1}{n}\cdot
\Tr(e(\alpha)),
\end{multline*}
which, by Proposition~\ref{TraceDistHigher}, is the sum of
uniform value in $\mathbb{F}_{q}$, and a discrete Gaussian of mean $0$ and
variance $\overline{\sigma}^2$ in $\mathbb{F}_q$, respectively. Thus,
we arrive to the same conditions as in Section~\ref{ss3.3}.

Denoting the probability as $\mathrm{P}(E_i | D = \mathcal{G}_{\sigma}) :=
\frac{1}{2} + \delta$, and referring to $\ell$ as the number of samples
provided, the attack is described in Algorithm~\ref{USVAAlg_Trace}.

\begin{figure}[ht]
\centering
\begin{tabular}[c]{ll}
\hline
\textbf{Input:} & A collection of samples $S := \{(a_i(x),b_i(x))\}
_{i=1}^\ell \subseteq R_{q,0} \times R_q$ \\
 & according to a certain distribution. \\
 & A value $\delta$ representing the difference over $\frac{1}{2}$. \\
\textbf{Output:} & A guess into the distribution of the samples, either \\
 & \textbf{PLWE} or \textbf{UNIFORM}.\\
\hline
\end{tabular}

\medskip

\begin{itemize}
\item $T:=\left\lceil\frac{1}{2}\left(\ell\cdot q + 2\ell \cdot \delta \pm \ell
          \cdot \left(1 - \frac{1}{q}\right)\right) \right\rceil$
\item $C:=0$
\item \texttt{\emph{for}} $g\in \mathbb{F}_q$ \texttt{\emph{do}}
	\begin{itemize}
	\item \texttt{\emph{for}} $(a_i(x),b_i(x))\in S$  \texttt{\emph{do}}
		\begin{itemize}
		\item \texttt{\emph{if}} $\frac{1}{n}(\Tr(b_i(\alpha))-a_i(\alpha)g) \in
[-\frac{q}{4}, \frac{q}{4})$
\texttt{\emph{then}}
			\begin{itemize}
			\item $C = C + 1$
			\end{itemize}
		\end{itemize}
	\end{itemize}
\item \texttt{\emph{if}} $C < T$ \texttt{\emph{then return}} \textbf{UNIFORM}
\item \texttt{\emph{else return}} \textbf{PLWE}
\end{itemize}
\hrule
\caption{PLWE Unbounded Small Values Attack}
\label{USVAAlg_Trace}
\end{figure}

The probability of success remain the same as in Section~\ref{ss3.3}. The only
differences come from the fact that the Gaussian image generated will be
smaller if we go to higher-degree extensions, though at the price of an
increase in the associated computational cost.

\begin{credits}
\subsubsection{\ackname} I. Blanco-Chac\'on is partially supported by the
Spanish National Research Plan, grant no.PID2022-136944NB-I00, by grant
PID2019-104855RB-I00, funded by MCIN/AEI/10.13039/501100011033 and by the
Universidad de Alcal\'a grant CCG20/IA-057. R. Dur\'an-D\'iaz is partially
supported by the Spanish National Research Plan, grant SATPQC
(PID2024-156099OB-I00), funded by MCIN / AEI / 10.13039 / 501100011033. R.
Mart\'{\i}n S\'anchez-Ledesma is partially supported by the PQReact
Project. This project has received funding from the European Union's
Horizon Europe research and innovation program under grant agreement
no. 101119547.

\subsubsection{\discintname}
The authors have no competing interests to declare that are
relevant to the content of this article.
\end{credits}

\bibliographystyle{splncs04}
\bibliography{plwe}

\appendix
\section{Experimental results}\label{ap.1}
We conducted also experiments to check the ability to find elements in
$R_{q,0}$, fixing the following parameters:
\begin{itemize}
\item We choose $N=64$ and the following irreducible polynomial in
$\mathbb{Z}[x]$:
\begin{multline*}
f(x)=
x^{64} + 1384\,x^{63} + 3830\,x^{62} + 1279\,x^{61} + 3341\,x^{60
} + 4034\,x^{59} + 3570\,x^{58} \\ + 1973\,x^{57}
\mbox{} + 4111\,x^{56} + 337\,x^{55} + 3980\,x^{54} + 789\,x^{53}
 + 2903\,x^{52} + 2876\,x^{51} \\ + 644\,x^{50}
\mbox{} + 4101\,x^{49} + 2944\,x^{48} + 959\,x^{47} + 1378\,x^{46
} + 2161\,x^{45} + 3710\,x^{44} \\ + 3040\,x^{43}
\mbox{} + 337\,x^{42} + 1329\,x^{41} + 3756\,x^{40} + 866\,x^{39}
 + 1790\,x^{38} + 3020\,x^{37} \\ + 2584\,x^{36}
\mbox{} + 266\,x^{35} + 140\,x^{34} + 1952\,x^{33} + 1920\,x^{32}
 + 3463\,x^{31} + 2540\,x^{30} \\ + 1686\,x^{29}
\mbox{} + 1825\,x^{28} + 1988\,x^{27} + 126\,x^{26} + 3366\,x^{25
} + 2982\,x^{24} + 1958\,x^{23} \\ + 57\,x^{22}
\mbox{} + 3543\,x^{21} + 1773\,x^{20} + 711\,x^{19} + 3607\,x^{18
} + 2686\,x^{17} + 519\,x^{16} \\ + 1996\,x^{15}
\mbox{} + 4106\,x^{14} + 3785\,x^{13} + 2775\,x^{12} + 1884\,x^{
11} + 2988\,x^{10} + 1605\,x^{9} \\ + 1819\,x^{8}
\mbox{} + 3000\,x^{7} + 3084\,x^{6} + 30\,x^{5} + 927\,x^{4} +
3110\,x^{3} + 2423\,x^{2} + 248\,x + 468
\end{multline*}
\item $n=4$, $q=4133$, $a=733$, so that the polynomial $g(x)=x^4-733$, is
irreducible in $\mathbb{F}_q$, but $g(x)|f(x)$ in $\mathbb{F}_q[x]$. The order
of $a$ in $\mathbb{F}_q^\ast$ is $r=4$.
\end{itemize}

Next we provide four elements in $R_{q,0}$, along with the number of
invocations to a uniformly random oracle, needed to generate each of them:
\begin{enumerate}
\item
Number of invocations: $25522319025$. The element is:
\begin{multline*}
524x^{63} + 3572x^{62} + 1163x^{61} + 3674x^{60} + 1684x^{59} + 3017x^{58} + 2292x^{57} + 2054x^{56} +  \\
 1126x^{55} + 143x^{54} + 2724x^{53} + 1763x^{52} + 3526x^{51} + 3950x^{50} + 75x^{49} + 1417x^{48} +  \\
 296x^{47} + 977x^{46} + 3383x^{45} + 2113x^{44} + 3292x^{43} + 1557x^{42} + 2369x^{41} + 231x^{40} +  \\
 609x^{39} + 2452x^{38} + 1369x^{37} + 2802x^{36} + 4032x^{35} + 2059x^{34} + 4021x^{33} + 2448x^{32} +  \\
 342x^{31} + 1166x^{30} + 539x^{29} + 3599x^{28} + 2907x^{27} + 2083x^{26} + 1879x^{25} + 2551x^{24} +  \\
 3151x^{23} + 2031x^{22} + 1529x^{21} + 304x^{20} + 555x^{19} + 2362x^{18} + 1182x^{17} + 1890x^{16} +  \\
 3717x^{15} + 2592x^{14} + 1361x^{13} + 4123x^{12} + 51x^{11} + 3019x^{10} + 4082x^{9} + 1079x^{8} +  \\
 1247x^{7} + 2353x^{6} + 2315x^{5} + 972x^{4} + 2298x^{3} + 3474x^{2} + 3553x + 1883
\end{multline*}

\item
Number of invocations: $79228885824$. The element is:
\begin{multline*}
945x^{63} + 3353x^{62} + 1216x^{61} + 979x^{60} + 802x^{59} + 691x^{58} + 438x^{57} + 3922x^{56} +  \\
 2532x^{55} + 3362x^{54} + 3378x^{53} + 1714x^{52} + 1625x^{51} + 2762x^{50} + 1708x^{49} + 2722x^{48} +  \\
 3828x^{47} + 3220x^{46} + 3342x^{45} + 1421x^{44} + 1599x^{43} + 112x^{42} + 3893x^{41} + 2618x^{40} +  \\
 3742x^{39} + 1342x^{38} + 37x^{37} + 995x^{36} + 260x^{35} + 2244x^{34} + 2638x^{33} + 141x^{32} +  \\
 1145x^{31} + 1104x^{30} + 3886x^{29} + 3115x^{28} + 856x^{27} + 1159x^{26} + 2385x^{25} + 1265x^{24} +  \\
 3609x^{23} + 3596x^{22} + 3689x^{21} + 2885x^{20} + 1665x^{19} + 411x^{18} + 2247x^{17} + 1700x^{16} +  \\
 3362x^{15} + 2802x^{14} + 2342x^{13} + 1449x^{12} + 132x^{11} + 1247x^{10} + 3072x^{9} + 2475x^{8} +  \\
 2811x^{7} + 2459x^{6} + 3916x^{5} + 991x^{4} + 1975x^{3} + 3335x^{2} + 1126x + 1083
\end{multline*}

\item
Number of invocations: $104055278839$. The element is:
\begin{multline*}
4125x^{63} + 3199x^{62} + 1098x^{61} + 162x^{60} + 1695x^{59} + 1736x^{58} + 2288x^{57} + 3874x^{56} +  \\
 1794x^{55} + 1281x^{54} + 1165x^{53} + 3769x^{52} + 1938x^{51} + 2190x^{50} + 1958x^{49} + 3578x^{48} +  \\
 3234x^{47} + 382x^{46} + 201x^{45} + 3963x^{44} + 1463x^{43} + 1964x^{42} + 3641x^{41} + 3445x^{40} +  \\
 3234x^{39} + 1866x^{38} + 3279x^{37} + 3788x^{36} + 2116x^{35} + 2851x^{34} + 3882x^{33} + 295x^{32} +  \\
 1267x^{31} + 4009x^{30} + 1655x^{29} + 3892x^{28} + 1027x^{27} + 1389x^{26} + 1723x^{25} + 4019x^{24} +  \\
 2587x^{23} + 3922x^{22} + 3290x^{21} + 1641x^{20} + 2646x^{19} + 251x^{18} + 642x^{17} + 4075x^{16} +  \\
 27x^{15} + 50x^{14} + 1811x^{13} + 1647x^{12} + 64x^{11} + 3342x^{10} + 2699x^{9} + 2452x^{8} +  \\
 3198x^{7} + 2745x^{6} + 2232x^{5} + 1859x^{4} + 1341x^{3} + 802x^{2} + 2162x + 838
\end{multline*}

\item
Number of invocations: $144485523325$. The element is:
\begin{multline*}
1701x^{63} + 2817x^{62} + 3829x^{61} + 1416x^{60} + 1069x^{59} + 1079x^{58} + 1434x^{57} + 3596x^{56} +  \\
 1019x^{55} + 1490x^{54} + 1603x^{53} + 4046x^{52} + 3686x^{51} + 3328x^{50} + 3621x^{49} + 1976x^{48} +  \\
 3876x^{47} + 3670x^{46} + 1472x^{45} + 1442x^{44} + 1548x^{43} + 377x^{42} + 410x^{41} + 2707x^{40} +  \\
 964x^{39} + 499x^{38} + 3991x^{37} + 1122x^{36} + 2234x^{35} + 1313x^{34} + 271x^{33} + 3444x^{32} +  \\
 3496x^{31} + 2638x^{30} + 990x^{29} + 1222x^{28} + 488x^{27} + 2395x^{26} + 2825x^{25} + 2352x^{24} +  \\
 2150x^{23} + 3640x^{22} + 2095x^{21} + 2489x^{20} + 3771x^{19} + 2760x^{18} + 439x^{17} + 2518x^{16} +  \\
 2743x^{15} + 3259x^{14} + 3598x^{13} + 2738x^{12} + 1755x^{11} + 1917x^{10} + 3443x^{9} + 1570x^{8} +  \\
 3431x^{7} + 3689x^{6} + 3078x^{5} + 2266x^{4} + 3869x^{3} + 1526x^{2} + 822x + 2986
\end{multline*}
\end{enumerate}
Remark that, according to Proposition~\ref{p.37} and subsequent comments, the
expected number of invocations is $\mathcal{O}\left(N^{2(n-1)}\right)$, which
for this case amounts to $64^6\simeq 68719 \cdot 10^6$. The average value in
our experiment yields roughly $88323\cdot 10^6$ invocations, pretty near to
the foreseen value. However completing the computation took from around four
days (wall time) for the fastest sample to around three weeks (also wall time)
for the slowest.  Given these figures, it seems that an example with $n=5$ is
way beyond our possibilities.
\end{document}